\newtheorem{theorem}{Theorem}
\newtheorem{lemma}{Lemma}
\newtheorem{assumption}{Assumption}
\newtheorem{Corollary }{Corollary}
\newtheorem{theoremgrp}{Theorem}
\newcounter{subtheorem}
\newcommand{\theoremgroup}{\refstepcounter{subtheorem}\refstepcounter{theorem}}
\def\boxit#1{%
	\smash{\fboxsep=0pt\llap{\rlap{\fbox{\strut\makebox[#1]{}}}~}}\ignorespaces
}
\DeclareMathOperator*{\argmax}{argmax}
\DeclareMathOperator*{\argmin}{argmin}
\begin{document}	
	\title{Optimal Aggregation Strategies for Social Learning over Graphs}
	\author{Ping~Hu,
		Virginia~Bordignon,
		Stefan~Vlaski,
		and~Ali~H.~Sayed
		\thanks{This work was supported in part by the Swiss National Science Foundation under Grant 205121-184999. A short conference version of this paper appears in \cite{Ping2022:Optimal}.}
		\thanks{Ping Hu, Virginia Bordignon, and Ali H. Sayed are with the School of Engineering, \'Ecole Polytechnique F\'ed\'erale de Lausanne (EPFL), CH1015 Lausanne, Switzerland (e-mail: ping.hu@epfl.ch; virginia.bordignon@epfl.ch; ali.sayed@epfl.ch).}
		\thanks{Stefan Vlaski is with the Department of Electrical and Electronic Engineering, Imperial College London, SW7 2BT London, U.K. (e-mail: s.vlaski@imperial.ac.uk).}}
	
	\maketitle
	
	\begin{abstract}
		Adaptive social learning is a useful tool for studying distributed decision-making problems over graphs. 
		This paper investigates the effect of combination policies on the performance of adaptive social learning strategies. Using large-deviation analysis, it first derives a bound on the steady-state error probability and characterizes the optimal selection for the Perron eigenvectors of the combination policies. It subsequently studies the effect of the combination policy on the transient behavior of the learning strategy by estimating the adaptation time in the low signal-to-noise ratio regime. In the process, it is discovered that, interestingly, the influence of the combination policy on the transient behavior is insignificant, and thus it is more critical to employ policies that enhance the steady-state performance. The theoretical conclusions are illustrated by means of computer simulations. 
	\end{abstract}
	
	\begin{IEEEkeywords}
		Adaptive social learning, combination policy, large deviation analysis, error exponent, transient behavior, steady-state behavior.
	\end{IEEEkeywords}
	
	\section{Introduction}
	\label{sec:intro}
	Social learning is a distributed inference process over graphs where agents  work collaboratively to identify the true state of nature from a set of admissible hypotheses \cite{Molavi:2018kq}. In each step, agents update their {\em beliefs} locally using streaming private observations, and then combine their beliefs with information received from neighbors using a \emph{combination policy}. There exist several useful variations of social learning algorithms, including those based on linear updates \cite{Jadbabaie:2012ii,Zhao:2013wa,Salami:2017jf}, log-linear updates \cite{Lalitha:2018ej,Nedic:2017bg,Hare2021General,Matta2020Interplay,shi2020distributed} and the min-rule \cite{Mitra:2020fi}. All these variants provide asymptotic learning guarantees under stationary environments where the underlying state is fixed. One useful feature of these learning procedures is the unanimity of the learning rules \cite{Molavi:2018kq}, which ensures that the effective weights assigned to each piece of independent observation are of the same order of magnitude. Consequently, the information from historical observations are stored in a uniform way, which means that more evidence in favor of the underlying state is collected over time. The cumulative evidence for a particular state can, however, hinder learning in face of a changing true state. The agents' stubbornness towards state changes during the learning process, which was observed in \cite{Bordignon:2020wx}, makes it imperative to develop new algorithmic variants for social learning in non-stationary environments.
	
	Motivated by this observation, the work \cite{Bordignon:2020wx} proposed an \emph{adaptive} social learning (ASL) algorithm, which, different from previous non-adaptive implementations \cite{Jadbabaie:2012ii,Zhao:2013wa,Salami:2017jf,Lalitha:2018ej,Nedic:2017bg,Hare2021General,Matta2020Interplay,shi2020distributed,Mitra:2020fi}, introduced a step-size parameter $\delta$ to control the amount of weighting given to recent observations in relation to past observations. Under this weighting mechanism, the agents become more sensitive to information contained in recent observations, and better equipped to track drifts in the statistical properties of the data. In particular, it was shown in \cite{Bordignon:2020wx} that the parameter $\delta$ controls a fundamental trade-off between the steady-state learning ability of an algorithm and its adaptation ability. It was found that in the slow adaptation  regime (i.e., with small $\delta$), the steady-state error probability decays exponentially with ${1}/{\delta}$. Moreover, the decaying rate (also called \emph{error exponent}) was observed to be affected by the \emph{eigenvector centrality} of the agents, which is a function of the graph topology and the combination policy employed by the social learning strategy. In this work, we would like to investigate more deeply the role of the combination policy on the behavior of adaptive social learning methods, as well as clarify optimal choices for the policy for faster transient behavior and lower steady-state error probability. 
	
	{\emph{Related works:}}  In the field of social learning, the effect of combination policies on the learning performance of some {\em non-adaptive} algorithms has been considered in previous studies \cite{jadbabaie2013information,Shahrampour:2016kk}. However, since the error probability converges to zero almost surely in the non-adaptive scenario, the role of the combination policy was only examined in the transient phase (i.e., only its effect on the speed of learning is studied). The main conclusion from these works is that an agent with better signal structure (in the sense of uniform informativeness \cite{jadbabaie2013information}) should be placed in a more centralized position in the network. In contrast to the non-adaptive scenario, the steady-state error probability is non-zero and dependent on the combination policy in the \emph{adaptive} social learning scenario. Therefore, both the transient and steady-state behavior need to be examined under different choices for the combination policy. 
	
	Another line of investigation relevant to our work is the field of distributed detection over multi-agent networks (see \cite{Viswanathan1997:Distirbuted,Blum1997:Distributed,Chamberland2003:Decentralized,matta2018estimation} for a brief review). Different from the social learning problem where agents may receive signals generated from distinct likelihoods, the distributed detection framework often assumes that the agent's observations are independent and identically distributed (i.i.d.). Two classical distributed detection strategies in the literature are: the consensus-based strategy that uses a \emph{decaying} step-size \cite{Bajovic2011:Distributed,Bajovic2012:Large,Bajovic2016:Distributed} and the diffusion-based  strategy that employs a \emph{constant} step-size \cite{Matta:2016kh,Matta:2016cb,Marano2021:Decision}. It was demonstrated in \cite{Matta:2016kh} that the diffusion-based detection strategy achieves a better adaptation ability than the consensus-based counterpart. The learning performance of both strategies, namely, the learning speed and the steady-state error probability, were shown to be dependent on the combination policy employed by the network \cite{Bajovic2011:Distributed,Bajovic2012:Large,Matta:2016kh,Matta:2016cb,Bajovic2016:Distributed}.
	
	{\emph{Contributions:}} Using techniques introduced in \cite{Bordignon:2020wx}, our first contribution is to extend the learning performance of the ASL strategy to a more general scenario \cite{Nedic:2017bg,Hare2021General,Matta2020Interplay,shi2020distributed} where the distribution of local observations received by an agent, namely the \emph{signal model}, may differ from the distributions known to this agent conditional on any possible hypothesis, namely the \emph{likelihood model}. Under this general scenario, we first derive a bound on the error exponent for the steady-state learning performance, and then construct the optimal centrality vector (i.e., Perron eigenvector of the combination matrix \cite{sayed2014adaptation}) that attains the upper bound of the error exponent (if it is achievable for the given social learning task). Our results for the i.i.d. case answer some of the questions posed earlier in \cite{Matta:2016cb} regarding the optimal choice of the combination policy. We also examine the effect of combination policies on the transient performance measured by the adaptation time. We show that in the low signal-to-noise ratio (SNR) regime, combination policies play a minor role in influencing the adaptation time. This indicates that, if the hypotheses are hard to distinguish, then it is sufficient to rely on a combination policy with better steady-state learning performance. 	
	
	\section{Problem setting}
	\label{sec:backgrounds}
	\subsection{Background}
	\textbf{Network model:} We consider a collection of $N$ agents, denoted by $\mathcal{N}=\{1,2,\dots,N\}$, working collectively to agree on a hypothesis that best explains the streaming and dispersed observations received by the group.  The communication network among agents is modeled as a directed graph, which is assumed to be strongly connected. An example of a strongly connected network is shown in Fig. \ref{fig: strong-connected network}.
	\begin{assumption}[\textbf{Strong connectivity of network}\footnote{In this paper, strong connectivity refers to a property of strongly connected networks. We require the existence of at least one self-loop, which ensures that the combination matrix will be primitive. This condition may not be used in some other studies on graph theory, such as \cite{newman2018networks,lewis2011network}. However, in the context of learning theory, this condition is not restrictive and is automatically satisfied in most cases of interest since agents naturally place some level of confidence on their own data.}]\label{asmp: assumption 1}
		The underlying graph of the network is strongly connected. That is, there exist paths between any two distinct agents in both directions, and at least one agent has a self-loop \cite{sayed2014adaptation}. \qed
	\end{assumption}
	\begin{figure}
		\centering
		\includegraphics[width=.7\linewidth]{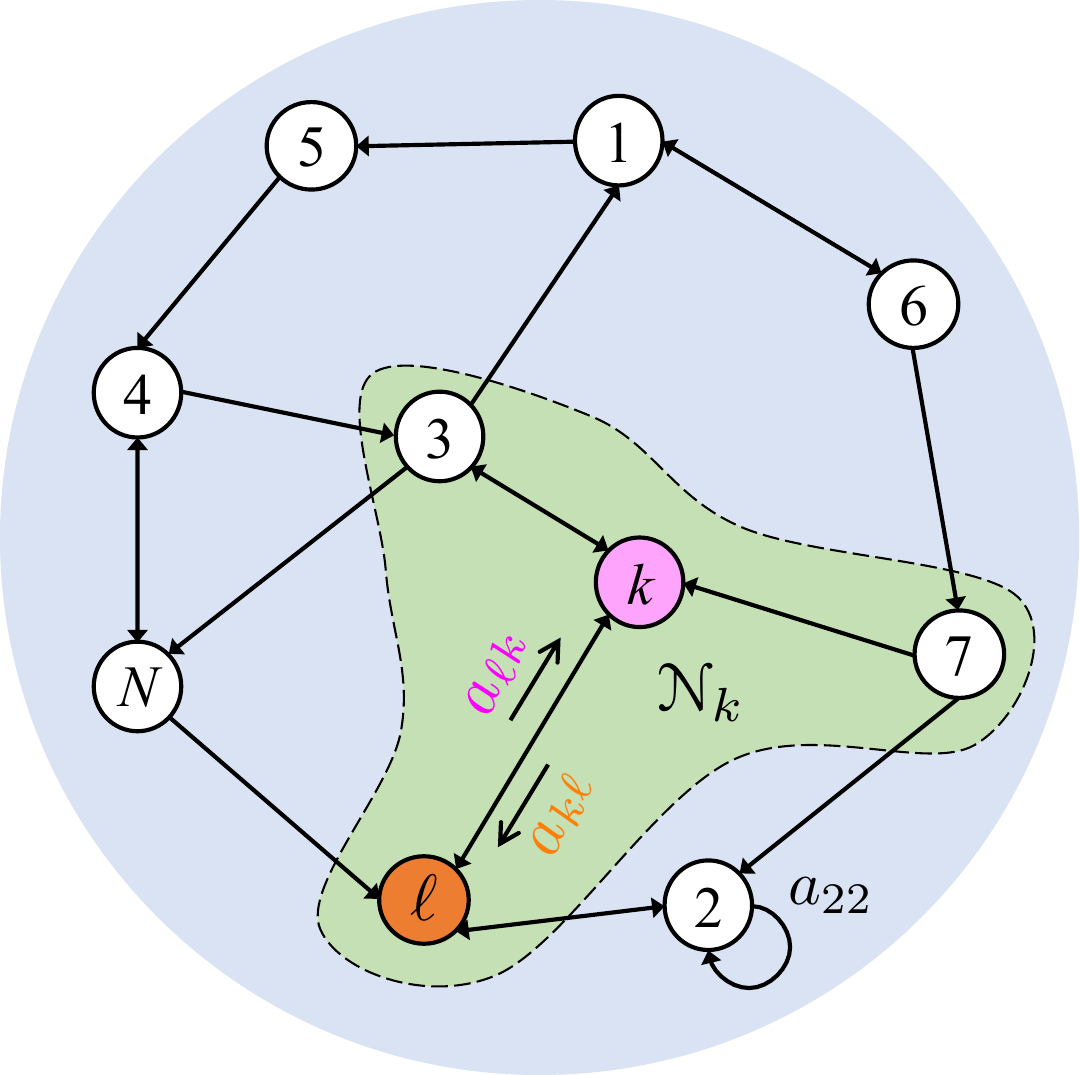}
		\caption{A strongly connected network consisting of $N$ agents.  The neighborhood of agent $k$ is marked by the area highlighted in green.}
		\label{fig: strong-connected network}
	\end{figure}
	\noindent The combination policy among agents is described by the matrix $A=[a_{\ell k}]$, where $a_{\ell k}$ is the weight that agent $k$ places on the information
	received from the neighboring agent $\ell$. We assume that the agents adopt a left-stochastic combination policy. Let $\mathcal{N}_k$ be the set of neighbors of agent $k$, then the combination matrix $A$ satisfies
	\begin{equation}
		A^\top\mathbbm{1}=\mathbbm{1}, \quad a_{\ell k}>0, \;\;\forall \ell\in\mathcal{N}_k, 
	\end{equation} 
	and $a_{\ell k}=0$ for $\ell\notin\mathcal{N}_k$, where $\mathbbm{1}$ denotes the $N$-dimensional vector of all ones. The strong connectivity of the graph ensures that the combination matrix $A$ is primitive. According to the Perron-Frobenius theorem \cite{sayed2014adaptation,sayed2014adaptive}, matrix $A$ has a single eigenvalue at $1$ and all other eigenvalues will be strictly inside the unit circle. Therefore, the second largest-magnitude eigenvalue of $A$ is strictly smaller than $1$. Moreover, the Perron eigenvector $\pi$ of matrix $A$ can be normalized to have strictly positive entries. That is,
	\begin{equation}\label{eq: Perron eigenvector}
		A\pi=\pi,\quad \mathbbm{1}^\top \pi=1,\quad\pi_k>0, \;\; \forall k\in\mathcal{N}.
	\end{equation}  

	\textbf{Observation model:} At each time instant $i$, each agent $k$ receives a private signal $\bm{\xi}_{k,i}$ belonging to a certain space $\mathcal{X}_k$. Note that we are utilizing boldface notation to emphasize that $ \boldsymbol{\xi}_{k, i} $ is random. The private signals of every agent, which are assumed to be statistically independent over time and space given a fixed true state of nature, are realizations of a random variable following an \emph{unknown} distribution $f_k$:
	\begin{equation}\label{eq: general signal model}
		\bm{\xi}_{k,i}\sim f_k,\quad \forall k\in\mathcal{N}.
	\end{equation}
	The joint observation profile at time instant $i$ generated by the network is denoted by $\bm{\xi}_i=(\bm{\xi}_{1,i},\bm{\xi}_{2,i},\dots,\bm{\xi}_{N,i})$, which is an i.i.d. sequence on the space $\mathcal{X}=\mathcal{X}_1\times\mathcal{X}_2\times\dots\times\mathcal{X}_N$ and distributed as $f=\prod_{k=1}^{N}f_k$ under the independence assumption of the private signals. Moreover, each agent $k$ has a family of local likelihood models $\{L_k(\cdot|\theta)\}$ parameterized by the hypothesis $\theta\in\Theta=\{\theta_1,\theta_2,\dots,\theta_{H}\}$. Among the given $H$ hypotheses, there is one true state of nature $\theta^\star\in\Theta$, referred to as the \emph{global truth} for the network. Without loss of generality, we assume $\theta^\star=\theta_1$. In addition, we assume that $\{L_k(\cdot|\theta)\}$ has the same support as $f_k$, namely, $\mathcal{X}_k$. The likelihood of a signal ${\xi}$ received by agent $k$ conditioned on hypothesis $\theta$ is denoted by
	\begin{equation}\label{eq: likelihood of an observation}
		L_k({\xi}|\theta),\quad \forall {\xi}\in\mathcal{X}_k.
	\end{equation}    
	Depending on whether the signal space $\mathcal{X}_k$ is continuous or discrete, the signal model $f_k$ and the likelihood models $\{L_k(\cdot|\theta)\}$ can be respectively probability density functions (pdfs) or probability mass functions (pmfs). 
	
	We consider a general learning scenario where the signal model $f_k$ may not match exactly any of the $H$ local likelihood models $L_k(\cdot|\theta)$. This is more general than many existing works (e.g., \cite{Jadbabaie:2012ii,Zhao:2013wa,Salami:2017jf,Lalitha:2018ej,Mitra:2020fi,Bordignon:2020wx,Shahrampour:2016kk,jadbabaie2013information}), where $f_k$ is taken as the likelihood model of the true hypothesis.
	That is, the signal $\bm{\xi}_{k,i}$ is a sample drawn according to the likelihood model $L_k(\cdot|\theta_1)$:
	\begin{equation}\label{eq: accurate likelihood model}
		\bm{\xi}_{k,i}\sim L_k(\cdot|\theta_1),\quad \forall k\in\mathcal{N}.
	\end{equation}
	Since in this case all agents possess knowledge of the true signal model, we refer to scenario \eqref{eq: accurate likelihood model} as the \emph{accurate signal model} scenario. In contrast, scenario \eqref{eq: general signal model} is referred to as the \emph{general signal model} scenario.
	The Kullback-Leiber (KL) divergence \cite{cover1999elements} between $f_k$ and $L_k(\cdot|\theta)$, denoted by $D_{\textnormal{KL}}(f_k\|L_k(\cdot|\theta))$, is a useful measure of the ``distance'' between relevant distributions. Without loss of generality, we assume the following regularity condition on KL divergences \cite{Jadbabaie:2012ii,Zhao:2013wa,Salami:2017jf,Lalitha:2018ej,Nedic:2017bg,Hare2021General,Matta2020Interplay,shi2020distributed,Mitra:2020fi,Bordignon:2020wx,Shahrampour:2016kk,jadbabaie2013information}. 
	\begin{assumption}[\textbf{Finiteness of KL divergences}]\label{asmp: assumption 2}
		For each hypothesis $\theta\in\Theta$ and for each agent $k\in\mathcal{N}$, $D_{\textnormal{KL}}(f_k\|L_k(\cdot|\theta))$ is finite.  \qed
	\end{assumption}
	\noindent Two hypotheses $\theta_m$ and $\theta_n$ are said to be \emph{observationally equivalent from the perspective of agent $k$} if $D_{\textnormal{KL}}(f_k\|L_k(\cdot|\theta_m))=D_{\textnormal{KL}}(f_k\|L_k(\cdot|\theta_n))$. The optimal hypothesis set for agent $k$ (also called the \emph{local truth}) is defined as the collection of hypotheses with the minimum KL divergence:
	\begin{equation}\label{eq: set of local truth}
		\Theta_k^\star\triangleq\argmin_{\theta\in\Theta} D_{\textnormal{KL}}(f_k\|L_k(\cdot|\theta)).
	\end{equation}
	Due  to the non-negativeness of KL divergences, it is clear that in the accurate signal model scenario, the global truth $\theta_1$ is also a local truth for all agents, i.e., $\theta_1\in\Theta_k^\star,\forall k\in\mathcal{N}$. However, in the general signal model scenario, some agents may fail to recognize the global truth due to the model discrepancy between $f_k$ and $L_k(\cdot|\theta_1)$. We denote the sets of agents whose local truth agrees or collides with the global truth as $\mathcal{N}_A$ and $\overline{\mathcal{N}}_A$ respectively, such that $\mathcal{N}=\mathcal{N}_A\cup\overline{\mathcal{N}}_A$. The goal of all agents is to learn the global truth by cooperation with their neighbors. We note that we are not going to discuss the multi-task decision-making problem where different groups of agents in the network try to identify different hypotheses. The interested readers are referred to \cite{Marano2021:Decision}. 
	
	The optimal hypothesis set for the group $\mathcal{N}_A$ is defined as the common hypotheses shared by all agents $k\in\mathcal{N}_A$:
	\begin{equation}
		\Theta^\star=\bigcap_{k\in\mathcal{N}_A}{\Theta_k^\star}.
	\end{equation}
	We impose the following identifiability assumption of the  global truth on the group $\mathcal{N}_A$.
	
	\begin{assumption}[\textbf{Identifiability}]\label{asmp: assumption 3}
		Hypothesis $\theta_1$ is the unique optimal hypothesis for the group $\mathcal{N}_A$, i.e., $\Theta^\star=\left\{\theta_1\right\}$. \qed
	\end{assumption}
	\noindent For the accurate signal model \eqref{eq: likelihood of an observation}--\eqref{eq: accurate likelihood model}, we have $\mathcal{N}_A=\mathcal{N}$, and thus  Assumption \ref{asmp: assumption 3} becomes the standard global identifiability assumption, namely,  $\{\theta_1\}=\cap_{k=1}^N\Theta_k^\star$, as considered in \cite{Jadbabaie:2012ii,Zhao:2013wa,Salami:2017jf,Lalitha:2018ej,Mitra:2020fi,Bordignon:2020wx,Shahrampour:2016kk,jadbabaie2013information}.
	\subsection{Adaptive social learning strategy}
	We first introduce the basic framework of social learning. 
		
	\emph{Motivating example \cite{Nedic:2017bg}:}  Consider a distributed source localization problem, where a network of $N$ agents receives noisy measurements of the distance to the source. Specifically, the private signal $\bm{\xi}_{k,i}$ received by agent $k$ at time instant $i$ is expressed as
		\begin{equation}\label{eq: private signal in the example}
			\bm{\xi}_{k,i}={\sf d}_{k,i}+\bm{n}_{k,i}
		\end{equation}
	where ${\sf d}_{k,i}$ denotes the distance between agent $k$ and the target measured at time instant $i$, and $\bm{n}_{k,i}$ is some zero-mean Gaussian noise. In the stationary environment where the target is assumed to be static, ${\sf d}_{k,i}$ is a constant for each agent $k$. The geographic region is partitioned into a collection of $H$ disjoint areas, and the hypothesis space $\Theta$ include all these possible locations.  Each agent $k$ constructs likelihood functions $\{L_k(\cdot|\theta)\}$ based on its sensor model. In principle, each agent could estimate its distance to the target from its local observations. However, their information is not enough to arrive at the coordinates for the target, since each agent can only conclude that the target lies on a circle of radius ${\sf d}_{k,i}$ around it. To achieve the goal of source localization, the agents would need to cooperate with each other.
	
	In social learning solutions, each agent $k$ holds a local belief vector $\bm{\mu}_{k,i}$ that represents a pmf over the set of hypotheses $\Theta$. Each component $\bm{\mu}_{k,i}(\theta)$ indicates the confidence of agent $k$ at time instant $i$ that $\theta$ is the true state of nature. In the context of source localization, $\bm{\mu}_{k,i}(\theta)$ denotes the agent $k$'s estimate of the probability that the target is located in area $\theta$. The belief vector is updated through the continuous flow of information in the network through an adaptation step and a combination step. More specifically, at each time instant $i$, agent $k$ receives a new signal $\bm{\xi}_{k,i}$ and uses it to compute an intermediate belief vector $\bm{\psi}_{k,i}$. In the \emph{non-adaptive }learning scenario, the Bayes rule is employed, which computes $\bm{\psi}_{k,i}$ according to
	\begin{equation}\label{eq: Bayes update}
		\bm{\psi}_{k,i}(\theta)=\dfrac{\bm{\mu}_{k,i-1}(\theta) L_k(\bm{\xi}_{k,i}|\theta)}{\sum_{\theta^\prime\in\Theta}\bm{\mu}_{k,i-1}(\theta^\prime) L_k(\bm{\xi}_{k,i}|\theta^\prime)}
	\end{equation}
 	for each $\theta\in\Theta$. Eq. \eqref{eq: Bayes update} describes how the agents refine their estimates of the target's location with the latest measurement in the source localization example. After this adaptation step, agent $k$ aggregates the intermediate beliefs of its neighbors following a certain pooling protocol in order to update the local belief vector $\bm{\mu}_{k,i}$. Different protocols for belief aggregation lead to various social learning algorithms in the literature \cite{Jadbabaie:2012ii,Zhao:2013wa,Salami:2017jf,Lalitha:2018ej,Nedic:2017bg,Hare2021General,Matta2020Interplay,shi2020distributed,Mitra:2020fi}. Examples of useful pooling rules appear in \cite{koliander2022fusion}. Next, we describe one fusion rule in the context of non-stationary environments.
 
 First, we note that the ASL algorithm introduced in \cite{Bordignon:2020wx} is an important variant of social learning developed for non-stationary conditions. Within the previous distributed source localization example, the target might move to another location at some time instant $i_0$ and consequently, the distance ${\sf d}_{k,i}$ in \eqref{eq: private signal in the example} becomes a different value after $i\geq i_0$. It is necessary for the network to quickly track the new location of the target in practice. Adaptation is a desirable feature of distributed learning strategies \cite{sayed2014adaptive,sayed2014adaptation}. To improve the adaptation ability, a modified adaptive update (compared with \eqref{eq: Bayes update}) is proposed in the ASL algorithm, where the weights assigned to the past and new information for constructing the intermediate belief are controlled through a small positive parameter $\delta$:
\begin{equation}\label{eq: ASL0}
	\bm{\psi}_{k,i}(\theta)=\dfrac{\bm{\mu}_{k,i-1}^{1-\delta}(\theta) L_k^\delta(\bm{\xi}_{k,i}|\theta)}{\sum_{\theta^\prime\in\Theta}\bm{\mu}_{k,i-1}^{1-\delta}(\theta^\prime) L_k^\delta(\bm{\xi}_{k,i}|\theta^\prime)}.
\end{equation}
The motivation for this adaptation formula has been elaborated from different perspectives in \cite{Bordignon:2020wx}. In particular, we can get \eqref{eq: ASL0} as the solution to the following optimization problem:
\begin{equation}\label{eq: ASL as optimization solution}
	\bm{\psi}_{k,i}=\argmin_{p\in\Delta_\theta}\Big\{(1-\delta)D_{\textnormal{KL}}(p\|\bm{\mu}_{k,i-1})+\delta D_{\textnormal{KL}}(p\|\bm{\mu}_{k,i}^{\sf lik})\Big\}
\end{equation}
where $\Delta_\theta$ is the set of all pmfs on the hypothesis set $\Theta$, and $\bm{\mu}_{k,i}^{\sf lik}$ is the likelihood pmf involving the new observation $\bm{\xi}_{k,i}$:
\begin{equation}
	\bm{\mu}_{k,i}^{\sf lik}(\theta)=\frac{L_k(\bm{\xi}_{k,i}|\theta)}{\sum_{\theta^\prime\in\Theta}L_k(\bm{\xi}_{k,i}|\theta^\prime)},\quad\forall\theta\in\Theta.
\end{equation} The first and second KL divergences in \eqref{eq: ASL as optimization solution} describe respectively the consistency between the intermediate belief and the past information (captured by $\bm{\mu}_{k,i-1}$), and the consistency between the intermediate belief and the new information (represented by $\bm{\mu}_{k,i}^{\sf lik}$). The trade-off between these two consistency costs is adjusted through the parameter $\delta$. Following \eqref{eq: ASL0}, the belief aggregation step employs the log-linear rule, which generates the local belief vector $\bm{\mu}_{k,i}$ as follows: 
\begin{equation}\label{eq: ASL 1}
	\bm{\mu}_{k,i}(\theta)=\dfrac{\exp{\sum_{\ell\in\mathcal{N}_k}a_{\ell k}\log{\bm{\psi}_{\ell,i}(\theta)}}}{\sum_{\theta^\prime\in\Theta}\exp{\sum_{\ell\in\mathcal{N}_k}a_{\ell k}\log{\bm{\psi}_{\ell,i}(\theta^\prime)}}}
\end{equation} 
	for each $\theta\in\Theta$. This pooling rule \eqref{eq: ASL 1} can be obtained as the solution to the following optimization problem:
	\begin{equation}
		\bm{\mu}_{k,i}=\argmin_{p\in\Delta_\theta}\Bigg\{\sum_{\ell=1}^N a_{\ell k}D_{\textnormal{KL}}(p\|\bm{\psi}_{\ell,i})\Bigg\}.
\end{equation}
Since its inception, the ASL algorithm has been applied to different tasks, such as discovering influencers in opinion formation over online social networks \cite{shumovskaia2023discovering} and solving image classification problems involving heterogeneous classifiers \cite{bordignon2021learning}. 
	
	To avoid trivial cases, we assume that for all agents, the initial belief on each hypothesis $\theta\in\Theta$ is non-zero, i.e., $\bm{\mu}_{k,0}(\theta)>0,\forall k\in\mathcal{N}$. This is because if $\bm{\mu}_{\ell,0}(\theta)=0$ for some $\ell\in\mathcal{N}$ and $\theta\in\Theta$, then from \eqref{eq: ASL0} and \eqref{eq: ASL 1}, we have $\bm{\mu}_{k,N-1}(\theta)=0$ for all agents due to Assumption \ref{asmp: assumption 1}. Hence, hypothesis $\theta$ will end up being excluded from the social learning process. 
	\begin{assumption}[\textbf{Positive initial belief}]\label{asmp: assumption 4}
		For each hypothesis $\theta\in\Theta$ and for each agent $k\in\mathcal{N}$, the initial belief $\bm{\mu}_{k,0}(\theta)$ is positive. \qed
	\end{assumption}
	\noindent Under Assumptions \ref{asmp: assumption 1}--\ref{asmp: assumption 4}, we discuss the steady-state learning performance of the ASL strategy for the general signal model \eqref{eq: general signal model}--\eqref{eq: likelihood of an observation}. 
	\subsection{Steady-state learning performance}
	We start by introducing the log-belief ratios  $\bm{\nu}_{k,i}(\theta)$ and $\bm{\lambda}_{k,i}(\theta)$ for all $k\in\mathcal{N}$ and $\theta\in\Theta$:
	\begin{equation}
		\bm{\nu}_{k,i}(\theta)\triangleq \log\frac{\bm{\psi}_{k,i}(\theta_1)}{\bm{\psi}_{k,i}(\theta)},\quad \bm{\lambda}_{k,i}(\theta)\triangleq\log{\frac{
				\bm{\mu}_{k,i}(\theta_1)}{\bm{\mu}_{k,i}(\theta)}}.
	\end{equation}
	 Each agent makes a decision about the underlying state based on its belief. One natural option is to select the hypothesis that maximizes the belief. For each agent $k\in\mathcal{N}$, the instantaneous error probability of social learning at time instant $i$ is defined as
		\begin{equation}	\label{eq: instantaneous error probability definition}
		p_{k,i}\triangleq \mathbb{P}\Big[\argmax_{\theta\in\Theta}\bm{\mu}_{k,i}(\theta)\neq \theta_1 \Big]
		= \mathbb{P}\Big[\exists\theta\neq\theta_1:\bm{\lambda}_{k,i}(\theta)\leq 0 \Big].
	\end{equation} 
We also introduce the log-likelihood ratio $\bm{x}_{k,i}(\theta)$ for all $k\in\mathcal{N}$ and $\theta\in\Theta$:
\begin{equation}
	\bm{x}_{k,i}(\theta)
	\triangleq\log{\frac{L_k(\bm{\xi}_{k,i}|\theta_1)}{L_k(\bm{\xi}_{k,i}|\theta)}}.
\end{equation}
Using the variables $\bm{\nu}_{k,i}(\theta)$, $\bm{\lambda}_{k,i}(\theta)$ and $\bm{x}_{k,i}(\theta)$ in the logarithmic domain, the ASL algorithm represented by \eqref{eq: ASL0} and \eqref{eq: ASL 1} can be rewritten as a two-step linear recursion:
\begin{equation}\label{eq: ASL}
	\begin{cases}
		\bm{\nu}_{k,i}(\theta)=(1-\delta)\bm{\lambda}_{k,i-1}(\theta)+\delta \bm{x}_{k,i}(\theta)\\[3pt]
		\bm{\lambda}_{k,i}(\theta)=\sum\limits_{\ell\in\mathcal{N}_k}a_{\ell k}\bm{\nu}_{\ell,i}(\theta)
	\end{cases}
\end{equation}
which has the form of a standard \emph{diffusion} learning rule \cite{sayed2014adaptation,sayed2014adaptive}. 
Iterating the recursion in \eqref{eq: ASL}, we obtain
	\begin{align}
		\nonumber
		\bm{\lambda}_{k,i}(\theta)=&\; (1-\delta)^i\sum_{\ell=1}^{N}[A^i]_{\ell k}\bm{\lambda}_{\ell,0}(\theta)\\
		\label{eq: ASL expanded recursion}
		&+\delta\sum_{m=0}^{i-1}\sum_{\ell=1}^N(1-\delta)^m[A^{m+1}]_{\ell k}\bm{x}_{\ell,i-m}(\theta).
	\end{align}
	The first term in \eqref{eq: ASL expanded recursion} involves the initial belief vector and it decays as $i$ grows. In order to evaluate the steady-state learning performance, it suffices to focus on the second term:
	\begin{equation}\label{eq: lambda_hat}
		\widehat{\bm{\lambda}}_{k,i}(\theta)\triangleq \delta\sum_{m=0}^{i-1}\sum_{\ell=1}^N(1-\delta)^m[A^{m+1}]_{\ell k}\bm{x}_{\ell,i-m}(\theta).
\end{equation}
Given that the random variables $\bm{x}_{k,i}(\theta)$ are i.i.d. across time, and that our analysis concerns only the distribution of partial sums associated with these terms, it is useful to introduce the following random variable:
\begin{equation}\label{eq: lambda_tilde}
		\widetilde{\bm{\lambda}}_{k,i}(\theta)\triangleq \delta\sum_{m=0}^{i-1}\sum_{\ell=1}^N(1-\delta)^m[A^{m+1}]_{\ell k}\bm{x}_{\ell,m+1}(\theta)
\end{equation} 
where the summation in \eqref{eq: lambda_hat} is taken in reversed order. By repeating the same arguments used in \cite{Matta:2016kh,Matta:2016cb,Bordignon:2020wx}, we can show that $\widehat{\bm{\lambda}}_{k,i}(\theta)$ and $\widetilde{\bm{\lambda}}_{k,i}(\theta)$ share the same distribution. Formally, 
\begin{equation}\label{eq: equality in distribution}
	\widehat{\bm{\lambda}}_{k,i}(\theta)\overset{\textnormal{d}}{=}\widetilde{\bm{\lambda}}_{k,i}(\theta)
\end{equation}
where the symbol $\overset{\textnormal{d}}{=}$ denotes equality in distribution. Furthermore, the random variable $\widetilde{\bm{\lambda}}_{k,i}(\theta)$ converges almost surely as $i\to\infty$ under Assumptions \ref{asmp: assumption 1}--\ref{asmp: assumption 2}. Therefore, there exists a steady-state random variable termed as  \emph{steady-state log-belief ratio}: 
\begin{equation}\label{eq: limiting random variable}
	\widetilde{\bm{\lambda}}_{k,\infty}(\theta)\triangleq \delta\sum_{m=0}^{\infty}\sum_{\ell=1}^N(1-\delta)^m[A^{m+1}]_{\ell k}\bm{x}_{\ell,m+1}(\theta)
\end{equation}
such that 
\begin{equation}\label{eq: convergence to a limiting random variable}
	\widehat{\bm{\lambda}}_{k,i}(\theta)\overset{i\to\infty}{\rightsquigarrow}\widetilde{\bm{\lambda}}_{k,\infty}(\theta)
\end{equation}
where the symbol $\rightsquigarrow$ means convergence in distribution. In view of \eqref{eq: instantaneous error probability definition}, the relation in \eqref{eq: convergence to a limiting random variable} allows us to define the steady-state error probability for each agent $k$ as:
\begin{equation}\label{eq: steady-state error probability}
	p_k\triangleq\mathbb{P}\big[\exists\theta\neq\theta_{1}: \widetilde{\bm{\lambda}}_{k,\infty}(\theta)\leq 0\big].
\end{equation}
Using similar analytical tools to the ones employed in \cite{Bordignon:2020wx}, we can prove that for small $\delta$, the steady-state error probability $p_k$ obeys a Large Deviations Principle (LDP) \cite{dembo1998applications,den2008large} for some error exponent that is related to the combination policy. For the benefit of the reader, we recall here that a process $\bm{y}_\delta$ is said to obey an LDP if the following limit exists \cite{dembo1998applications,den2008large}:
\begin{equation}\label{eq: LDP}
	\underset{\delta\to 0}{\lim\;}\delta \ln\mathbb{P}\big[\bm{y}_\delta\in\Gamma\big]=-\underset{\gamma\in\Gamma}{\inf}I(\gamma)\triangleq -I_\Gamma
\end{equation}  
for some $I(\gamma)$ that is called the \emph{rate function}, where $\Gamma$ is an arbitrary set. Equivalently,
\begin{equation}
	\mathbb{P}\left[\bm{y}_\delta\in\Gamma\right]\doteq e^{-(1/\delta)I_\Gamma}
\end{equation}
where the symbol $\doteq$ denotes equality to the \emph{leading} order in the exponent as $\delta$ goes to zero. For the social learning problem, the leading-order exponent $I_\Gamma$ corresponding to the error probability is also referred to as \emph{error exponent}.

For our subsequent analysis on the error exponent, we also need to introduce the network average log-likelihood ratio for all $\theta\in\Theta$:
	\begin{equation}\label{eq: x_ave}
		\bm{x}_{{\sf ave},i}(\pi,\theta)\triangleq\sum_{k=1}^N{\pi_k\bm{x}_{k,i}(\theta)}.
	\end{equation}  
	We note that the combination policy, which is the main subject of interest in this work, plays a key role in weighting the network average quantity defined above. Let $\mathbb{E}$ and $\mathbb{P}$ denote the expectation and probability operators relative to the joint signal model $f$, respectively. The expectation of $\bm{x}_{k,i}(\theta)$ is given by 
	\begin{equation}\label{eq: expectation}
		d_k(\theta)\triangleq\mathbb{E} \left[{\bm{x}}_{k,i}(\theta)\right]
		=D_{\textnormal{KL}}({f_k}\|L_k(\cdot|\theta))-D_{\textnormal{KL}}({f_k}\|L_k(\cdot|\theta_1)).
	\end{equation}
	Assumption \ref{asmp: assumption 2} ensures that $d_k(\theta)$ is finite, i.e., $\abs{d_k(\theta)}<\infty$ for all $k\in\mathcal{N}$ and for all $\theta\in\Theta$. By definition \eqref{eq: set of local truth}, we know that if $d_k(\theta)\geq0$ for all $\theta\neq\theta_1$, then the global truth  $\theta_1$ is also a local truth for agent $k$, i.e., $\theta_1\in\Theta_k^\star$.  Otherwise, it conflicts with the local truth at agent $k$, i.e., $\theta_1\notin\Theta_k^\star$. From \eqref{eq: x_ave}, the expectation of $\bm{x}_{{\sf ave},i}(\pi,\theta)$ can be written as 
	\begin{equation}\label{eq: mave}
		{\sf m_{ave}}(\pi,\theta)\triangleq\mathbb{E}\left[\bm{x}_{{\sf ave},i}(\pi,\theta)\right]=\sum_{k=1}^N{\pi_k d_k(\theta)}.
	\end{equation}
	Let $\Lambda_k(t;\theta)$ and $\Lambda_{\sf ave}(t;\pi,\theta)$ denote the Logarithmic Moment Generating Functions (LMGFs) of variables $\bm{x}_{k,i}(\theta)$ and $\bm{x}_{{\sf ave},i}(\pi,\theta)$, respectively:
	\begin{equation}\label{eq: LMGF}
		\Lambda_k(t;\theta)\triangleq\log\mathbb{E}\Big[e^{t\bm{x}_{k,i}(\theta)}\Big],
	\end{equation}
	\begin{equation}\label{eq: LMGF_ave}
		\Lambda_{\sf ave}(t;\pi,\theta)\triangleq\log\mathbb{E}\Big[e^{t\bm{x}_{{\sf ave},i}(\pi,\theta)}\Big]=\sum_{k=1}^N\Lambda_k(\pi_k t;\theta).
	\end{equation}
	We note that since the random variables $\bm{x}_{k,i}(\theta)$ are i.i.d. across time, $\Lambda_k(t;\theta)$ and $\Lambda_{\sf ave}(t;\pi,\theta)$ are independent of time. For this reason, the subscript $i$ pertaining to the time instant is not required. One fundamental property of LMGFs states that $\Lambda_{k}(t;\theta)$ is an alternative representation for the probability distribution of $\bm{x}_{k,i}(\theta)$. Hence, it captures the informativeness of agent $k$ on learning the global truth $\theta_1$. 
	
	 In the following theorem, we extend two important results on the steady-state learning performance in the small-$\delta$ regime to the general signal model; these results were previously established, albeit only for the accurate signal model in \cite{Bordignon:2020wx}.
	 
	\begin{theorem}[\textbf{Steady-state learning performance}\footnote{The asymptotic normality of steady-state log-beliefs in the small-$\delta$ regime provided in \cite{Bordignon:2020wx} can also be established for the general signal model.}] \label{theorem: learning performance for the general model}
		Under Assumptions \ref{asmp: assumption 1}--\ref{asmp: assumption 4} for the general signal model \eqref{eq: general signal model}--\eqref{eq: likelihood of an observation}, we consider a combination policy with Perron eigenvector $\pi$. In the small-$\delta$ regime, the steady-state log-belief ratio $\widetilde{\bm{\lambda}}_{k,\infty}(\theta)$
		 converges to ${\sf m_{ave}}(\pi,\theta)$ in probability as $\delta$ approaches 0:
		\begin{equation}
			\widetilde{\bm{\lambda}}_{k,\infty}(\theta)\xrightarrow{\delta\to 0}{\sf m_{ave}}(\pi,\theta) \quad \textnormal{in probability}
		\end{equation}
		for all $\theta\in\Theta$. Therefore, if the Perron eigenvector $\pi$ satisfies 
		\begin{equation}\label{eq: consistency condition}
			{\sf m_{ave}}(\pi,\theta)>0,\quad \forall \theta\neq\theta_1,
		\end{equation}
		we have:
		\begin{enumerate}
			\item[i)] \textbf{Consistency of learning\footnote{We note that the assumption on the independence of local observations over space is not necessary for the consistency of learning \cite{Bordignon:2020wx}.}:} 
			The steady-state error probability $p_k$ converges to 0 as $\delta$ goes to 0 by definition \eqref{eq: steady-state error probability}. This means that all agents learn the global truth successfully.
			\item[ii)] \textbf{Error exponent:} Assume $\Lambda_{k}(t;\theta)<\infty,\forall t\in\mathbb{R}$ for all $k\in\mathcal{N}$ and $\theta\neq\theta_1$. The steady-state error probability $p_k$ obeys an LDP with rate $\frac{1}{\delta}$ and error exponent $\Phi(\pi)$, i.e.,
			\begin{equation}\label{eq: LDP in theorem 1}
				p_k\doteq e^{-\Phi(\pi)/\delta},
			\end{equation}
			where
			\begin{equation}\label{eq: error exponent expression}
				\Phi(\pi)\triangleq\min_{\theta\neq\theta_1}\Phi(\pi,\theta).
			\end{equation}
			The \textnormal{$\theta$-related error exponent} $\Phi(\pi,\theta)$ is given by
			\begin{equation}\label{eq: rate function}
				\Phi(\pi,\theta)\triangleq-\inf_{t\in\mathbb{R}}\phi(t;\pi,\theta)
			\end{equation}
			with
			\begin{equation}\label{eq: phi}
				\phi(t;\pi,\theta)\triangleq\int_{0}^t\frac{\Lambda_{\sf ave}(\tau;\pi,\theta)}{\tau} d\tau.
			\end{equation}
		\end{enumerate}
	\end{theorem}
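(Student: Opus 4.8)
The plan is to recast the coupled recursion \eqref{eq: ASL} in vector form and analyze its steady state. Stacking the agents' quantities into $\bm{\lambda}_i(\theta)=[\bm{\lambda}_{1,i}(\theta),\dots,\bm{\lambda}_{N,i}(\theta)]^\top$ and $\bm{x}_i(\theta)=[\bm{x}_{1,i}(\theta),\dots,\bm{x}_{N,i}(\theta)]^\top$, and writing $B\triangleq A^\top$, the elimination of $\bm{\nu}_i(\theta)$ produces the single diffusion recursion $\bm{\lambda}_i(\theta)=(1-\delta)B\,\bm{\lambda}_{i-1}(\theta)+\delta B\,\bm{x}_i(\theta)$. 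Since $A$ is left-stochastic and primitive, $B$ is right-stochastic and primitive, so $(1-\delta)B$ has spectral radius $1-\delta<1$; unrolling the recursion, the transient term involving $\bm{\lambda}_0(\theta)$ decays geometrically and the driving series converges. Exploiting the i.i.d. property of $\{\bm{x}_i(\theta)\}$ to reverse the time index, I would identify the weak limit of $\bm{\lambda}_{k,i}(\theta)$ as the $k$-th entry of $\widetilde{\bm{\lambda}}_\infty(\theta)=\delta\sum_{m=0}^{\infty}(1-\delta)^m B^{m+1}\bm{x}_{m+1}(\theta)$, which is the random variable $\widetilde{\bm{\lambda}}_{k,\infty}(\theta)$ in the statement, thereby giving the equality in distribution $\bm{\lambda}_{k,\infty}(\theta)\overset{\textnormal{d}}{=}\widetilde{\bm{\lambda}}_{k,\infty}(\theta)$.

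Next I would establish the in-probability limit by controlling the first two moments of $\widetilde{\bm{\lambda}}_\infty(\theta)$. Summing the geometric operator series gives $\mathbb{E}[\widetilde{\bm{\lambda}}_\infty(\theta)]=\delta\,(I-(1-\delta)B)^{-1}B\,d(\theta)$ with $d(\theta)=[d_1(\theta),\dots,d_N(\theta)]^\top$. Decomposing $B=P+N$, where $P=\mathbbm{1}\pi^\top$ is the Perron projection and $N$ acts on the complementary subspace with spectral radius $<1$, I would show $\delta\,(I-(1-\delta)B)^{-1}\to P$ as $\delta\to 0$, so that the mean tends to $P B\,d(\theta)=\mathbbm{1}\,\pi^\top d(\theta)=\mathbbm{1}\,{\sf m_{ave}}(\pi,\theta)$ because $\pi^\top B=\pi^\top$. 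For the covariance, independence across agents and time collapses the double sum to a single geometric series, so the covariance is bounded by $\delta^2\cdot O(1/\delta)=O(\delta)$ and vanishes. Chebyshev's inequality then yields $\widetilde{\bm{\lambda}}_{k,\infty}(\theta)\xrightarrow[\delta\to 0]{\textnormal{prob.}}{\sf m_{ave}}(\pi,\theta)$.

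For part (i), under condition \eqref{eq: consistency condition} each limit ${\sf m_{ave}}(\pi,\theta)$ is strictly positive, so the boundary set $\{\widetilde{\bm{\lambda}}_{k,\infty}(\theta)=0\}$ has no mass under the limiting law and $p_k=\lim_{i\to\infty}p_{k,i}$ may be evaluated through $\widetilde{\bm{\lambda}}_{k,\infty}(\theta)$. A union bound gives $p_k\le\sum_{\theta\neq\theta_1}\mathbb{P}[\widetilde{\bm{\lambda}}_{k,\infty}(\theta)\le 0]$, and each term tends to $0$ by convergence in probability to a positive constant, which proves consistency.

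For part (ii), I would apply the G\"artner--Ellis theorem to the family $\{\widetilde{\bm{\lambda}}_{k,\infty}(\theta)\}_\delta$ at speed $1/\delta$. Because the terms of $\widetilde{\bm{\lambda}}_{k,\infty}(\theta)$ are independent across both $m$ and $\ell$, the LMGF factorizes as $\log\mathbb{E}[e^{t\widetilde{\bm{\lambda}}_{k,\infty}(\theta)}]=\sum_{m\ge 0}\sum_{\ell}\Lambda_\ell\big(t\delta(1-\delta)^m[B^{m+1}]_{k\ell};\theta\big)$. Evaluating at $t/\delta$, multiplying by $\delta$, and letting $\delta\to 0$ converts the sum over $m$ into an integral: with $u=\delta m$ one has $(1-\delta)^m\to e^{-u}$ and $[B^{m+1}]_{k\ell}\to\pi_\ell$, and the substitution $\tau=te^{-u}$ turns the scaled limit into $\int_0^t\tau^{-1}\sum_\ell\Lambda_\ell(\tau\pi_\ell;\theta)\,d\tau=\int_0^t\tau^{-1}\Lambda_{\sf ave}(\tau;\pi,\theta)\,d\tau=\phi(t;\pi,\theta)$ by \eqref{eq: LMGF_ave}. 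Since the limiting mean direction $\phi'(0)={\sf m_{ave}}(\pi,\theta)$ is positive, the lower-tail rate is the Fenchel--Legendre transform evaluated at $0$, namely $\sup_t[-\phi(t;\pi,\theta)]=-\inf_t\phi(t;\pi,\theta)=\Phi(\pi,\theta)$, matching \eqref{eq: rate function}; hence $\mathbb{P}[\widetilde{\bm{\lambda}}_{k,\infty}(\theta)\le 0]\doteq e^{-\Phi(\pi,\theta)/\delta}$. Combining the $H-1$ alternative events by a union bound, and noting that on the exponential scale the smallest exponent dominates, gives $p_k\doteq e^{-\Phi(\pi)/\delta}$ with $\Phi(\pi)=\min_{\theta\neq\theta_1}\Phi(\pi,\theta)$. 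I expect the main obstacle to be justifying the interchange of the $\delta\to 0$ limit with the infinite sum in this computation: the replacement of $[B^{m+1}]_{k\ell}$ by its Perron limit $\pi_\ell$ is only accurate for $m$ of order $1/\delta$, so one must show that the boundary-layer terms at small $m$ are negligible after the $\delta$-scaling (using the geometric convergence $B^{m}\to P$ together with a dominated-convergence bound) and verify the essential-smoothness hypothesis of G\"artner--Ellis, which is where the assumption $\Lambda_k(t;\theta)<\infty$ for all $t$ enters.
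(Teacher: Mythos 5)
Your proposal reconstructs, essentially step by step, the argument that the paper itself does not spell out: the printed proof of Theorem~\ref{theorem: learning performance for the general model} is a one-paragraph deferral to \cite{Bordignon:2020wx}, observing only that the key ingredient there is the finiteness of $d_k(\theta)$, which Assumption~\ref{asmp: assumption 2} supplies for the general signal model. Your route --- unrolling the diffusion recursion, reversing the time index to get $\bm{\lambda}_{k,i}(\theta)\overset{\textnormal{d}}{=}\widetilde{\bm{\lambda}}_{k,i}(\theta)$, computing the limit of the scaled LMGF $\delta\Lambda_{k,\infty}(t/\delta;\theta)\to\phi(t;\pi,\theta)$ via the sum-to-integral substitution, and invoking G\"artner--Ellis with the lower-tail exponent $-\inf_t\phi(t;\pi,\theta)$ --- is the same architecture as the cited proof, and your identification of the delicate point (the boundary layer where $[B^{m+1}]_{k\ell}$ has not yet converged to $\pi_\ell$, handled by the geometric convergence of $B^m$ to the Perron projector plus domination) is exactly where the real work lies. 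The same computation appears in the paper's own Appendix~E, Eqs.~\eqref{eq: Lambda_{k,i}}--\eqref{eq: LMGF 3}, for the finite-$i$ version.

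There is one concrete gap. Your proof of the convergence in probability $\widetilde{\bm{\lambda}}_{k,\infty}(\theta)\to{\sf m_{ave}}(\pi,\theta)$ goes through Chebyshev, which requires $\rho_k(\theta)=\mathbb{E}[(\bm{x}_{k,i}(\theta)-d_k(\theta))^2]<\infty$. But this part of the theorem (and hence part~(i)) is asserted under Assumptions~\ref{asmp: assumption 1}--\ref{asmp: assumption 4} alone, and Assumption~\ref{asmp: assumption 2} only guarantees finiteness of the \emph{first} moment $d_k(\theta)$; the everywhere-finiteness of $\Lambda_k(t;\theta)$, which would give you all moments, is assumed only in part~(ii). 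The reference the paper leans on proves the weak law under first moments only (via a truncation/characteristic-function argument), which is precisely why the paper's proof emphasizes that ``the basis of the proof is the finiteness of $d_k(\theta)$.'' To match the stated hypotheses you would need to replace Chebyshev by an $L^1$-type argument (e.g., truncate $\bm{x}_{\ell,m}(\theta)$ at level $M$, apply your second-moment bound to the truncated part, and control the remainder in $L^1$ uniformly in $\delta$). A smaller point in the same vein: passing from $\lim_{i\to\infty}\mathbb{P}[\bm{\lambda}_{k,i}(\theta)\le 0]$ to $\mathbb{P}[\widetilde{\bm{\lambda}}_{k,\infty}(\theta)\le 0]$ requires that $0$ not be an atom of the law of $\widetilde{\bm{\lambda}}_{k,\infty}(\theta)$ for the \emph{fixed} $\delta$ under consideration; the positivity of the $\delta\to 0$ limit ${\sf m_{ave}}(\pi,\theta)$ does not by itself rule this out, and the standard fix is to work with $\limsup$ and $\liminf$ bounds on $p_{k,i}$ that both carry the same exponent.
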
 
	\begin{proof}
		The performance analysis on the ASL strategy for the accurate signal model \eqref{eq: likelihood of an observation}--\eqref{eq: accurate likelihood model} has been established in \cite{Bordignon:2020wx}. 
		From the analysis there, the steady-state learning performance is determined by the statistical properties of the log-likelihood ratio involving the true hypothesis and an alternative one. We note that there are some differences in notation between \cite{Bordignon:2020wx} and this work. For instance, the $H$ hypotheses are numbered from $0$ to $H-1$, and the true state is denoted by $\theta_0$ in \cite{Bordignon:2020wx}. Using the notation introduced in this paper, the key condition to the proof in \cite{Bordignon:2020wx} is that the log-likelihood ratio $\bm{x}_{k,i}(\theta)$ has a finite mean for all $k\in\mathcal{N}$ and $\theta\in\Theta$ (see Lemma 1 in \cite{Bordignon:2020wx}). That is,
			\begin{equation}\label{eq: finite mean}
				\mathbb{E}\big[\bm{x}_{k,i}(\theta)\big]\triangleq\mathbb{E}_{L_k(\cdot|\theta_1)}\big[\bm{x}_{k,i}(\theta)\big]<\infty,
			\end{equation}
		 which is ensured by the assumption of finite KL divergences (i.e., Assumption 1 in \cite{Bordignon:2020wx}). 
			The only difference in the analysis for a general signal model \eqref{eq: general signal model}--\eqref{eq: likelihood of an observation} is that, we need to examine the statistical properties of $\bm{x}_{k,i}(\theta)$ conditioned on a general model $f_k$ which might be different from the likelihood model $L_k(\cdot|\theta_1)$. However, under Assumption \ref{asmp: assumption 2}, we have $\mathbb{E}[\bm{x}_{k,i}(\theta)]<\infty$ as shown in \eqref{eq: expectation}. Therefore, the finite-mean condition \eqref{eq: finite mean} continues to hold for the general signal model:
			\begin{equation}
				\mathbb{E}\big[\bm{x}_{k,i}(\theta)\big]\triangleq\mathbb{E}_{ f_k}\big[\bm{x}_{k,i}(\theta)\big]<\infty.
			\end{equation}
			Consequently, Theorem \ref{theorem: learning performance for the general model} is established by repeating the proof of the steady-state learning performance developed in \cite{Bordignon:2020wx} by substituting the expectation w.r.t. $L_k(\cdot|\theta_1)$ by that w.r.t. $f_k$.
	\end{proof}
	\section{Maximizing error exponent}
	\label{sec:error exponent}
	In this section, we discuss the effect of combination policies on the steady-state learning accuracy of the ASL strategy. From Theorem \ref{theorem: learning performance for the general model}, the error exponent $\Phi(\pi)$ plays a crucial role in the steady-state error probability. According to \eqref{eq: error exponent expression}--\eqref{eq: phi}, $\Phi(\pi)$ is influenced by the Perron eigenvector of the combination policy through the LMGF $\Lambda_{\sf ave}(t;\pi,\theta)$ defined in \eqref{eq: LMGF_ave}. A Perron eigenvector that delivers a larger error exponent is beneficial for reducing the steady-state error probability in the slow adaptation regime. To find the best Perron eigenvectors that provide the largest error exponent for the given learning task, we formulate the  following  optimization problem:
	\begin{align}\label{eq: objective function}
		&\max_{\pi}\;\Phi(\pi)\\ \label{eq: constraint 1}
		\text{s.t. }& \mathbbm{1}^\top\pi =1, \;\pi_k>0, \quad \forall k\in\mathcal{N},\\\label{eq: constraint 2}
		&{\sf m_{ave}}(\pi,\theta)>0,\qquad\;\forall \theta\neq\theta_1.
	\end{align} 
	\noindent Here, constraints \eqref{eq: constraint 1} and \eqref{eq: constraint 2} are imposed to guarantee the strong connectivity of the network and the successful learning of the global truth according to Theorem \ref{theorem: learning performance for the general model}.
	We denote the set of all feasible solutions to the optimization problem \eqref{eq: objective function}--\eqref{eq: constraint 2} by $\Pi$:
	\begin{equation}\label{eq:set of  feasible solution}
		\Pi=\left\{\pi: \pi \text{ satisfies } \eqref{eq: constraint 1}\text{ and }\eqref{eq: constraint 2}\right\}.
	\end{equation}
	It is clear from \eqref{eq: error exponent expression}--\eqref{eq: phi} that the design of $\pi$ relates to the individual LMGFs $\Lambda_{k}(t;\theta)$, which measure the ability of every agent $k$ to learn the global truth $\theta_1$, namely, its level of \emph{informativeness}. Before solving the optimization problem above, we first provide some useful definitions and preliminary results for the subsequent analysis. 
	
	\subsection{Preliminary definitions}
	\label{subsec: preliminary definitions}
	We classify the $N$ agents into different groups according to their informativeness. Agents in different groups play different roles in the learning performance.
	For each wrong hypothesis $\theta\neq\theta_1$, we denote the sets $\mathcal{N}^{{U}}(\theta)$, $\mathcal{N}^{{I}}(\theta)$, and  $\mathcal{N}^{{C}}(\theta)$ as the collections of \emph{uninformative} agents, \emph{informative} agents and \emph{conflicting} agents with respect to $\theta$, respectively:
	\begin{align} \label{eq: uninformative agent}
		\mathcal{N}^{{U}}(\theta)\triangleq&\left\{k\in\mathcal{N}: \Lambda_k(t;\theta)\equiv0\right\},\\
		\label{eq: informative agent}
		\mathcal{N}^{{I}}(\theta)\triangleq&\left\{k\in\mathcal{N}: \Lambda_k(t;\theta)\not\equiv0,d_{k}(\theta)>0\right\},\\
		\label{eq: conflicting agent}
		\mathcal{N}^{{C}}(\theta)\triangleq&\left\{k\in\mathcal{N}: \Lambda_k(t;\theta)\not\equiv0,d_{k}(\theta)\leq 0\right\},
	\end{align}
	where the symbol $\equiv$ denotes that $\Lambda_{k}(t;\theta)=0$ for all $t\in\mathbb{R}$, and the symbol $\not\equiv$ means that $\Lambda_{k}(t;\theta)\neq 0$ for some $t\in\mathbb{R}$. According to the definitions above, an agent $k$ is $\theta$-uninformative if the likelihoods conditioned on hypotheses $\theta_1$ and $\theta$ are the same for all local observations (i.e., $L_k(\cdot|\theta) = L_k(\cdot|\theta_1)$ almost everywhere), and is $\theta$-informative if hypothesis $\theta_1$ is more consistent with its local observations than hypothesis $\theta$. It is $\theta$-conflicting if its information associated to hypothesis $\theta$ is detrimental for learning the global truth $\theta_1$. This point will be self-evident later when we discuss the bounds of error exponents. Moreover, it is clear that if $k\in\overline{\mathcal{N}}_A$ (i.e., $\theta_1\not\in\Theta_k^\star$), then agent $k$ must be a conflicting agent for some hypothesis $\theta$. Another observation is that definition \eqref{eq: uninformative agent} of a $\theta$-uninformative agent $k$ requires a more stringent condition (i.e., $\Lambda_{k}(t;\theta)\equiv 0$) than the observational equivalence of $\theta_1$ and $\theta$ from the perspective of agent $k$ (i.e., $d_k(\theta)=0$). In particular, due to the non-negativeness of KL divergence, $\Lambda_{k}(t;\theta)\equiv 0$ implies that $d_k(\theta)=0$ under the accurate signal model \eqref{eq: likelihood of an observation}--\eqref{eq: accurate likelihood model}. We provide the following example to illustrate the aforementioned sets of agents.
	
	\emph{Example:} Consider a network of 4 sensor agents tasked with a binary detection problem. In practice, the likelihood models are usually constructed from a finite number of samples collected under the corresponding hypothesis. Assume that the signal model is $f_k={\sf N}(0,1)$ for all agents, where ${\sf N}(a,b)$ denotes the Gaussian pdf  with mean $a$ and variance $b$. Due to the limited number of samples, the likelihood models might be inaccurate and thus differ from the underlying signal model. Let us consider a group of unit-variance Gaussian likelihood models: $L_k(\cdot|\theta_1)={\sf N}(0.1,1)$ for all $k$, and 
		\begin{align}
			&L_1(\cdot|\theta_2)={\sf N}(-0.1,1), &&L_2(\cdot|\theta_2)={\sf N}(0.2,1),\\
			&L_3(\cdot|\theta_2)={\sf N}(0,1),  &&L_4(\cdot|\theta_2)={\sf N}(0.1,1).
		\end{align}
		Under the ASL protocol, the expectations $d_k(\theta_2)$ are given by $d_1(\theta_2)=d_4(\theta_2)=0$, $d_2(\theta_2)=0.03$ and $d_3(\theta_2)=-0.01$.
	The LMGFs $\Lambda_k(t;\theta_2)$ for all $k\in\mathcal{N}$ are presented in Fig. \ref{fig: LMGF}. According to definitions \eqref{eq: uninformative agent}--\eqref{eq: conflicting agent}, we have $\mathcal{N}^{U}(\theta_2)=\{4\}$, $\mathcal{N}^{I}(\theta_2)=\{2\}$ and $\mathcal{N}^{C}(\theta_2)=\{1,3\}$. Therefore, the learning performance of this network will be affected by the eigenvector centrality of different types of agents. \qed
	\begin{figure}
		\centering
		\includegraphics[width=.7\linewidth]{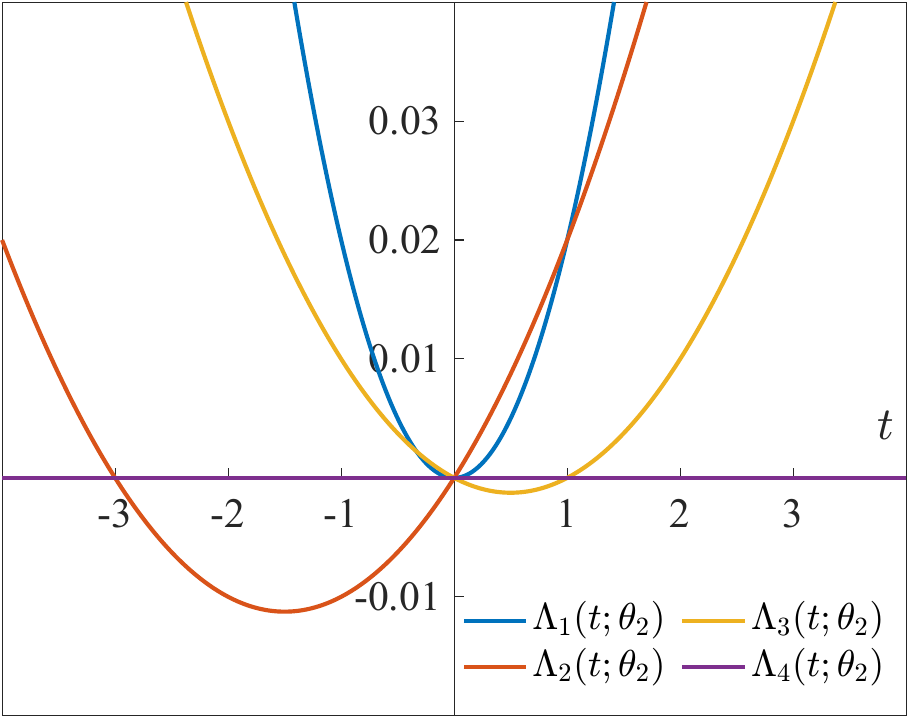}
		\caption{LMGFs $\Lambda_k(t;\theta_2)$ for $k\in\mathcal{N}$ in the example.}
		\label{fig: LMGF}
	\end{figure}
	
	Next, we introduce two important quantities related to the learning performance in the \emph{non-cooperative} scenario, where the ASL algorithm update \eqref{eq: ASL} at agent $k$ becomes
	\begin{equation}\label{eq: ASL in non-cooperative scenario}
		\bm{\lambda}^{\sf nc}_{k,i}(\theta)=\bm{\nu}^{\sf nc}_{k,i}(\theta)=(1-\delta)\bm{\lambda}^{\sf nc}_{k,i-1}(\theta)+\delta\bm{x}_{k,i}(\theta).
	\end{equation}Here and in the following, we use the superscript `$\sf nc$' for variables associated with the non-cooperative scenario. Similar to \eqref{eq: limiting random variable} in the social learning setting, we introduce the steady-state log-belief ratio for agent $k$:
	\begin{equation}
		\widetilde{\bm{\lambda}}_{k,\infty}^{\sf nc}(\theta)\triangleq\delta\sum_{m=0}^{\infty}(1-\delta)^m\bm{x}_{k,m+1}(\theta).
\end{equation} 
Accordingly, we define the $\theta$-related steady-state error probability as 
\begin{equation}
	p_k^{\sf nc}(\theta)\triangleq\mathbb{P}[\widetilde{\bm{\lambda}}_{k,\infty}^{\sf nc}(\theta)\leq 0].
\end{equation}Let $\Phi_k^{\sf nc}(\theta)$ denote the $\theta$-related error exponent for agent $k$ in this single-agent setup for each $\theta\neq\theta_1$:
	\begin{equation}\label{eq: error exponent for theta in non-coopeartive scenario}
		p_k^{\sf nc}(\theta)\doteq e^{-\Phi_{k}^{\sf nc}(\theta)/\delta}.
	\end{equation} 
 It is clear that $\Phi_{k}^{\sf nc}(\theta)\geq 0$ for all $k\in\mathcal{N}$. Similar to \eqref{eq: phi}, we define the function $\phi_k^{\sf nc}(t;\theta)$: 
	\begin{equation}\label{eq: rate function individually}		 
		\phi_k^{\sf nc}(t;\theta)\triangleq\int_0^t\frac{\Lambda_{k}(\tau;\theta)}{\tau}d\tau.
	\end{equation}
 	Let $t_k^{\sf nc}(\theta)\leq 0$ be the \emph{critical value} to attain $\Phi_k^{\sf nc}(\theta)$ for each agent $k\in\mathcal{N}$:
	\begin{equation}\label{eq: t_k for informative agent k}
		\Phi_k^{\sf nc}(\theta)=-\phi_k^{\sf nc}(t_k^{\sf nc}(\theta);\theta).
	\end{equation}
	We will show that $t_k^{\sf nc}(\theta)$ is essential to characterize the optimal solution of the error exponent maximization problem \eqref{eq: objective function}--\eqref{eq: constraint 2}. The value of $t_k^{\sf nc}(\theta)$ for agent $k$ in different groups $\mathcal{N}^I(\theta)$, $\mathcal{N}^U(\theta)$ and $\mathcal{N}^C(\theta)$ is derived as follows.
	
	For each $\theta$-informative agent $k\in\mathcal{N}^{{I}}(\theta)$,  $\Phi_k^{\sf nc}(\theta)$ can be expressed as
	\begin{equation}\label{eq: error exponent individually}
		\Phi_k^{\sf nc}(\theta)=-\inf_{t\in\mathbb{R}} \phi_k^{\sf nc}(t;\theta)
	\end{equation}
	by using the results from Theorem \ref{theorem: learning performance for the general model}. With the properties of $\Lambda_k(t;\theta)$ and $\phi_k^{\sf nc}(t;\theta)$ established in Lemma \ref{lemma:properties} (Appendix \ref{appendix: lemma}), we have $\Phi_k^{\sf nc}(\theta)>0$. 
	The existence and uniqueness of $t_k^{\sf nc}(\theta)$ for each $\theta$-informative agent  is proved in Appendix \ref{appendix: theorem 3}. If agent $k$ is $\theta$-uninformative, we have $\Lambda_{k}(t;\theta)\equiv 0$ from \eqref{eq: uninformative agent}. Hence, we obtain $\Phi_k^{\sf nc}(\theta)=0$ and $t_k^{\sf nc}(\theta)$ can be any non-positive value. We assume $t_k^{\sf nc}(\theta)=-C$ for simplicity, where $C$ is an arbitrary positive constant that can be dependent on $k$ and $\theta$. In addition, if agent $k$ is $\theta$-conflicting, then as $\delta$ approaches 0, it rejects hypothesis $\theta_1$ (if $d_k(\theta)<0$) or cannot distinguish between hypotheses $\theta_1$ and $\theta$ (if $d_k(\theta)=0$) with probability $1$ in steady state. From \eqref{eq: error exponent for theta in non-coopeartive scenario}, $\Phi_k^{\sf nc}(\theta)=0$ is obtained in this case. Moreover, according to Lemma \ref{lemma:properties} in Appendix \ref{appendix: lemma}, the following condition holds for any $\theta$-conflicting agent $k$:
	\begin{equation}\label{eq: phi_k conflicting agent}
		\phi_k^{\sf nc}(t;\theta)>0,\quad \forall t< 0,
	\end{equation}
	which yields $t_k^{\sf nc}(\theta)=0$ in \eqref{eq: t_k for informative agent k}. Based on the above analysis, we obtain
	\begin{equation}\label{eq: t_k in non-cooperative scenario}
		t_k^{\sf nc}(\theta):\begin{cases}
			<0, &\text{if } k\in\mathcal{N}^{I}(\theta),\\
			=-C,&\text{if } k\in\mathcal{N}^{U}(\theta),\\
			=0,&\text{if } k\in\mathcal{N}^{C}(\theta).
		\end{cases}
	\end{equation}
	Consider the previous example with LMGFs shown in Fig. \ref{fig: LMGF}. We have $t_2^{\sf nc}(\theta_2)=-3$ and $t_1^{\sf nc}(\theta_2)=t_3^{\sf nc}(\theta_2)=0$. Since agent $4$ is $\theta_2$-uninformative, $t_4^{\sf nc}(\theta_2)$ can take arbitrary value $-C$.
	
	\subsection{General results}
	\label{subsec: general results}
	Let $\Phi_{{\sum}}^{\sf nc}(\theta)$ denote the sum of individual $\Phi_k^{\sf nc}(\theta)$ in the non-cooperative scenario:
	\begin{align}
		\label{eq: sum of individual error exponent for theta}
		\Phi_{ \sum}^{\sf nc}(\theta)\triangleq\sum_{k=1}^N\Phi_k^{\sf nc}(\theta).
	\end{align}
	Then, we can derive the following bound for the error exponent $\Phi(\pi)$ for any feasible Perron eigenvector $\pi\in\Pi$.
	\begin{theorem}[\textbf{Benefit of cooperation}]\label{theorem: bound}
		For any Perron eigenvector $\pi\in\Pi$, the $\theta$-related error exponent $\Phi(\pi,\theta)$ defined in \eqref{eq: rate function} is bounded by
		\begin{equation}\label{eq: bound of each error exponent}
			\min_{k\in\mathcal{N}}\Phi^{\sf nc}_k(\theta)\leq\Phi(\pi,\theta)\leq\Phi_{\sum}^{\sf nc}(\theta),\quad\forall\theta\neq\theta_1.
		\end{equation}
		Correspondingly, the error exponent $\Phi(\pi)$ of the steady-state error probability is bounded by 
		\begin{equation}\label{eq: bound of error exponent}
			\min_{\theta\neq\theta_1}\min_{k\in\mathcal{N}}\Phi^{\sf nc}_k(\theta)\leq\Phi(\pi)\leq\min_{\theta\neq\theta_1}\Phi_{\sum}^{\sf nc}(\theta).
		\end{equation}
	\end{theorem}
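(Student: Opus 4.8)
The plan is to reduce the cooperative quantity $\phi(t;\pi,\theta)$ to a separable sum of the single-agent functions $\phi_k^{\sf nc}$, after which both inequalities follow from elementary reasoning about infima. The first step is a change of variables inside the integral defining $\phi$. Starting from \eqref{eq: phi} and substituting the expression \eqref{eq: LMGF_ave} for $\Lambda_{\sf ave}$, I would interchange the finite sum and the integral and set $u=\pi_k\tau$ in each term; since $\pi_k>0$, this produces the separation identity
\begin{equation}
	\phi(t;\pi,\theta)=\sum_{k=1}^N\phi_k^{\sf nc}(\pi_k t;\theta).
\end{equation}
This is the workhorse of the whole argument: it expresses the coupled network rate function as a sum of the individual rate functions evaluated at the \emph{scaled} arguments $\pi_k t$.

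For the \textbf{upper bound}, I would use the fact that every individual term is bounded below by its own infimum. By \eqref{eq: error exponent individually} (together with the degenerate values of $\Phi_k^{\sf nc}(\theta)$ for uninformative and conflicting agents), one has $\inf_{s}\phi_k^{\sf nc}(s;\theta)=-\Phi_k^{\sf nc}(\theta)$, so for \emph{every} $t$ it holds that $\phi(t;\pi,\theta)\ge-\sum_k\Phi_k^{\sf nc}(\theta)=-\Phi_{\sum}^{\sf nc}(\theta)$. Taking the infimum over $t$ and negating gives $\Phi(\pi,\theta)\le\Phi_{\sum}^{\sf nc}(\theta)$. The intuition is that a single scalar $t$ cannot simultaneously drive all the scaled arguments $\pi_k t$ to their respective individual minimizers $t_k^{\sf nc}(\theta)$, so cooperation cannot outperform the sum of the stand-alone exponents.

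For the \textbf{lower bound}, I would exhibit one value of $t$ at which the sum is already small enough. Define $\bar t=\max_{k\in\mathcal{N}} t_k^{\sf nc}(\theta)/\pi_k$ and let $m$ be an index attaining the maximum, so that $\pi_m\bar t=t_m^{\sf nc}(\theta)$ and hence the $m$-th term equals $-\Phi_m^{\sf nc}(\theta)$ by \eqref{eq: t_k for informative agent k}. For every other $k$, the maximality of $\bar t$ gives $t_k^{\sf nc}(\theta)\le\pi_k\bar t\le 0$, and since $\phi_k^{\sf nc}(\cdot;\theta)$ increases from its minimum $-\Phi_k^{\sf nc}(\theta)$ at $t_k^{\sf nc}(\theta)$ up to $\phi_k^{\sf nc}(0;\theta)=0$ (a property I would quote from Lemma \ref{lemma:properties}), we get $\phi_k^{\sf nc}(\pi_k\bar t;\theta)\le 0$. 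Summing yields $\phi(\bar t;\pi,\theta)\le-\Phi_m^{\sf nc}(\theta)$, whence $\Phi(\pi,\theta)\ge\Phi_m^{\sf nc}(\theta)\ge\min_{k}\Phi_k^{\sf nc}(\theta)$. I expect this step — choosing the scaling $\bar t$ so that the critical agent sits exactly at its own minimizer while all remaining scaled arguments stay inside the region where $\phi_k^{\sf nc}$ is non-positive — to be the only genuinely non-routine part of the proof, and it is where the sign information in \eqref{eq: t_k in non-cooperative scenario} and \eqref{eq: phi_k conflicting agent} is essential (when uninformative or conflicting agents are present the bound collapses to the trivial $\min_k\Phi_k^{\sf nc}(\theta)=0$, which is also consistent with $\bar t=0$).

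Finally, I would lift the per-$\theta$ estimate \eqref{eq: bound of each error exponent} to the overall exponent. Since those inequalities hold for each $\theta\neq\theta_1$ and the minimum operator is monotone, applying $\min_{\theta\neq\theta_1}$ termwise to $\min_k\Phi_k^{\sf nc}(\theta)\le\Phi(\pi,\theta)\le\Phi_{\sum}^{\sf nc}(\theta)$, together with the definition $\Phi(\pi)=\min_{\theta\neq\theta_1}\Phi(\pi,\theta)$ in \eqref{eq: error exponent expression}, delivers \eqref{eq: bound of error exponent} directly.
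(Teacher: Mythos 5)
Your overall architecture coincides with the paper's proof in Appendix \ref{appendix: theorem 3}: the separation identity $\phi(t;\pi,\theta)=\sum_{k=1}^N\phi_k^{\sf nc}(\pi_k t;\theta)$, a termwise infimum bound for the upper estimate, and evaluation at $\bar t=\max_{k}t_k^{\sf nc}(\theta)/\pi_k$ for the lower estimate --- your $\bar t$ is exactly the paper's $t_{\sf m}^{\theta}$ in \eqref{eq: quantity k and t}, and your lower-bound argument is correct as written (including the observation that conflicting agents force $\bar t=0$ and collapse the bound to the trivial one).

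The upper bound, however, has a genuine gap whenever a $\theta$-conflicting agent with $d_k(\theta)<0$ is present. For such an agent, Lemma \ref{lemma:properties}(ii) gives $\Lambda_k(\cdot;\theta)$ a \emph{positive} zero $t_0>0$ with $\Lambda_k(\tau;\theta)<0$ on $(0,t_0)$, so $\phi_k^{\sf nc}(\cdot;\theta)$ attains a strictly negative minimum at $t_0>0$, while $\Phi_k^{\sf nc}(\theta)=0$ is defined through the error probability in \eqref{eq: error exponent for theta in non-coopeartive scenario}, not through $-\inf\phi_k^{\sf nc}$. Hence $\inf_{s\in\mathbb{R}}\phi_k^{\sf nc}(s;\theta)<-\Phi_k^{\sf nc}(\theta)$, your claimed identity fails for these agents, and the termwise bound only delivers $\phi(t;\pi,\theta)\geq\sum_k\inf_s\phi_k^{\sf nc}(s;\theta)$, which is strictly below $-\Phi_{\sum}^{\sf nc}(\theta)$; the conclusion $\Phi(\pi,\theta)\leq\Phi_{\sum}^{\sf nc}(\theta)$ does not follow. (The intermediate inequality ``$\phi(t;\pi,\theta)\geq-\Phi_{\sum}^{\sf nc}(\theta)$ for all $t$'' happens to be true for feasible $\pi$, but not for the reason you give.) The repair is the one the paper uses in \eqref{eq: rate function comparison}: the constraint ${\sf m_{ave}}(\pi,\theta)>0$ together with Lemma \ref{lemma:properties}(v)--(vi) forces the minimizer of $\phi(\cdot;\pi,\theta)$ to be negative, so the infimum may be restricted to $t<0$; there every scaled argument $\pi_k t$ is negative, $-\inf_{s<0}\phi_k^{\sf nc}(s;\theta)=\Phi_k^{\sf nc}(\theta)$ for informative and uninformative agents, and \eqref{eq: phi_k conflicting agent} makes the conflicting agents' terms nonnegative so that they can simply be dropped from the sum. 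In the accurate signal model $\mathcal{N}^C(\theta)=\emptyset$ and your argument is complete, but the theorem is stated for the general signal model, where this case must be handled.
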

	\begin{proof}
		See Appendix \ref{appendix: theorem 3}.
	\end{proof}
	\noindent Theorem \ref{theorem: bound} shows that the error exponent under adaptive social learning is no less than the worst error exponent in the single-agent setup. Therefore, the cooperation among agents is always beneficial for the agent that has the worst learning performance in the non-cooperative scenario. In addition, the best error exponent that can be achieved by the ASL strategy is given by the minimum aggregated quantity $\Phi_{\sum}^{\sf nc}(\theta)$ among all wrong hypotheses $\theta\neq\theta_1$. Since the centralized solution of the adaptive social learning problem is equivalent to a fully connected network, \eqref{eq: bound of error exponent} applies to the centralized case as well.  Furthermore, the upper bound in \eqref{eq: bound of error exponent} satisfies
	\begin{equation}\label{eq: greater than the sum of non-cooperative error exponent}
		\min_{\theta\neq\theta_1}{\Phi}^{\sf nc}_{\sum}(\theta)\triangleq\min_{\theta\neq\theta_1}\sum_{k=1}^N{\Phi}^{\sf nc}_k(\theta)\geq\sum_{k=1}^N\min_{\theta\neq\theta_1}{\Phi}^{\sf nc}_k(\theta),
	\end{equation}  
	which reveals that the cooperation among agents enables each agent in the network to obtain an error exponent that could be even larger than the sum of individual error exponents in the non-cooperative scenario. However, the achievability of the upper bound in \eqref{eq: bound of error exponent} is related to the specific setting of learning tasks. For instance, the  existence of some conflicting agents may lead to a smaller error exponent. We will describe next when this upper bound is achievable and how to reach this upper bound with proper combination policies.
	
	Let $\theta^\dagger$ be the wrong hypothesis\footnote{Here, we have assumed that the set $\Theta^{\sf nc}_{\min}\triangleq\{\theta:\argmin_{\theta\neq\theta_1}\Phi_{\sum}^{\sf nc}(\theta)\}$ is a singleton. If $\Theta^{\sf nc}_{\min}$ contains more than one element, we can repeat our analysis for each element in $\Theta^{\sf nc}_{\min}$.} corresponding to the upper bound in \eqref{eq: bound of error exponent}:
	\begin{equation}\label{eq: theta_dagger}
		\theta^\dagger\triangleq\argmin_{\theta\neq\theta_1}\Phi_{\sum}^{\sf nc}(\theta).
	\end{equation}
	From Theorem \ref{theorem: bound}, we have $\Phi(\pi)\leq \Phi_{\sum}^{\sf nc}(\theta^\dagger)$ for all Perron eigenvectors $\pi\in\Pi$. Therefore, any Perron eigenvector $\pi\in\Pi$ that gives $\Phi(\pi)= \Phi_{\sum}^{\sf nc}(\theta^\dagger)$ must be an optimal solution to the optimization problem in \eqref{eq: objective function}--\eqref{eq: constraint 2}. In view of constraint \eqref{eq: constraint 2}, ${\sf m_{ave}}(\pi,\theta^\dagger)>0$ for any $\pi\in\Pi$. Then, we have 
	\begin{equation}
		\Phi(\pi,\theta^\dagger)\overset{\eqref{eq: rate function}}{=}-\inf_{t\in\mathbb{R}}\phi(t;\pi,\theta^\dagger)=-\phi(t^\ast_\pi;\pi,\theta^\dagger)
	\end{equation}
	for some $t^\ast_\pi<0$ due to properties v) and vi) in Lemma 1. By definitions \eqref{eq: LMGF_ave}, \eqref{eq: phi}, and \eqref{eq: rate function individually}, the following inequality holds:
	\begin{align}
		\nonumber	\Phi(\pi,\theta^\dagger)
		&=-\sum_{k\in\mathcal{N}^C(\theta^\dagger)}{\phi_k^{\sf nc}(\pi_k t^\ast_\pi;\theta)}-\sum_{k\notin\mathcal{N}^C(\theta^\dagger)} \phi_k^{\sf nc}(\pi_k t^\ast_\pi;\theta)\\
		\label{eq: Phi for a feasible pi}
		\overset{\eqref{eq: phi_k conflicting agent}}&{\leq} -\sum_{k\notin\mathcal{N}^C(\theta^\dagger)} \phi_k^{\sf nc}(\pi_k t^\ast_\pi;\theta)\overset{\eqref{eq: error exponent individually}}{\leq} \Phi_{\sum}^{\sf nc}(\theta^\dagger).
	\end{align}	
	Due to the constraint $\pi_k>0,\forall k\in\mathcal{N}$ in \eqref{eq: constraint 1},  the first inequality in \eqref{eq: Phi for a feasible pi} becomes equality if and only if $\mathcal{N}^{C}(\theta^\dagger)=\emptyset$.
	Hence, the upper bound in \eqref{eq: bound of error exponent} cannot be attained if there are some $\theta^\dagger$-conflicting agents in the network. Eqs. \eqref{eq: phi_k conflicting agent} and \eqref{eq: Phi for a feasible pi} illustrate our definition \eqref{eq: conflicting agent} of the $\theta$-conflicting agents. In the following, we discuss the design of optimal Perron eigenvectors for the cases  $\mathcal{N}^C(\theta^\dagger)=\emptyset$ and $\mathcal{N}^C(\theta^\dagger)\neq\emptyset$, respectively.
	\subsubsection{Case 1: $\mathcal{N}^C(\theta^\dagger)=\emptyset$}In this case, the quantity $t_k^{\sf nc}(\theta^\dagger)$ defined in \eqref{eq: t_k in non-cooperative scenario} is negative for all $k\in\mathcal{N}$. With this property, we construct the following candidate Perron eigenvector $\pi^\dagger$:
	\begin{equation}\label{eq: candidate Perron eigenvector-case 1}
		\pi_k^\dagger=\frac{t_k^{\sf nc}(\theta^\dagger)}{\sum_{\ell=1}^N t_\ell^{\sf nc}(\theta^\dagger)},\quad \forall k\in\mathcal{N}.
	\end{equation}
	It is easy to see that $\mathbbm{1}^\top\pi^\dagger=1$ and $\pi^\dagger_k> 0, \forall k\in\mathcal{N}$. Under Perron eigenvector $\pi^\dagger$, the $\theta^\dagger$-related error exponent $\Phi(\pi^\dagger,\theta^\dagger)$  equals to the upper bound in \eqref{eq: bound of error exponent}:
	\begin{align}
		\nonumber
		\Phi_{\sum}^{\sf nc}(\theta^\dagger)\overset{\eqref{eq: bound of each error exponent}}{\geq}\Phi(\pi^\dagger,\theta^\dagger)\overset{\eqref{eq: rate function}}&{=}-\inf_{t\in\mathbb{R}}\phi(t;\pi^\dagger,\theta^\dagger)\\
		\nonumber
		&\geq -\phi\left({\sum\limits_{k=1}^Nt_k^{\sf nc}({\theta^\dagger})};\pi^\dagger,\theta^\dagger\right)\\
		\nonumber
		\overset{\text{(a)}}&{=}-\sum_{k=1}^N \phi_k^{\sf nc}({t_k^{\sf nc}(\theta^\dagger)};\theta^\dagger)\\
		\label{eq: pi_dagger for the upper bound-case 1}
		\overset{\eqref{eq: t_k for informative agent k}}&{=}\sum_{k=1}^N \Phi_k^{\sf nc}(\theta^\dagger)\triangleq\Phi_{\sum}^{\sf nc}(\theta^\dagger)
	\end{align}
	where in (a) we used the definitions given in \eqref{eq: LMGF_ave}, \eqref{eq: phi}, \eqref{eq: rate function individually}, and \eqref{eq: candidate Perron eigenvector-case 1}. From Theorem \ref{theorem: learning performance for the general model}, the error exponent $\Phi(\pi^\dagger)$ under Perron eigenvector $\pi^\dagger$ is determined by the minimum $\theta$-related error exponent $\Phi(\pi^\dagger,\theta)$. If $\pi^\dagger$ is feasible for the truth learning (i.e., $\pi^\dagger\in\Pi$) and satisfies $\Phi(\pi^\dagger,\theta^\dagger)\leq\Phi(\pi^\dagger,\theta)$, $\forall \theta\neq\theta_1$, then we have
	\begin{equation}
		\Phi(\pi^\dagger)\triangleq\min_{\theta\neq\theta_1}\Phi(\pi^\dagger,\theta)=\Phi(\pi^\dagger,\theta^\dagger)\overset{\eqref{eq: pi_dagger for the upper bound-case 1}}{=}\Phi_{\sum}^{\sf nc}(\theta^\dagger),
	\end{equation}
	which proves that $\pi^\dagger$ is an optimal solution. Let $\Pi_1$ denote the following set of Perron eigenvectors:
	\begin{equation}\label{eq: Pi_1-necessary conditions}
		\Pi_1=\Big\{ \pi\in\Pi:\Phi(\pi,\theta^\dagger)\leq\Phi(\pi,\theta),\forall \theta\neq\theta_1 \Big\}.
	\end{equation}
	In the following theorem, we formally establish the achievability of the upper bound in \eqref{eq: bound of error exponent} for the error exponent maximization problem in \eqref{eq: objective function}--\eqref{eq: constraint 2}, and characterize the optimal Perron eigenvectors corresponding to this upper bound.	 
	\theoremgroup
	\begin{theoremgrp}[\textbf{Optimal Perron eigenvector}]\label{theorem: optimal Perron eigenvector}
		Consider $\mathcal{N}^C(\theta^\dagger)=\emptyset$. Let $\Phi^\star$ be the maximum error exponent in the optimization problem \eqref{eq: objective function}--\eqref{eq: constraint 2} and $\Pi^\star$ be the set of optimal Perron eigenvectors. Define
		\begin{align}
			\label{eq: Pi_2-case 1}
			\Pi_2=&\Big\{ \pi:\pi_k=\alpha \pi_k^\dagger,\forall k\notin\mathcal{N}^{ U}(\theta^\dagger),\forall \alpha >0\Big\},\\
			\label{eq: optimal set-case 1}
			\Pi^\dagger=&\Pi_1\cap\Pi_2.
		\end{align}
		If $\Pi^\dagger\neq\emptyset$, then the upper bound of the error exponent can be achieved for the given learning task, i.e.,  $\Phi^\star=\Phi_{\sum}^{\sf nc}(\theta^\dagger)$, and the corresponding optimal set is given by $\Pi^\star=\Pi^\dagger$.
		Otherwise, we have $\Phi^\star<\Phi_{\sum}^{\sf nc}(\theta^\dagger)$.
	\end{theoremgrp}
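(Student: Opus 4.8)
The plan is to reduce the whole claim to a clean \emph{per-hypothesis} characterization of when the $\theta^\dagger$-related exponent saturates its upper bound, and then to stitch this together with the definition $\Phi(\pi)=\min_{\theta\neq\theta_1}\Phi(\pi,\theta)$. The algebraic backbone is cheap; the genuine difficulty lives in the ``otherwise'' clause.

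First I would record the additive identity that underlies everything: starting from \eqref{eq: LMGF_ave} and performing the change of variables $s=\pi_k\tau$ in each summand of \eqref{eq: phi}, one obtains
\begin{equation}
  \phi(t;\pi,\theta)=\sum_{k=1}^N\phi_k^{\sf nc}(\pi_k t;\theta),
\end{equation}
so that $\Phi(\pi,\theta)=-\inf_{t\in\mathbb{R}}\sum_{k}\phi_k^{\sf nc}(\pi_k t;\theta)$. Minimizing term by term gives $\inf_t\sum_k\phi_k^{\sf nc}(\pi_k t;\theta)\geq\sum_k\inf_s\phi_k^{\sf nc}(s;\theta)=-\Phi_{\sum}^{\sf nc}(\theta)$, which re-derives the upper bound in \eqref{eq: bound of each error exponent}. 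The decisive step is the \emph{equality} analysis for $\theta=\theta^\dagger$, where the infimum is attained at some $t^\ast<0$: equality forces this single $t^\ast$ to realize the per-agent infimum for every agent whose $\phi_k^{\sf nc}(\cdot;\theta^\dagger)$ has a nontrivial minimizer, i.e. $\pi_k t^\ast=t_k^{\sf nc}(\theta^\dagger)$. Since $\mathcal{N}^C(\theta^\dagger)=\emptyset$, the only such agents are the $\theta^\dagger$-informative ones, for which the minimizer $t_k^{\sf nc}(\theta^\dagger)<0$ is \emph{unique} (Appendix~\ref{appendix: theorem 3}); the uninformative agents have $\phi_k^{\sf nc}(\cdot;\theta^\dagger)\equiv0$ and impose no constraint. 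Hence $\Phi(\pi,\theta^\dagger)=\Phi_{\sum}^{\sf nc}(\theta^\dagger)$ if and only if $\pi_k=\alpha\pi_k^\dagger$ for all $k\notin\mathcal{N}^U(\theta^\dagger)$ with $\alpha=\big(\sum_\ell t_\ell^{\sf nc}(\theta^\dagger)\big)/t^\ast>0$, that is, if and only if $\pi\in\Pi_2$.

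With this in hand the first part of the statement follows by combining two inequalities. For any feasible $\pi\in\Pi$,
\begin{equation}
  \Phi(\pi)=\min_{\theta\neq\theta_1}\Phi(\pi,\theta)\le\Phi(\pi,\theta^\dagger)\le\Phi_{\sum}^{\sf nc}(\theta^\dagger),
\end{equation}
and $\Phi(\pi)=\Phi_{\sum}^{\sf nc}(\theta^\dagger)$ forces both inequalities to be equalities. The second equality is exactly $\pi\in\Pi_2$ by the step above, while the first says $\theta^\dagger$ attains the minimum, i.e. $\Phi(\pi,\theta^\dagger)\le\Phi(\pi,\theta)$ for all $\theta$, which together with feasibility is precisely $\pi\in\Pi_1$. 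Therefore $\{\pi\in\Pi:\Phi(\pi)=\Phi_{\sum}^{\sf nc}(\theta^\dagger)\}=\Pi_1\cap\Pi_2=\Pi^\dagger$. If $\Pi^\dagger\neq\emptyset$, this set is nonempty, so the value $\Phi_{\sum}^{\sf nc}(\theta^\dagger)$ is attained by feasible points; since it also upper-bounds $\Phi$ over all of $\Pi$, we conclude $\Phi^\star=\Phi_{\sum}^{\sf nc}(\theta^\dagger)$ and $\Pi^\star=\Pi^\dagger$.

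The hard part is the ``otherwise'' clause: when $\Pi^\dagger=\emptyset$ I must upgrade the mere non-attainability of $\Phi_{\sum}^{\sf nc}(\theta^\dagger)$ to the strict statement $\Phi^\star<\Phi_{\sum}^{\sf nc}(\theta^\dagger)$, ruling out that the supremum is only approached. The argument I would use is a continuity/compactness one: each $\Phi(\cdot,\theta)$ is continuous on the simplex, so a maximizing sequence $\pi^{(n)}$ with $\Phi(\pi^{(n)})\to\Phi_{\sum}^{\sf nc}(\theta^\dagger)$ would, along a convergent subsequence, produce a limit $\bar\pi$ with $\Phi(\bar\pi,\theta)=\Phi_{\sum}^{\sf nc}(\theta^\dagger)$ for every $\theta$. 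Such a $\bar\pi$ satisfies the proportionality constraint (so $\bar\pi\in\Pi_2$) and has $\theta^\dagger$ minimizing; a limit on the boundary of $\Pi$ is excluded because there some constraint \eqref{eq: constraint 2} degenerates, ${\sf m_{ave}}(\bar\pi,\theta)=0$, and the corresponding exponent collapses to $0$, which is below the (positive) bound. Hence $\bar\pi\in\Pi_1\cap\Pi_2=\Pi^\dagger$, contradicting $\Pi^\dagger=\emptyset$, so the supremum is strictly below $\Phi_{\sum}^{\sf nc}(\theta^\dagger)$. This boundary/attainment analysis, rather than the algebra of the earlier steps, is where the real care is needed.
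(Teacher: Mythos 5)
Your main argument --- the additive decomposition $\phi(t;\pi,\theta)=\sum_k\phi_k^{\sf nc}(\pi_k t;\theta)$, the term-by-term lower bound on the infimum, and the observation that equality forces a single $t^\ast$ to realize every informative agent's unique minimizer, hence the proportionality $\pi_k t^\ast=t_k^{\sf nc}(\theta^\dagger)$ --- is exactly the paper's route (Appendix C.1 proves sufficiency via \eqref{eq: optimality of pi}--\eqref{eq: sufficient condition} and necessity via the strict-convexity argument in \eqref{eq: relation inequality}--\eqref{eq: uniqueness of pi_dagger}), and your packaging of the two necessary conditions as ``$\theta^\dagger$ attains the min'' ($\Pi_1$) and ``proportionality holds'' ($\Pi_2$) matches the paper's \eqref{eq: necessary 4} and \eqref{eq: necessary 2}. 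One small imprecision: your identity $\sum_k\inf_s\phi_k^{\sf nc}(s;\theta)=-\Phi_{\sum}^{\sf nc}(\theta)$ fails for hypotheses with conflicting agents, since for $k\in\mathcal{N}^C(\theta)$ the unconstrained infimum of $\phi_k^{\sf nc}(\cdot;\theta)$ is attained at a \emph{positive} argument and is strictly negative, whereas $\Phi_k^{\sf nc}(\theta)=0$; so your parenthetical claim of re-deriving \eqref{eq: bound of each error exponent} in full generality is not right (the paper restricts to $t<0$ and drops the conflicting terms). This does not affect the theorem at hand because $\mathcal{N}^C(\theta^\dagger)=\emptyset$ by hypothesis.

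The genuine issue is in your ``otherwise'' clause. You correctly identify that something must be said there, and in fact you attempt a \emph{stronger} statement than the paper proves: the paper treats $\Phi^\star$ as an attained maximum, for which $\Phi^\star<\Phi_{\sum}^{\sf nc}(\theta^\dagger)$ is immediate once the bound is shown unattainable, whereas you try to rule out that the supremum is merely approached. Your compactness argument, however, only excludes limit points at which a constraint \eqref{eq: constraint 2} degenerates. The other part of the boundary of $\Pi$ is where some positivity constraint in \eqref{eq: constraint 1} degenerates. A maximizing sequence could converge to a $\bar\pi$ with $\bar\pi_k=0$ for some $k\in\mathcal{N}^U(\theta^\dagger)$ while ${\sf m_{ave}}(\bar\pi,\theta)>0$ for every $\theta\neq\theta_1$: the equality analysis forces $\bar\pi_k>0$ only for the $\theta^\dagger$-\emph{informative} agents (a vanishing weight on an informative agent makes its term miss its infimum), so the uninformative coordinates may vanish in the limit. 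Such a $\bar\pi$ satisfies the proportionality and the ordering inequalities but is not an element of $\Pi$, hence not of $\Pi_1$, and no contradiction with $\Pi^\dagger=\emptyset$ follows. Closing this would require perturbing $\bar\pi$ back into the interior by redistributing mass onto the $\mathcal{N}^U(\theta^\dagger)$ coordinates, which is delicate because those agents may be informative or conflicting for other hypotheses, and the $\Pi_1$ inequalities may hold with equality at $\bar\pi$. Either restrict the claim to the attained-maximum reading (as the paper implicitly does) or supply this perturbation step.
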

	\begin{proof}
		see Appendix \ref{appendix: theorem 3a}.
	\end{proof}
	\noindent Theorem \ref{theorem: optimal Perron eigenvector} asserts that if the upper bound of the error exponent is achievable for the given learning task, the optimal Perron eigenvectors can be derived with $t_k^{\sf nc}(\theta^\dagger)$ following \eqref{eq: candidate Perron eigenvector-case 1}. Since $t_k^{\sf nc}(\theta^\dagger)$ is unique for all $k\in\mathcal{N}^I(\theta^\dagger)$, we have $\Pi_2=\{\pi^\dagger\}$ if $\mathcal{N}^{{U}}(\theta^\dagger)=\emptyset$. In this circumstance, $\Pi^\dagger$ is either a singleton, i.e., $\Pi^\dagger=\{\pi^\dagger\}$, or an empty set. The definition of $\Pi_2$ in \eqref{eq: Pi_2-case 1} reveals a basic feature of the optimal Perron eigenvectors: the centralities of $\theta^\dagger$-informative agents should be distributed in a proportional manner that depends on the values of $t_k^{\sf nc}(\theta^\dagger)$. For example, let us consider a learning task with $N=3$, $\mathcal{N}^{ I}(\theta^\dagger)=\{1,2\}$, $\mathcal{N}^{U}(\theta^\dagger)=\{3\}$ and $t_1^{\sf nc}(\theta^\dagger)=2t_2^{\sf nc}(\theta^\dagger)$. Assume $\Pi^\dagger\neq\emptyset$ for the given learning task, then the conditions $\pi^\star_1=2\pi^\star_2$ and $\pi^\star_3=1-3\pi_2^\star$ must be satisfied for any optimal solution $\pi^\star\in\Pi^\star$. This implies that the optimality of a Perron eigenvector requires keeping a balance among the information from the $\theta^\dagger$-informative agents.
	\subsubsection{Case 2: $\mathcal{N}^C(\theta^\dagger)\neq\emptyset$} From \eqref{eq: Phi for a feasible pi}, we already know that the upper bound of the error exponent cannot be attained in this case. Since $\phi(t;\pi,\theta)$ defined in \eqref{eq: phi} is a weighted quantity of the individual $\phi_{k}^{\sf nc}(t;\theta)$, in principle, we would set $\pi_k=0$ for all $k\in\mathcal{N}^C(\theta^\dagger)$ to improve the learning accuracy. However, a zero centrality of an agent means that the information from this agent cannot spread over the network, which violates the assumption of the strongly connected communication network (Assumption \ref{asmp: assumption 1}).\footnote{Consider the communication network after removing all $\theta^\dagger$-conflicting agents. If it is still strongly connected, then Theorems \ref{theorem: learning performance for the general model}--\ref{theorem: optimal Perron eigenvector} can be applied to this smaller network. Since $\Phi_k^{\sf nc}(\theta^\dagger)=0,\forall k\in\mathcal{N}^C(\theta^\dagger)$, the upper bound of the error exponent in this smaller network is still $\Phi_{ \sum}^{\sf nc}(\theta^\dagger)$.} Therefore, only combination policies that deliver an error exponent close to the upper bound can be pursued. For a small $\epsilon>0$, we say that a Perron eigenvector $\pi$ is $\epsilon$-optimal if the difference between the corresponding error exponent $\Phi(\pi)$  and the upper bound in \eqref{eq: bound of error exponent} is not larger than $\epsilon$:
	\begin{equation}\label{eq: 1-epsilon optimal}
		\Phi(\pi)\geq \Phi_{\sum}^{\sf nc}(\theta^\dagger)-\epsilon.
	\end{equation}  
	Next, we proceed to derive the $\epsilon$-optimal Perron eigenvectors. Let $\abs{\mathcal{N}^C(\theta^\dagger)}$ be the number of $\theta^\dagger$-conflicting agents, then for any given $\epsilon>0$, we define $t_\epsilon$ as 
	\begin{equation}\label{eq: t_epsilon}
		t_\epsilon=\inf\Bigg\{t<0:\phi_k^{\sf nc}(t;\theta^\dagger)\leq \frac{\epsilon}{\abs{\mathcal{N}^C(\theta^\dagger)}},\forall k\in\mathcal{N}^C(\theta^\dagger)\Bigg\}.
	\end{equation}		
	This yields
	\begin{equation}\label{eq: sum of error exponents related to conflicting agents}
		0<\sum_{k\in\mathcal{N}^C(\theta^\dagger)}\phi_k^{\sf nc}(t_\epsilon;\theta^\dagger)\leq\epsilon.
	\end{equation} 
	In view of \eqref{eq: phi_k conflicting agent}, $\phi_k^{\sf nc}(t;\theta^\dagger)>0,\forall t<0$ for all $\theta^\dagger$-conflicting agents. By definition \eqref{eq: rate function individually}, $t_\epsilon$ converges to 0 as $\epsilon$ approaches 0. Furthermore, we define 
	\begin{equation}\label{eq: new t_k}
		\widetilde{t}_k^{\sf nc}(\epsilon)=\begin{cases}
			t_k^{\sf nc}(\theta^\dagger), &\text{if }k\notin\mathcal{N}^{C}(\theta^\dagger),\\
			t_\epsilon, & \text{if }k\in\mathcal{N}^{C}(\theta^\dagger),
		\end{cases}
	\end{equation}
	then $\widetilde{t}_k^{\sf nc}(\epsilon)<0$ for all $k\in\mathcal{N}$. Similar to \eqref{eq: candidate Perron eigenvector-case 1}, we construct the following candidate Perron eigenvector $\pi^\dagger(\epsilon)$ using $\widetilde{t}_k^{\sf nc}(\epsilon)$:
	\begin{equation}\label{eq: candidate Perron eigenvector}
		\pi_k^\dagger(\epsilon)=\frac{\widetilde{t}_k^{\sf nc}(\epsilon)}{\sum_{\ell=1}^N \widetilde{t}_\ell^{\sf nc}(\epsilon)},\quad \forall k\in\mathcal{N}.
	\end{equation}
	Following the same analytical steps employed in the case $\mathcal{N}^C({\theta^\dagger})=\emptyset$, we can prove that the $\theta^\dagger$-related error exponent under Perron eigenvector $\pi^{\dagger}({\epsilon})$ satisfies
	\begin{equation}
		\Phi(\pi^\dagger(\epsilon),\theta^\dagger)\geq \Phi_{\sum}^{\sf nc}(\theta^\dagger)-\epsilon.
	\end{equation}
	Hence, $\pi^\dagger(\epsilon)$ is an $\epsilon$-optimal Perron eigenvector if it satisfies the conditions given by set $\Pi_1$ in \eqref{eq: Pi_1-necessary conditions}. Likewise, we can establish the following theorem for the $\epsilon$-optimal Perron eigenvectors.
	\begin{theoremgrp}[\textbf{$\epsilon$-optimal Perron eigenvector}]\label{theorem: epsilon-optimal Perron eigenvector}
		Consider $\mathcal{N}^C(\theta^\dagger)\neq\emptyset$. For a given small $\epsilon>0$, we define the sets:
		\begin{align}
			\label{eq: Pi_2-case 2}
			\Pi_2(\epsilon)&=\Big\{\pi: \pi_k=\alpha \pi_k^\dagger(\epsilon),\forall k\notin\mathcal{N}^{ U}(\theta^\dagger),\forall \alpha >0\Big\},\\
			\label{eq: optimal set-case 2}
			\Pi^\dagger_\epsilon&=\Pi_1\cap\Pi_2(\epsilon).
		\end{align}
		If $\Pi^\dagger_\epsilon \neq\emptyset$, then any Perron eigenvector $\pi\in\Pi_\epsilon^\dagger$ is $\epsilon$-optimal.
	\end{theoremgrp}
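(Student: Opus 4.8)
The plan is to reduce the claim to the single candidate $\pi^\dagger(\epsilon)$ already constructed in \eqref{eq: candidate Perron eigenvector}, and then exploit two structural facts: that the $\theta^\dagger$-related error exponent depends on $\pi$ only through its entries on the non-$\theta^\dagger$-uninformative agents and is invariant under a positive scaling of those entries, and that membership in $\Pi_1$ promotes the $\theta^\dagger$-related exponent to the overall error exponent. First I would record the decomposition identity
$\phi(t;\pi,\theta)=\sum_{k=1}^N\phi_k^{\sf nc}(\pi_k t;\theta)$,
obtained from \eqref{eq: phi} and $\Lambda_{\sf ave}(\tau;\pi,\theta)=\sum_k\Lambda_k(\pi_k\tau;\theta)$ in \eqref{eq: LMGF_ave} by the change of variables $u=\pi_k\tau$ inside each summand. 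Because $\Lambda_k(\cdot;\theta^\dagger)\equiv 0$ for every $k\in\mathcal{N}^U(\theta^\dagger)$, the uninformative agents contribute nothing, so $\phi(t;\pi,\theta^\dagger)=\sum_{k\notin\mathcal{N}^U(\theta^\dagger)}\phi_k^{\sf nc}(\pi_k t;\theta^\dagger)$, which is the observation that drives the whole argument.

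Next I would verify the anchor inequality for $\pi^\dagger(\epsilon)$ itself. Setting $S\triangleq\sum_{\ell}\widetilde{t}_\ell^{\sf nc}(\epsilon)<0$ gives $\pi_k^\dagger(\epsilon)\,S=\widetilde{t}_k^{\sf nc}(\epsilon)$, so evaluating the decomposition at $t=S$ yields $\phi(S;\pi^\dagger(\epsilon),\theta^\dagger)=\sum_{k\notin\mathcal{N}^C(\theta^\dagger)}\phi_k^{\sf nc}(t_k^{\sf nc}(\theta^\dagger);\theta^\dagger)+\sum_{k\in\mathcal{N}^C(\theta^\dagger)}\phi_k^{\sf nc}(t_\epsilon;\theta^\dagger)$. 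By \eqref{eq: t_k for informative agent k} the first sum equals $-\sum_{k\notin\mathcal{N}^C(\theta^\dagger)}\Phi_k^{\sf nc}(\theta^\dagger)$, and since $\Phi_k^{\sf nc}(\theta^\dagger)=0$ on the uninformative and conflicting agents this is exactly $-\Phi_{\sum}^{\sf nc}(\theta^\dagger)$; the second sum is at most $\epsilon$ by \eqref{eq: sum of error exponents related to conflicting agents}. Since $\Phi(\pi^\dagger(\epsilon),\theta^\dagger)=-\inf_{t}\phi(t;\pi^\dagger(\epsilon),\theta^\dagger)\geq -\phi(S;\pi^\dagger(\epsilon),\theta^\dagger)$, this delivers $\Phi(\pi^\dagger(\epsilon),\theta^\dagger)\geq\Phi_{\sum}^{\sf nc}(\theta^\dagger)-\epsilon$, recovering the intermediate bound already asserted before the theorem.

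Then I would extend from $\pi^\dagger(\epsilon)$ to the whole set $\Pi_2(\epsilon)$. For $\pi\in\Pi_2(\epsilon)$ we have $\pi_k=\alpha\pi_k^\dagger(\epsilon)$ for all $k\notin\mathcal{N}^U(\theta^\dagger)$, so the truncated decomposition gives $\phi(t;\pi,\theta^\dagger)=\sum_{k\notin\mathcal{N}^U(\theta^\dagger)}\phi_k^{\sf nc}(\alpha\pi_k^\dagger(\epsilon)t;\theta^\dagger)=\phi(\alpha t;\pi^\dagger(\epsilon),\theta^\dagger)$. Because $\alpha>0$ the map $t\mapsto\alpha t$ is a bijection of $\mathbb{R}$, hence $\inf_t\phi(t;\pi,\theta^\dagger)=\inf_s\phi(s;\pi^\dagger(\epsilon),\theta^\dagger)$ and therefore $\Phi(\pi,\theta^\dagger)=\Phi(\pi^\dagger(\epsilon),\theta^\dagger)\geq\Phi_{\sum}^{\sf nc}(\theta^\dagger)-\epsilon$. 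Finally, for $\pi\in\Pi^\dagger_\epsilon=\Pi_1\cap\Pi_2(\epsilon)$ the defining property of $\Pi_1$ gives $\Phi(\pi)=\min_{\theta\neq\theta_1}\Phi(\pi,\theta)=\Phi(\pi,\theta^\dagger)\geq\Phi_{\sum}^{\sf nc}(\theta^\dagger)-\epsilon$, which is precisely $\epsilon$-optimality in the sense of \eqref{eq: 1-epsilon optimal}; feasibility $\pi\in\Pi$ is inherited automatically since $\Pi_1\subseteq\Pi$.

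The main obstacle, or rather the step requiring the most care, is the scaling invariance in the third paragraph: it works only because the uninformative coordinates drop out of $\phi(\cdot;\pi,\theta^\dagger)$ entirely, leaving just the informative and conflicting coordinates, which are scaled by a single common factor $\alpha$ that can be absorbed into the integration variable. I would also make explicit that $\pi^\dagger(\epsilon)$ is a legitimate Perron eigenvector — positive entries summing to one — which holds because every $\widetilde{t}_k^{\sf nc}(\epsilon)<0$, and confirm that the uninformative entries of a general $\pi\in\Pi_2(\epsilon)$, being unconstrained there, never affect $\Phi(\pi,\theta^\dagger)$.
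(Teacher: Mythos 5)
Your proposal is correct and follows essentially the same route as the paper, which proves this theorem by repeating the argument of \eqref{eq: optimality of pi} and \eqref{eq: sufficient condition}: decompose $\phi(t;\pi,\theta^\dagger)$ into the individual $\phi_k^{\sf nc}(\pi_k t;\theta^\dagger)$ with the uninformative agents dropping out, evaluate at the point where $\pi_k t$ hits $\widetilde{t}_k^{\sf nc}(\epsilon)$ (absorbing the scale factor $\alpha$), bound the conflicting agents' contribution by $\epsilon$ via \eqref{eq: sum of error exponents related to conflicting agents}, and use membership in $\Pi_1$ to identify $\Phi(\pi)$ with $\Phi(\pi,\theta^\dagger)$. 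Your observation that the scaling gives full equality of the infima, rather than just the one-point lower bound the paper uses, is a minor (correct) refinement of the same idea.
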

	\begin{proof}
		See Appendix \ref{appendix: theorem 3b}.
	\end{proof}
	\noindent Theorems \ref{theorem: optimal Perron eigenvector} and \ref{theorem: epsilon-optimal Perron eigenvector} describe the optimal Perron eigenvectors of combination policies that deliver error exponents which are, respectively, equal to the upper bound or close enough to it.  To illustrate our conclusions in this part, we will discuss next some interesting cases within the framework of adaptive social learning. 
	
	\subsection{Interesting cases}\label{subsec: interesting cases}
	In the following, we consider three learning cases: 
	\begin{enumerate}
		\item[1)] \emph{Distributed detection}: The local likelihood models are identical for all agents and the  local observations ${\bm{\xi}}_{k,i}$ are i.i.d., which is often assumed in the distributed detection problem \cite{Bajovic2011:Distributed,Bajovic2012:Large,Matta:2016kh,Matta:2016cb}. 
		\item[2)]\emph{Social learning with accurate signal model}: The local likelihood models are accurate, i.e., $f_k=L_k(\cdot|\theta_1)$ for all agents, which is often considered in the social learning literature (e.g.,  \cite{Jadbabaie:2012ii,Zhao:2013wa,Salami:2017jf,Lalitha:2018ej,Mitra:2020fi,Bordignon:2020wx,Shahrampour:2016kk,jadbabaie2013information}). 
		\item[3)]\emph{Social learning under Gaussian noises}: The shift-in-mean Gaussian model \cite{Matta:2016kh} with zero-mean Gaussian noises is considered to test the impact of noises on the optimal Perron eigenvector.
	\end{enumerate}
	
	\begin{Corollary }[\textbf{Distributed detection}]\label{Corollary  1} 
		If $\{f_k,L_k(\cdot|\theta)\}$ are identical for all agents, then the uniform Perron eigenvector $\frac{1}{N}\mathbbm{1}$ is the unique optimal solution to the error exponent optimization problem. The corresponding optimal error exponent has an $N$-fold improvement in comparison to that in the non-cooperative scenario, i.e.,  ${\Phi}^\star=\Phi_{\sum}^{\sf nc}(\theta^\dagger)=N{\Phi}_k^{\sf nc}(\theta^\dagger)$. 
	\end{Corollary }
	\begin{proof}
		Since all agents have the same signal and likelihood models, ${\sf m_{ave}}(\pi,\theta)=d_k(\theta),\forall\theta\in\Theta$ holds for any Perron eigenvector $\pi$. In view of Assumption \ref{asmp: assumption 3}, we have $d_k(\theta)>0$ for all $k\in\mathcal{N}$ and thus $\mathcal{N}^{ I}(\theta)=\mathcal{N},\forall\theta\neq\theta_1$. Moreover, it is clear that $\Phi_k^{\sf nc}(\theta)$ and $t_k^{\sf nc}(\theta)$ will be identical for all agents. Then, the candidate Perron eigenvector $\pi^\dagger$ is given by $\pi^\dagger=\frac{1}{N}\mathbbm{1}$ according to \eqref{eq: candidate Perron eigenvector-case 1}. Since $\mathcal{N}^{U}(\theta^\dagger)=\emptyset$, $\Pi _2$ is a singleton with $\Pi_2=\{\frac{1}{N}\mathbbm{1}\}$. Moreover, it is easy to derive $\Phi(\frac{1}{N}\mathbbm{1},\theta)=N\Phi_k^{\sf nc}(\theta)$ for all $\theta\neq\theta_1$. By the definition of $\theta^\dagger$, we know $\Phi(\frac{1}{N}\mathbbm{1},\theta^\dagger)<\Phi(\frac{1}{N}\mathbbm{1},\theta),\forall\theta\neq\theta_1$ and thus $\frac{1}{N}\mathbbm{1}$ belongs to set $\Pi_1$. This guarantees that the set $\Pi^\dagger$ in \eqref{eq: optimal set-case 1} is not empty with $\Pi^\dagger=\{\frac{1}{N}\mathbbm{1}\}$. The claim follows thereby. 
	\end{proof}
	We note that Corollary \ref{Corollary  1} actually answers the question posed in \cite{Matta:2016cb}, regarding the optimal combination policy. In the simulation part of \cite{Matta:2016cb}, the authors had provided an important intuitive answer for this question with a ``first-order analysis''. They claimed that doubly-stochastic combination matrices may be preferred, while this statement remains to be theoretically proved. Our results now establish formally that their intuition about this non-trivial problem was correct. 
	
	\begin{Corollary }[\textbf{Social learning with accurate signal model}]\label{Corollary  2}
		Assume the accurate signal model \eqref{eq: likelihood of an observation}--\eqref{eq: accurate likelihood model}, then the upper bound of the error exponent can be achieved by the uniform Perron eigenvector. That is, $\Phi^\star=\Phi_{\sum}^{\sf nc}(\theta^\dagger)$ and $\frac{1}{N}\mathbbm{1}\in\Pi^\star$.
	\end{Corollary }
	\begin{proof}
		When $f_k=L_k(\cdot|\theta_1)$ for all agents $k\in\mathcal{N}$, the expectation $d_k(\theta)$ becomes the KL divergence between $L_k(\cdot|\theta_1)$ and $L_k(\cdot|\theta)$. For any wrong hypothesis $\theta\neq\theta_1$, due to the non-negativeness of KL divergences, either $d_k(\theta)>0$ or $\Lambda_{k}(t;\theta)\equiv 0$ for all $k\in\mathcal{N}$. This yields $\mathcal{N}^{{C}}(\theta)=\emptyset$. Under Assumption \ref{asmp: assumption 3}, we obtain
		\begin{equation}\label{eq: mave >0 in Corollary 2}
			{\sf m_{ave}}(\pi,\theta)>0,\quad \forall\theta\neq\theta_1,
		\end{equation}
		for any Perron eigenvector $\pi$. In addition, from property ii) in Lemma \ref{lemma:properties}, we have 
		\begin{equation}\label{eq: corollary 2 t_k}
			t_k^{\sf nc}(\theta)=-1,\quad \forall k\in\mathcal{N}^{{I}}(\theta),
		\end{equation}
		for all $\theta\neq\theta_1$. Although different $\theta$-informative agents may possess distinct likelihood models, which would endow those agents with different learning abilities when working in a non-cooperative way, we still obtain an identical $t_k^{\sf nc}(\theta)$  for all $\theta$-informative agents. Hence, it is clear that the uniform Perron eigenvector $\frac{1}{N}\mathbbm{1}$ belongs to $\Pi_2$.
		
		An important consequence of \eqref{eq: corollary 2 t_k} is that for any hypothesis $\theta\neq\theta_1$, the uniform Perron eigenvector $\frac{1}{N}\mathbbm{1}$ corresponds to the upper bound of the $\theta$-related error exponent $\Phi(\pi,\theta)$ given in Theorem \ref{theorem: bound}. That is,
		\begin{equation}\label{eq: achieves all upper bounds in Corollary 2}
			\Phi(\frac{1}{N}\mathbbm{1},\theta)=\Phi_{\sum}^{\sf nc}(\theta),\quad \forall \theta\neq\theta_1.
		\end{equation}
		Then by the definition of $\theta^\dagger$ in \eqref{eq: theta_dagger}, we obtain $\Phi(\frac{1}{N}\mathbbm{1},\theta^\dagger)<\Phi(\frac{1}{N}\mathbbm{1},\theta)$ for all $\theta\neq\theta_1$. Moreover, since ${\sf m_{ave}}(\pi,\theta)>0$ in \eqref{eq: mave >0 in Corollary 2}, we know that $\frac{1}{N}\mathbbm{1}\in\Pi_1$. Therefore, $\Pi^\dagger$ is not empty and $\frac{1}{N}\mathbbm{1}\in\Pi^\dagger$ is an optimal solution to the error exponent maximization problem. 
	\end{proof}
	Since the uniform Perron eigenvector corresponds to a doubly-stochastic combination policy, we can conclude from Corollary \ref{Corollary  2} that any doubly-stochastic combination policy will be optimal for the social learning tasks with accurate signal model \eqref{eq: likelihood of an observation}--\eqref{eq: accurate likelihood model}.  Importantly, this result is in contrast to the analogous results in the context of distributed optimization \cite{sayed2014adaptation,sayed2014adaptive}, when the agents have access to data of varying quality. This is because, in the adaptive social learning problem, the agents want to learn an optimal decision from the received data, rather than a specific parameter.  From \eqref{eq: error exponent expression}, the performance of learning is determined by the distribution of $\bm{x}_{{\sf ave},i}(\theta)$, which captures the information of local observations across the network. When the signal model is accurate, the log-likelihood ratio $\bm{x}_{k,i}(\theta)$ provides the full information of each piece of independent observation $\bm{\xi}_{k,i}$ for the decision-making task. Therefore, the optimality of a uniform Perron eigenvector is expected. The conclusion here has been demonstrated in the conference version \cite{Ping2022:Optimal} of this paper. 
	\begin{Corollary }[\textbf{Social learning under Gaussian noises}]\label{Corollary  3}
		Consider the canonical shift-in-mean Gaussian problem \cite{Matta:2016kh} where the local likelihood models of each agent $k$ are given by a family of Gaussian distributions with different means:
		\begin{equation}\label{eq: mean-shift Gaussian model}
			L_k(\cdot|\theta)= {\sf N}({\sf m}_k(\theta),\sigma_k^2),\quad\forall \theta\in\Theta.
		\end{equation}
		Assume that the measurements $\bm{\xi}_{k,i}$ are corrupted by some noise $\bm{n}_{k,i}$ following a zero-mean Gaussian model ${\sf N}(0,\upsilon^2_k)$, and define the ratio between $\upsilon_k^2$ and $\sigma_k^2$ as the noise level $\varepsilon_k$ at agent $k$:
		\begin{equation}
			\varepsilon_k\triangleq\frac{\upsilon_k^2}{\sigma_k^2}.
		\end{equation}
		An optimal Perron eigenvector $\pi^\star\in\Pi^\star$ for this learning task \eqref{eq: mean-shift Gaussian model} is given by
		\begin{equation}\label{eq: optimal pi for noisy Gaussian model}
			\pi^\star_k=\frac{(1+\varepsilon_k)^{-1}}{\sum_{\ell=1}^N(1+\varepsilon_\ell)^{-1}},\quad\forall k\in\mathcal{N}.
		\end{equation}
		If all agents are $\theta^\dagger$-informative, then $\pi^\star$ is unique, i.e., $\Pi^\star=\{\pi^\star\}$. The corresponding maximum error exponent $\Phi^\star$ is given by
		\begin{equation}\label{eq: optimal Phi for noisy Gaussian model}
			{\Phi}^\star=\Phi_{\sum}^{\sf nc}(\theta^\dagger)\triangleq\min_{\theta\neq\theta_1}\sum_{k=1}^N\frac{\left({\sf m}_k(\theta_1)-{\sf m}_k(\theta)\right)^2}{4\sigma_k^2(1+\varepsilon_k)}.
		\end{equation}
	\end{Corollary }
	\begin{proof}
		See Appendix \ref{appendix: corollary 3}. 
	\end{proof}
	In this noisy environment, the log-likelihood ratio $\bm{x}_{k,i}(\theta)$ is calculated based on a perturbed signal. From \eqref{eq: optimal pi for noisy Gaussian model}, the optimal centrality of agents is determined by the quality of their observations. To obtain better steady-state learning performance, an agent with a lower noise level should be placed in a more centralized position such that it receives more effective attentions from other agents. Moreover, $\pi^\star=\frac{1}{N}\mathbbm{1}$ is obtained in the noiseless environment, which is consistent with Corollary \ref{Corollary  2}. An adverse impact of the noisy observations on the steady-state learning performance is captured by \eqref{eq: optimal Phi for noisy Gaussian model}.
	
	\subsection{Practical aspects on the design of combination policies}
	In this section, we discuss some practical issues related to designing the combination policy in real-world systems. As demonstrated in Sections \ref{subsec: preliminary definitions} and \ref{subsec: general results}, developing an optimal combination policy relies on first finding an optimal Perron eigenvector $\pi^\star$ and then constructing the combination matrix with the given Perron eigenvector $\pi^\star$. Therefore, one critical aspect concerns learning the optimal Perron eigenvector $\pi^\star$. 
	
	In Section \ref{subsec: interesting cases}, we discussed some interesting cases where the explicit expression for $\pi^\star$ is obtained. Nevertheless, a closed form for $\pi^\star$ is not available in general cases. According to Theorem \ref{theorem: optimal Perron eigenvector}, $\pi^\star$ is defined by the critical value $t_k^{\sf nc}(\theta)$ associated with each agent $k$. From \eqref{eq: t_k for informative agent k}, $t_k^{\sf nc}(\theta)$ is determined by solving an equation involving the LMGF $\Lambda_k(t;\theta)$. By definition \eqref{eq: LMGF}, $\Lambda_k(t;\theta)$ is characterized by the signal model $f_k$, which is unknown to the agents in practice. This raises the question of how to estimate $t_k^{\sf nc}(\theta)$ in social learning.
	\subsubsection{Estimation of $t_k^{\sf nc}(\theta)$}
	\label{sec: estimation of t_k}
			According to \eqref{eq: t_k in non-cooperative scenario}, the critical value $t_k^{\sf nc}(\theta)$ for $k\notin\mathcal{N}^I(\theta)$ can be obtained once we know the type (i.e., uninformative, informative, or conflicting) of agent $k$ for hypothesis $\theta$. It is more demanding to derive the critical value for a $\theta$-informative agent. According to definition \eqref{eq: definition of t_k for informative agents} given in Appendix \ref{appendix: theorem 3}, $t_k^{\sf nc}(\theta)$ is found by solving the following equation for each $\theta$-informative agent $k$:
		\begin{equation}\label{eq: direct estination: finding t_k}
			\Lambda_{k}(t;\theta)=0,\quad t< 0.
		\end{equation}
		Therefore, the evaluation of $t_k^{\sf nc}(\theta)$ depends on our approximation of the LMGF. Since $\Lambda_{k}(t;\theta)$ is the logarithm of the MGF, it is more convinient to estimate the MGF directly. Let $M_k(t;\theta)$ denote the MGF of variable $\bm{x}_{k,i}(\theta)$:
		\begin{equation}
			M_k(t;\theta)\triangleq\mathbb{E}[e^{t\bm{x}_{k,i}(\theta)}].
		\end{equation} 
		Eq. \eqref{eq: direct estination: finding t_k} is equivalent to solving (assume that the signal space $\mathcal{X}_k$ is continuous):
		\begin{equation}\label{eq: 2}
			M_k(t;\theta)=\int_{\xi\in\mathcal{X}_k} f_k(\xi)\left[\frac{L_k(\xi|\theta_1)}{L_k(\xi|\theta)}\right] ^td\xi=1.
		\end{equation}
		In the following,  we propose two methods for MGF approximation within different settings.  
		
		{\em i) Data-based MGF approximation:} Since the signal model $f_k$ is unknown, a direct approach for MGF approximation is to estimate $M_{k}(t;\theta)$ from empirical data. Consider a finite set of $S$ realizations $\{{x}_{k,m} \}_{m=1}^S $, the estimator of $M_{k}(t;\theta)$ is constructed as
		\begin{equation}\label{eq: direct estimator}
			\widehat{M}_{k}^{(S)}(t;\theta)\triangleq\frac{1}{S}\sum_{m=1}^{S}e^{t{x}_{k,m}(\theta)}.
		\end{equation}
		Due to the assumption of $\Lambda_{k}(t;\theta)<\infty$ in Theorem \ref{theorem: learning performance for the general model}, according to the law of large numbers, we know that $\widehat{M}_{k}^{(S)}(t;\theta)$ converges pointwise to $M_{k}(t;\theta)$ as the sample size $S$ increases. The convergence rate of this estimate is discussed in \cite{rohwer2015convergence}. In the context of social learning where the observations arrive in a streaming manner, the estimator in \eqref{eq: direct estimator} can be updated by the following recursion:
		\begin{equation}\label{eq: recurive estimation of MGF}
			\widehat{M}_{k}^{(i)}(t;\theta)=\frac{1}{i}e^{t\bm{x}_{k,i}(\theta)}+\frac{i-1}{i}\widehat{M}_{k}^{(i-1)}(t;\theta)
		\end{equation}
		where $\widehat{M}_{k}^{(i)}(t;\theta)$ denotes the agent $k$'s estimation of $M_{k}(t;\theta)$ at time instant $i$, i.e., after collecting $i$ observations. Therefore, in addition to performing the ASL protocol \eqref{eq: ASL}, the agents also run an MGF estimation \eqref{eq: recurive estimation of MGF} at each time instant. One issue in the MGF approximation is that $M_{k}(t;\theta)$ is a function of the continuous variable $t$. Consequently, we need to discretize the estimated quantity over variable $t$, which introduces an unavoidable discretization error. Since the ultimate goal is to obtain a good estimate for $t_k^{\sf nc}(\theta)$, the properties of $t_k^{\sf nc}(\theta)$ can be helpful for the discretization design. From Lemma \ref{lemma:properties} in Appendix \ref{appendix: lemma}, we know that $t_k^{\sf nc}(\theta)$ lies in the region where $M_{k}(t;\theta)$ is decreasing. This suggests that agent $k$ can choose a finer discretization of $t$ around the region where its estimator $\widehat{M}_{k}^{(i)}(t;\theta)$ decreases with $t$ and takes values close to 1. After enough observations, it can focus on this critical region. When the random variable $\bm{x}_{k,i}(\theta)$ is bounded, the estimator in \eqref{eq: recurive estimation of MGF} converges quickly for any $t$ \cite{rohwer2015convergence}. Even if $\bm{x}_{k,i}(\theta)$ is unbounded, this estimator can be useful for finding $t_k^{\sf nc}(\theta)$ as only a small region of $t$ needs to be considered.  For ease of reference, the approach being discussed where agents first evaluate their MGF using estimator \eqref{eq: recurive estimation of MGF} and then approximate $t_k^{\sf nc}(\theta)$ as the solution of \eqref{eq: 2} involving the estimated MGF, will be referred to as the \emph{direct estimation method}.
		
		{\em ii) Model-based MGF approximation:} If the statistical models (i.e., the signal model $f_k$ and likelihood models $\{L_k(\cdot|\theta)\}$) belong to the same exponential family, then $t_k^{\sf nc}(\theta)$ can be approximated by estimating the \emph{natural parameter} of the signal model $f_k$. Similar ideas were employed in \cite{nielsen2013information} for approximating numerically the critical value (i.e., Chernoff point) that determines the Chernoff information associated with the Bayesian decision rule in binary hypothesis testing problems. An exponential family $\mathcal{F}_E$ with natural parameter $\beta$ is a set of distributions of the form \cite{sayed2022inference}:
		\begin{equation}
			p_\beta(\xi) = \exp(\beta^\top T(\xi) - F(\beta) + K(\xi))
		\end{equation}
		for $\beta$ belonging to the natural parameter space 
		\begin{equation}
			\mathcal{B}=\left\{\beta\in\mathbb{R}^D\mid\int p_\beta(\xi)d\xi=1\right\}
		\end{equation}
		where $D$ is the dimension of $\mathcal{B}$, namely, the order of the family $\mathcal{F}_E$. The term $T(\xi)$ is a sufficient statistic, and the map $K(\xi):\mathcal{X}_k\mapsto \mathbb{R}$ is an auxiliary function. Function $F(\beta)$ characterizes the family and is known as a partition function or the log-normalizer in the literature. With $\int p_\beta(\xi)d\xi=1$, it follows that
		\begin{equation}\label{eq: F(theta)}
			F(\beta)=\log \int \exp(\beta^\top T(\xi) + K(\xi)) d\xi.
		\end{equation} 
		Suppose that $f_k$ and $\{L_k(\cdot|\theta)\}$ belong to the same exponential family with $f_k = p_{\beta^\ast}$ and
		\begin{equation}\label{eq: statistical models in exponential family}
			L_k(\cdot|\theta_h)=p_{\beta_h},\quad \forall h=1,2,\dots,H.
		\end{equation}
		Then, \eqref{eq: 2} can be rewritten as 
		\begin{align}
			\nonumber
			&\quad\;M_k(t;\theta)=\int_{\xi\in\mathcal{X}_k} p_{\beta^\ast}(\xi)\left[\frac{p_{\beta_1}(\xi)}{p_\beta(\xi)}\right] ^td\xi \\\nonumber
			& = \int_{\xi\in\mathcal{X}_k}  \exp({\beta^\ast}^\top T(\xi) - F(\beta^\ast) + K(\xi))\\\nonumber
			&\quad\times\Big[\exp((\beta_1-\beta)^\top T(\xi)- F(\beta_1) + F(\beta) )\Big] ^td\xi \\\nonumber
			& = \exp(-F(\beta^\ast)-tF(\beta_1)+tF(\beta))\\\nonumber
			&\quad\times\int_{\xi\in\mathcal{X}_k} \exp((\beta^\ast+ t(\beta_1-\beta))^\top T(\xi) + K(\xi)) d\xi \\\nonumber
			& \overset{\text{(a)}}{=} \exp(-F(\beta^\ast)-tF(\beta_1)+tF(\beta)+F(\beta^\ast + t(\beta_1-\beta))) \\ 
			& = 1
		\end{align}
		where in (a) we used the definition of $F(\beta)$ in \eqref{eq: F(theta)}. Therefore, $t_k^{\sf nc}(\theta)$ can be found by solving the following equation:
		\begin{equation}\label{eq: critical function for exponential family}
			F(\beta^\ast + t(\beta_1-\beta)) - F(\beta^\ast) - t(F(\beta_1) - F(\beta))=0.
		\end{equation}
		Since the likelihood models are available to the agents, $\beta$ and $\beta_1$ are known parameters in \eqref{eq: critical function for exponential family}. If we can estimate the natural parameter $\beta^\ast$ corresponding to the signal model $f_k$ with $\widehat{\beta}^\ast$, then $t_k^{\sf nc}(\theta)$ can be approximated by solving
		\begin{equation}\label{eq: approximated critical function for exponential family}
			F(\widehat{\beta}^\ast + t(\beta_1-\beta)) - F(\widehat{\beta}^\ast) - t(F(\beta_1) - F(\beta))=0.
		\end{equation}
		This method will be referred to as the \emph{indirect estimation method}, since we approximate $t_k^{\sf nc}(\theta)$ by an implicit function instead of evaluating the possible value pointwise. Compared with the direct estimation method, the indirect approach only requires to estimate the natural parameter of the signal model, which is more straightforward in practice. Next, we provide the example of Gaussian distributions to explain the quantities in \eqref{eq: approximated critical function for exponential family}. 
		
		\emph{Gaussian Example:} Consider the Gaussian distributions ${\sf N}({\sf m},\sigma^2)$ with the general formula:
		\begin{equation}\label{eq: statistical models with Gaussian models}
			f(\xi,{\sf m},\sigma^2) = \frac{1}{\sqrt{2\pi \sigma^2}}\exp\left(-\frac{(\xi-{\sf m})^2}{2\sigma^2}\right).
		\end{equation}
		It is easy to verify that Gaussian distributions belong to the exponential family with 
		\begin{equation}\label{eq: Gaussian exponential family}
			T(\xi)=\begin{bmatrix}
				\xi\\ \xi^2
			\end{bmatrix},
			\quad \beta =\begin{bmatrix}
				\beta(1)\\\beta(2)
			\end{bmatrix} =\begin{bmatrix}
				\frac{{\sf m}}{\sigma^2}\\ -\frac{1}{2\sigma^2}
			\end{bmatrix},
			\quad K(\xi)=0,
		\end{equation}
		and
		\begin{align}
			\nonumber
			F(\beta)&=\frac{{\sf m}^2}{2\sigma^2} +\log\sqrt{2\pi\sigma^2}\\
			\label{eq: Gaussian F}
			&=-\frac{\beta(1)^2}{4\beta(2)}-\frac{1}{2}\log(-\beta(2)) +\frac{1}{2}\log\pi.
		\end{align}
		For the social learning with statistical models \eqref{eq: statistical models in exponential family} described by Gaussian distributions, the critical equation \eqref{eq: critical function for exponential family} can be expressed explicitly by using \eqref{eq: Gaussian F}. Particularly, if the likelihood models share the same variance, i.e., $\beta_1(2)=\beta(2)$, as seen in the shift-in-mean Gaussian models \eqref{eq: mean-shift Gaussian model}, Eq. \eqref{eq: critical function for exponential family} is simplified as
		\begin{equation}
			\frac{\beta^\ast(1)^2-(\beta^\ast(1)+t(\beta_{1}(1)-\beta(1))^2}{4\beta^\ast(2)}+t\frac{\beta_{1}(1)^2-\beta(1)^2}{4\beta_{1}(2)}=0
		\end{equation}
		In this case, the expression for $t_k^{\sf nc}(\theta)$ admits a closed form:
		\begin{equation}\label{eq: t_k estimate for shift-in-mean Gaussian models}
			t_k^{\sf nc}(\theta)=\frac{\beta^\ast(2)(\beta_{1}(1)+\beta(1))-2\beta_{1}(2)\beta^\ast(1)}{\beta_{1}(2)(\beta_{1}(1)-\beta(1))}.
		\end{equation}
		Therefore, once we have obtained an estimate $\widehat{\beta}^\ast$ of the natural parameter $\beta^\ast$, an approximate $t_k^{\sf nc}(\theta)$ can be derived explicitly.
	
	Given a prescribed Perron eigenvector, the next question is how to construct the combination policy for the given Perron eigenvector. In the following, we comment on some useful results pertaining to this problem.
	\subsubsection{Construction of combination policies}
	\label{sec: construction of combination policy}
	
	First, our objective is to design the combination weights of a left-stochastic matrix $A$ that generates a specified Perron eigenvector $\pi$ for the given communication network. Therefore, there are two constraints in the design: i) the required Perron eigenvector and ii) the fixed directed graph. If we relax the second constraint and assume that the structure of the communication network can be freely designed, then this combination policy construction problem can be cast into a special case of the canonical partially described inverse eigenvalue problem (PDIEP), which we describe next.  
	
	The PDIEP is one kind of the general IEP that involves the reconstruction of a matrix for the given spectral data, i.e., the partial or complete information of eigenvalues or eigenvectors \cite{chu2005inverse}. In PDIEPs, the spectral constraint consists of only one or few eigenpairs (i.e., the pair of an eigenvalue and the corresponding eigenvector). In the traditional approach to solving PDIEPs, both analytical and numerical methods have conventionally been tailored to address specific \emph{structured} matrices, including Jacobi, Toeplitz, or quadratic pencils. From \cite{chu2005inverse}, there is a lack of general systematic studies for PDIEPs in the literature. Further investigation on PDIEPs for most other matrix structures are still needed.
	
	There are two fundamental questions associated with any PDIEP (and indeed, any IEP): the theory of solvability and the practice of computability. The solvability concerns determining the condition under which a PDIEP has a solution. Provided that the given spectral data is feasible, the computability involves developing numerical methods to construct a desirable matrix. According to the conclusion drawn in \cite{chu2005inverse}, both questions are difficult and challenging, and complete answers are yet to find. Some important attempts have been made in \cite{borges1995some,chu1994symmetric,hu2008systematic} to numerically solve PDIEPs for the particular structured matrices, such as Jacobi, Toeplitz, and quadratic pencil. 
	
	Returning to our problem of combination policy construction, we know that constraint i) is a special case of the spectral constraint for PDIEPs. This suggests that without constraint ii), one solution to our problem could be first designing a suitable matrix structure within the solvability theory of PDIEPs, and then resorting to the numerical methods proposed in \cite{borges1995some,chu1994symmetric,hu2008systematic} to find a desirable combination policy. However, if the topology of the communication network is predetermined and constraint ii) must be satisfied, we cannot directly apply the results from  \cite{borges1995some,chu1994symmetric,hu2008systematic}. This is because constraint ii) defines a generic and specific matrix structure for the PDIEP, which remains an open question in the literature according to \cite{chu2005inverse}. 
	
	Overall, as we described above, constructing a combination policy with the prescribed Perron eigenvector for a given directed graph is a challenging task. Nonetheless, one exception to this is when the graph is undirected and the agents all have self-loops. In this scenario, we can employ a particular rule to generate a desired combination policy $A$ for any given Perron eigenvector $\pi$, which complies with the predetermined network topology \cite{vlaski2021regularized}:
	\begin{equation}\label{eq: construct combination policy}
		a_{\ell k}=\begin{cases}
			0, &\text{if } \ell\notin \mathcal{N}_k,\\
			\pi_\ell, &\text{if } \ell\in\mathcal{N}_k\setminus\{k\},\\
			1-\sum_{m\in\mathcal{N}_k\setminus\{k\}} a_{mk}, &\text{if } \ell=k.
		\end{cases}
	\end{equation}
	It is worth noting that the existence of a self-loop means that the agent will use its local observations for the belief updating in \eqref{eq: ASL 1}, which is a common assumption in distributed learning over graphs. Therefore, if possible, we can always assume the network structure to be undirected and utilize rule \eqref{eq: construct combination policy} for the combination policy design in the implementation of adaptive social learning. 
	
	\section{Minimizing adaptation time}
	\label{sec:adaptation time}
	Section \ref{sec:error exponent} derived the optimal Perron eigenvectors for combination policies that minimize the \emph{steady-state} error probability. In this section, we investigate the effect of combination policies on the adaptation time of social learning (i.e., on the \emph{transient} learning performance). The adaptation time is defined as the critical time instant $i$ after which the instantaneous error probability is decaying with an error exponent $(1-\omega)\Phi(\pi)$ for some small $\omega>0$ \cite{Bordignon:2020wx}:
	\begin{equation}\label{eq: inst error pb ub}
		p_{k,i}\leq e^{-\frac{1}{\delta}[(1-\omega)\Phi(\pi)+\cal{O}(\delta)]}, \quad \forall k\in\mathcal{N},
	\end{equation}
	where the notation $\cal{O}(\delta)$ signifies that the ratio $\cal{O}(\delta)/\delta$ stays bounded as $\delta \to 0$. To avoid confusion, we note that in expression \eqref{eq: inst error pb ub}, the parameter $\omega$ is free and can be designed by the user. Basically, $\omega$ describes the user's perception of the transient period, i.e., the moment from which the learning process has entered into the steady-state region. A smaller $\omega$ requires that the instantaneous error probability of each agent is dominated by a larger error exponent when the the steady-state region is reached, which entails a larger adaptation time. Since the error exponent $\Phi(\pi)$ is associated with the slow adaptation regime, we note that the following discussion on the adaptation time are also within this regime. In order to avoid some redundancy, this dependence will not be emphasized in the remainder of this part.
	
	Due to the term $\mathcal{O}(\delta)$ in definition \eqref{eq: inst error pb ub},  there exist different approximations for the adaptation time that satisfy \eqref{eq: inst error pb ub}. One approximation for the adaptation time,  denoted by $\sf{T_{ASL}}(\pi,\omega)$, is provided in \cite{Bordignon:2020wx}. Consider the unfavorable case such as the uniform initial belief condition, the expression of $\sf{T_{ASL}}(\pi,\omega)$ is given by
	\begin{equation}\label{eq: T_ASL}
		{\sf T_{ASL}}(\pi,\omega)\triangleq\frac{1}{\log(1-\delta)^{-1}}\log\frac{{\sf K}_1(\pi)}{\omega\Phi(\pi)}
\end{equation}
for all $\omega<\frac{{\sf K}_1(\pi)}{\Phi(\pi)}$, where 
\begin{equation}\label{eq: T_ASL K1}
	{\sf K}_1(\pi) \triangleq \max_{\theta\neq\theta_1}\abs{t_\theta^\star(\pi)}{\sf m_{ave}(\pi,\theta)}
\end{equation}
with $t_\theta^\star(\pi)$ defined by the forthcoming \eqref{eq: t_theta(pi)}. For our purpose of comparing different combination policies in this work, an approximation that decouples the influence of the combination policy from other factors would be preferred. However, this is generally unattainable due to the difficulty in calculating the instantaneous error probability of each agent and the intricate relation between the error exponent and the Perron eigenvector embedded in the LMGF $\Lambda_{\sf ave}(t;\pi,\theta)$. In the following, we examine the learning tasks in the low SNR regime where the error probabilities need not be too small \cite{Matta:2016kh} and $\Lambda_{\sf ave}(t;\pi,\theta)$ can be approximated by a second-order polynomial for $t\in[t^\star_\theta(\pi),0]$. That is,
	\begin{equation}\label{eq: second-order polynomial}
		\Lambda_{\sf ave}(t;\pi,\theta)=\sum_{n=1}^\infty\frac{\kappa_n(\pi,\theta)t^n}{n!}\approx\kappa_1(\pi,\theta) t+\frac{\kappa_2(\pi,\theta)}{2}t^2
	\end{equation}
	with $\kappa_1(\pi,\theta)={\sf m_{ave}}(\pi,\theta)$ and $\kappa_2(\pi,\theta)={\sf c_{ave}}(\pi,\theta)$. Here, ${\sf c_{ave}}(\pi,\theta)$ is  the variance of $\bm{x}_{{\sf ave},i}(\pi,\theta)$: 
	\begin{equation}
		{\sf c_{ave}}(\pi,\theta)\triangleq \mathbb{E}\left[(\bm{x}_{{\sf ave},i}(\pi,\theta)-{\sf m_{ave}}(\pi,\theta))^2\right]=\sum_{k=1}^N{\pi^2_k \rho_k(\theta)},
	\end{equation}
	where $\rho_{k}(\theta)$ denotes the variance of $\bm{x}_{k,i}(\theta)$: 
	\begin{equation}
		\rho_{k}(\theta)\triangleq \mathbb{E}\left[(\bm{x}_{k,i}(\theta)-d_k(\theta))^2\right].
	\end{equation}
	
	\noindent Since the parabolic approximation of an LMGF is actually a Gaussian approximation,  the approximation in \eqref{eq: second-order polynomial} becomes an equality if, and only if, the log-likelihood ratio $\bm{x}_{{\sf ave},i}(\pi,\theta)$ follows a Gaussian distribution (e.g., in the canonical shift-in-mean Gaussian problems discussed in Section \ref{subsec: interesting cases}).  For non-Gaussian cases, \eqref{eq: second-order polynomial} will be a valid approximation only for learning tasks in the low SNR regime \cite{Matta:2016kh}. 
	The exact definition of the low SNR regime depends on the specific learning setup, but it generally includes the scenarios where the hypotheses are close to each other, i.e., when the learning task is difficult. For instance, this regime is related to detecting weak signals in the framework of \emph{locally optimum detection} \cite{braca2009asymptotic,kassam2012signal}. In the low SNR regime, we can derive an explicit approximation result for the adaptation time of the ASL strategy.
	\begin{theorem}[\textbf{Adaptation time for the low SNR regime}]\label{theorem: adaptation time}
		Suppose the uniform initial belief condition $\bm{\lambda}_{k,0}(\theta)=0, \forall k\in\mathcal{N},\forall\theta\in\Theta$, and the low SNR regime. The adaptation time of the ASL strategy can be approximated by ${\sf T_{adap}}(\omega)$ expressed as 
		\begin{equation}\label{eq: adaptation time definition}
			{\sf T_{adap}(\omega)\triangleq}\dfrac{\log(1-\sqrt{1-\omega})}{\log(1-\delta)}
		\end{equation}
		for any combination policy $\pi\in\Pi$.
	\end{theorem}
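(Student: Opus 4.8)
The plan is to track the full transient distribution of the log-belief ratio $\bm{\lambda}_{k,i}(\theta)$ and to read off the instant at which its error exponent reaches the fraction $(1-\omega)$ of the steady-state value $\Phi(\pi)$. First I would stack the scalar recursion \eqref{eq: ASL} into network form. Writing $\bm{\lambda}_i(\theta)=[\bm{\lambda}_{1,i}(\theta),\dots,\bm{\lambda}_{N,i}(\theta)]^\top$ and using the left-stochasticity $A^\top\mathbbm{1}=\mathbbm{1}$, the adaptation and combination steps merge into $\bm{\lambda}_i(\theta)=(1-\delta)A^\top\bm{\lambda}_{i-1}(\theta)+\delta A^\top\bm{x}_i(\theta)$. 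Under the uniform initial condition $\bm{\lambda}_{k,0}(\theta)=0$, unrolling this recursion gives
\begin{equation}
	\bm{\lambda}_i(\theta)=\delta\sum_{j=1}^{i}(1-\delta)^{i-j}(A^\top)^{i-j+1}\bm{x}_j(\theta),
\end{equation}
which exhibits $\bm{\lambda}_{k,i}(\theta)$ as a weighted sum of the i.i.d.\ innovations $\{\bm{x}_j(\theta)\}$.

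Next I would compute the transient mean and isolate the role of the combination policy. Taking expectations and invoking primitivity of $A$, the Perron projection $A^{n}\to\pi\mathbbm{1}^\top$ (equivalently $(A^\top)^n\to\mathbbm{1}\pi^\top$) holds geometrically fast, with a mixing time of order $O(1)$ set by the second-largest eigenvalue modulus of $A$. Because the geometric weights $(1-\delta)^{i-j}$ spread the mass of the sum over a horizon of order $1/\delta\gg O(1)$, the terms with small $i-j$, for which $(A^\top)^{i-j+1}$ has not yet relaxed to $\mathbbm{1}\pi^\top$, are $O(1)$ in number and contribute only an $O(\delta)$ correction. Replacing $(A^\top)^{i-j+1}d(\theta)$, where $d(\theta)$ collects the means $d_k(\theta)$ from \eqref{eq: expectation}, by $\mathbbm{1}\,{\sf m_{ave}}(\pi,\theta)$ and summing the geometric series yields, uniformly in $k$,
\begin{equation}
	\mathbb{E}[\bm{\lambda}_{k,i}(\theta)]={\sf m_{ave}}(\pi,\theta)\big[1-(1-\delta)^{i}\big]+O(\delta).
\end{equation}
This is the crucial structural fact: the policy enters the transient mean only through the scalar ${\sf m_{ave}}(\pi,\theta)$, while the time-dependence is carried entirely by the policy-free factor $1-(1-\delta)^{i}$.

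I would then invoke the small-SNR parabolic approximation \eqref{eq: second-order polynomial}, under which $\bm{x}_{{\sf ave},i}(\pi,\theta)$ is Gaussian and $\bm{\lambda}_{k,i}(\theta)$ is, at each $i$, approximately Gaussian with the mean above and with fluctuations at the steady-state scale ${\sf c_{ave}}(\pi,\theta)\,\delta/2$. Specializing \eqref{eq: rate function}--\eqref{eq: phi} to $\Lambda_{\sf ave}(\tau;\pi,\theta)={\sf m_{ave}}\tau+\tfrac{1}{2}{\sf c_{ave}}\tau^2$ yields the closed-form steady-state deflection exponent $\Phi(\pi,\theta)={\sf m_{ave}}^2(\pi,\theta)/{\sf c_{ave}}(\pi,\theta)$. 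Evaluating the Gaussian tail with the transient mean gives, for each wrong hypothesis,
\begin{equation}
	p_{k,i}(\theta)\doteq\exp\!\Big(-\tfrac{1}{\delta}\,\Phi(\pi,\theta)\big[1-(1-\delta)^{i}\big]^2\Big).
\end{equation}
Since the factor $[1-(1-\delta)^{i}]^2$ is common to every $\theta\neq\theta_1$, taking the dominant hypothesis and using \eqref{eq: error exponent expression} gives $p_{k,i}\doteq\exp(-\tfrac{1}{\delta}\Phi(\pi)[1-(1-\delta)^{i}]^2)$ for all $k$. Matching this against \eqref{eq: inst error pb ub} and absorbing lower-order terms into $\mathcal{O}(\delta)$, the adaptation time is the first $i$ with $[1-(1-\delta)^{i}]^2\geq 1-\omega$; solving $1-(1-\delta)^{i}=\sqrt{1-\omega}$ gives $(1-\delta)^{i}=1-\sqrt{1-\omega}$, hence ${\sf T_{adap}}\approx\log(1-\sqrt{1-\omega})/\log(1-\delta)$, manifestly independent of $\pi$.

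The main obstacle is the separation-of-timescales step: I must argue rigorously that the policy-dependent, finite mixing transient of $(A^\top)^{n}$ is negligible over the $O(1/\delta)$ adaptation horizon, so that the combination policy survives only through the scalars ${\sf m_{ave}}(\pi,\theta)$ and ${\sf c_{ave}}(\pi,\theta)$, which then cancel against $\Phi(\pi,\theta)$ and leave a purely policy-free transient factor. A secondary delicate point is the fluctuation modeling: the derivation treats the instantaneous variance at its steady-state value, and one must justify that, within the small-SNR Gaussian regime and to leading exponential order in $1/\delta$, this is the relevant scale for the instantaneous error exponent, so that the clean factor $[1-(1-\delta)^{i}]^2$ emerges. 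Once these two points are secured, uniformity over $k$ follows from the uniform mean estimate, and the final inversion to obtain ${\sf T_{adap}}$ is elementary.
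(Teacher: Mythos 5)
Your proposal is correct and follows essentially the same route as the paper: unroll the recursion under the zero initial condition, exploit the separation between the $O(1)$ mixing time of $A$ and the $O(1/\delta)$ adaptation horizon so that the policy enters only through ${\sf m_{ave}}(\pi,\theta)$ and ${\sf c_{ave}}(\pi,\theta)$, apply the parabolic approximation to obtain the policy-free transient factor $\left[1-(1-\delta)^{i}\right]^{2}$ multiplying $\Phi(\pi,\theta)/\delta$, and invert. The only cosmetic difference is that the paper gets this factor from a Markov/Chernoff bound on the LMGF of the unrolled sum evaluated at the fixed steady-state tilt $t_\theta^\star(\pi)/\delta$ (converting the sum to an integral as in the cited reference), whereas you evaluate a Gaussian tail with the transient mean and the steady-state variance --- these two computations coincide exactly in the parabolic regime, and your use of the (larger) steady-state variance is conservative, so the resulting bound is still of the form required by the adaptation-time definition.
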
 
	\begin{proof}
		See Appendix \ref{appendix: adaptation time}.
	\end{proof}
	
	\noindent Theorem \ref{theorem: adaptation time} indicates that when the hypotheses are hard to distinguish, the combination policy does not play an important role in the adaptation time of the ASL strategy. Instead, it is the step-size $\delta$ that plays the dominant role in the time of adaptation, which is consistent with the analysis in \cite{Bordignon:2020wx}. This result also differs from the analogous results for non-adaptive social learning found in \cite{jadbabaie2013information,Shahrampour:2016kk}, where the importance of the combination policy in the transient learning performance is highlighted. Theorem \ref{theorem: adaptation time} ensures that choosing a combination policy with better steady-state learning performance, as suggested by optimizing the error exponent in Theorems \ref{theorem: optimal Perron eigenvector}--\ref{theorem: epsilon-optimal Perron eigenvector}, does not negatively impact the transient learning performance in the low SNR regime. Furthermore, ${\sf T_{adap}}(\omega)$ depends only on $\omega$ and $\delta$ in \eqref{eq: adaptation time definition}, so it is applicable for all learning models that admit the parabolic approximation in \eqref{eq: second-order polynomial}. 
	
	Recalling the definition of adaptation time in \eqref{eq: inst error pb ub}, an identical adaptation time means that at any time instant $i$, the instantaneous error probability $p_{k,i}$ corresponding to a larger error exponent $\Phi(\pi)$ has a smaller upper bound. Hence, combining the conclusions from Theorems \ref{theorem: optimal Perron eigenvector}, \ref{theorem: epsilon-optimal Perron eigenvector}, and \ref{theorem: adaptation time} for the learning tasks in the low SNR regime, we can expect that when the learning step-size $\delta$ is small, both the steady-state and the instantaneous error probability of an adaptive social learning process will be reduced by employing a combination policy corresponding to a larger error exponent. This point will be further illustrated in the simulations.
	
	As a final remark, we would like to make some comments on the influence of combination policies in the high SNR regime. It is important to note that there is no uniform definition for a low or high SNR regime within social learning tasks. However, the accuracy of the Gaussian approximation \eqref{eq: second-order polynomial} is a key factor in distinguishing these regimes. It turns out that the error probability in the low SNR regime does not need to be too small, and its evolution can be simulated by an affordable number of Monte Carlo runs. In the high SNR regime, the hypotheses are more distinguishable and accordingly, the error probability decreases too rapidly to be captured by inexpensive simulations. To examine the effect of combination policies on the adaptation time in the high SNR regime, we can refer to the approximation $\sf T_{ASL}(\pi,\omega)$ provided in \eqref{eq: T_ASL}. Although the influence of the combination policy is intertwined with that of the agent's signal structure in $\sf T_{ASL}(\pi,\omega)$, a finer analysis presented in \cite{Bordignon:2020wx} reveals that the step-size $\delta$ still maintains a dominant role in the adaptation time. One exception is the favorable case where the agent's initial belief is already biased towards the true hypothesis. In this case, the adaptation time is essentially determined by the mixing rate of agent's beliefs, which is related to the second largest-magnitude eigenvalue of combination policy $A$.
	
	\section{Computer Simulations}
	\label{sec:simulation}
	In this section, we present simulation results to illustrate our findings. We consider both the learning task with accurate signal model (i.e., Case 2 in Section \ref{subsec: interesting cases}) and that with the noisy shift-in-mean Gaussian model (i.e., Case 3 in Section \ref{subsec: interesting cases}). For these two tasks, we can see from \eqref{eq: mave >0 in Corollary 2} and \eqref{eq: mave>0 in Colollary 3} that the consistency condition in \eqref{eq: consistency condition} holds for all Perron eigenvectors in \eqref{eq: Perron eigenvector}. Moreover, according to \eqref{eq: achieves all upper bounds in Corollary 2} and \eqref{eq: achieves all upper bounds in Corollary 3}, the optimal Perron eigenvectors provided in Corollaries \ref{Corollary  2} and \ref{Corollary  3} are independent of the choice of the global truth. This means that they are optimal in the non-stationary environment with time-varying true states. We will furthermore examine the influence of the combination policy on both the learning and adaptation abilities of the ASL strategy.
	\subsection{Social learning with accurate signal model}\label{subsec: accurate likelihood models}
	
	In this part, we consider an Erd{\"o}s-R{\'e}nyi random network \cite{newman2018networks} of 10 agents where each edge is generated with probability 0.5. The undirected graph of the resulting network is shown in the left panel of Fig. \ref{fig: network topology}. We also assume that all agents have self-loops (not shown in Fig. \ref{fig: network topology}). The agents in the network will perform the ASL protocol \eqref{eq: ASL0} and \eqref{eq: ASL 1} with three hypotheses $\{\theta_1,\theta_2,\theta_3\}$. We consider a family of Laplace likelihood functions with scale parameter 1 for all agents:
	\begin{equation}
		F_h({\xi})\triangleq0.5\exp{-\abs{\xi-0.1h}},\quad\forall\xi\in\mathcal{X}_k
	\end{equation}
	with $h\in\{0,1,2\}$. The local likelihood models of each agent are shown in the right panel of Fig. \ref{fig: network topology}.
	\begin{figure}
		\centering
		\includegraphics[width=\linewidth]{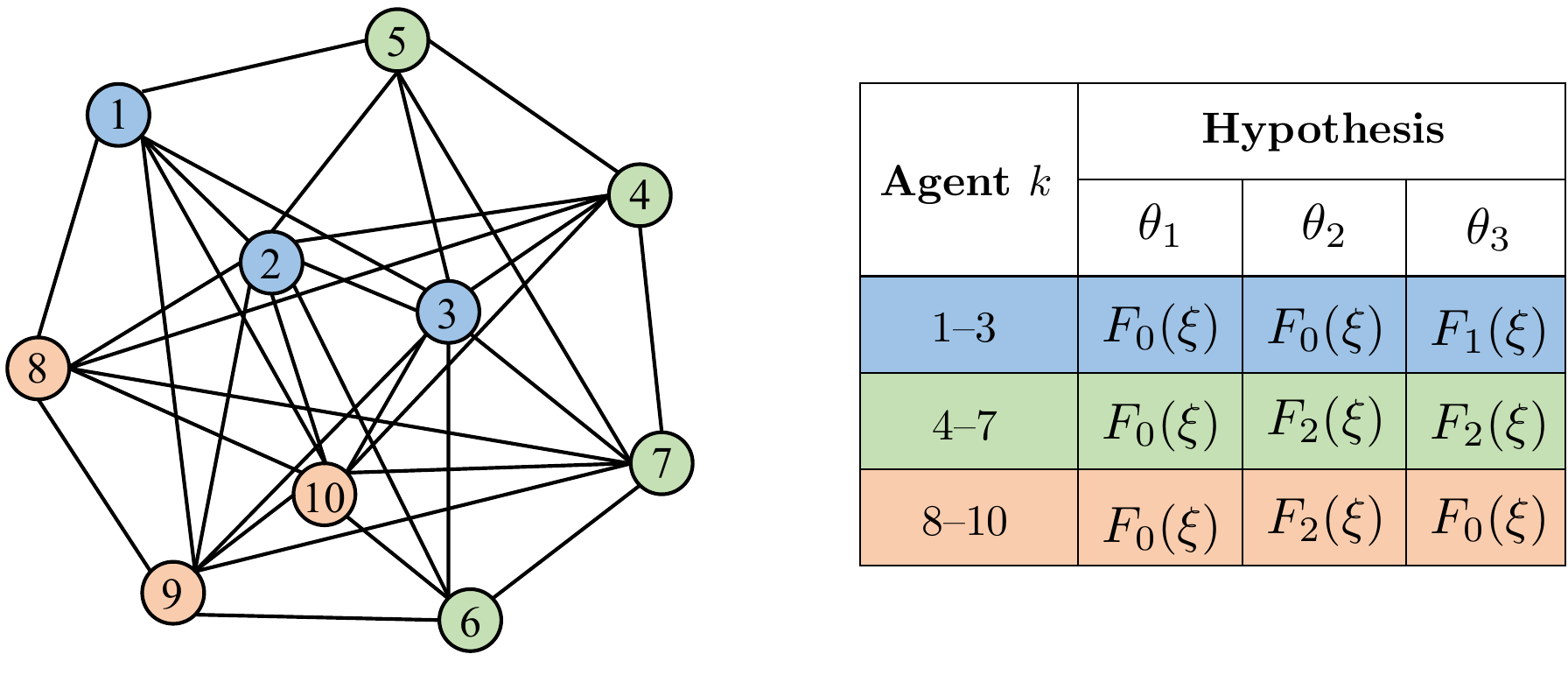}    
		\caption{The undirected graph of the network (left) and the local likelihood models of each agent (right).}
		\label{fig: network topology}
	\end{figure}
	According to Corollary \ref{Corollary  2}, the uniform Perron eigenvector is an optimal solution to the error exponent maximization problem \eqref{eq: objective function}--\eqref{eq: constraint 2}. To compare the performance of the ASL strategy under different combination policies, we employ 5 left-stochastic combination matrices $A_1$--$A_5$ and 5 doubly-stochastic combination matrices $A_6$--$A_{10}$. These matrices are generated by an iterative method based on the given network topology. For each matrix $A_i$, the iterative method starts by generating an initial matrix $A_0$ that conforms to the network topology, i.e., $[A_0]_{\ell k}>0$ if $\ell\in\mathcal{N}_k$ and $[A_0]_{\ell k}=0$ otherwise. A left-stochastic combination matrix $A_i$ is then constructed by normalizing each column of $A_0$. If $A_i$ should be doubly-stochastic, then row and column normalization is performed alternatively until convergence. All the combination matrices $A_1$--$A_{10}$ are checked to have a positive Perron eigenvector for truth learning. 
	
	First, we study the influence of combination policies on the error exponent. To evaluate the steady-state error probability, we consider a stationary environment where the true hypothesis is selected as $\theta_1$. We further assume that all agents hold a uniform initial belief. Table \ref{table:critical parameters} lists the $\theta$-related error exponents $\Phi(\pi,\theta)$ and their approximation $\widehat{\Phi}(\pi,\theta)$ based on \eqref{eq: second-order polynomial} (see expression in \eqref{eq: approximated phi}). It is observed that for all 10 combination matrices, the error exponent in this learning task is determined by the wrong hypothesis $\theta_3$. Moreover, the doubly-stochastic matrices deliver a larger error exponent than all left-stochastic ones. Therefore, we can expect that as $\delta\to 0$, the steady-state error probability vanishes to zero with a larger decaying rate when the doubly-stochastic combination policy is employed. 
	\begin{table}
		\caption{$\Phi(\pi,\theta)$ and $\widehat{\Phi}(\pi,\theta)$ for different combination matrices}
		\centering
		\begin{tabular}{c c c c c c c c c}
			\toprule
			\textbf{Matrix $A$} & $\Phi(\pi,\theta_2)$ & $\widehat{\Phi}(\pi,\theta_2)$ & $\Phi(\pi,\theta_3)$ & $\widehat{\Phi}(\pi,\theta_3)$  \\ [0.6ex] 
			\midrule
			$A_1$ & $0.0578$ & $0.0582$ & $0.0392$ & $0.0394$ \\
			$A_2$ & $0.0464$ & $0.0467$ & $0.0377$ & $0.0379$ \\
			$A_3$ & $0.0536$ & $0.0539$ & $0.0360$ & $0.0362$ \\
			$A_4$ & $0.0567$ & $0.0571$ & $0.0329$ & $0.0331$ \\
			$A_5$ & $0.0543$ & $0.0546$ & $0.0352$ & $0.0354$ \\
			\boxit{2.75in}$A_6$--$A_{10}$ & $0.0656$ & $0.0660$ & $0.0447$ & $0.0450$ \\ 
			\bottomrule
		\end{tabular}
		\label{table:critical parameters}
	\end{table}
	
	In Fig. \ref{fig: steady-state error probability}, the steady-state error probability averaged across agents, is presented for 10 combination polices and different step-sizes. For each step-size, we select the terminal time as 3000 and run 1000000 Monte Carlo simulations to obtain the average results. It can be observed that the doubly-stochastic combination matrices $A_6$--$A_{10}$ lead to a similar steady-state error probability that is smaller than those corresponding to left-stochastic combination matrices  $A_1$--$A_5$. In Fig. \ref{fig: node's steady-state error probability}, we provide the steady-state error probability of 4 agents. Despite the slight differences for different agents, the advantage of doubly-stochastic combination matrices becomes more pronounced as $\delta$ decreases. This is consistent with our conclusion in Corollary \ref{Corollary  2}.
	\begin{figure}
		\centering
			\centering
			\includegraphics[width=.8\linewidth]{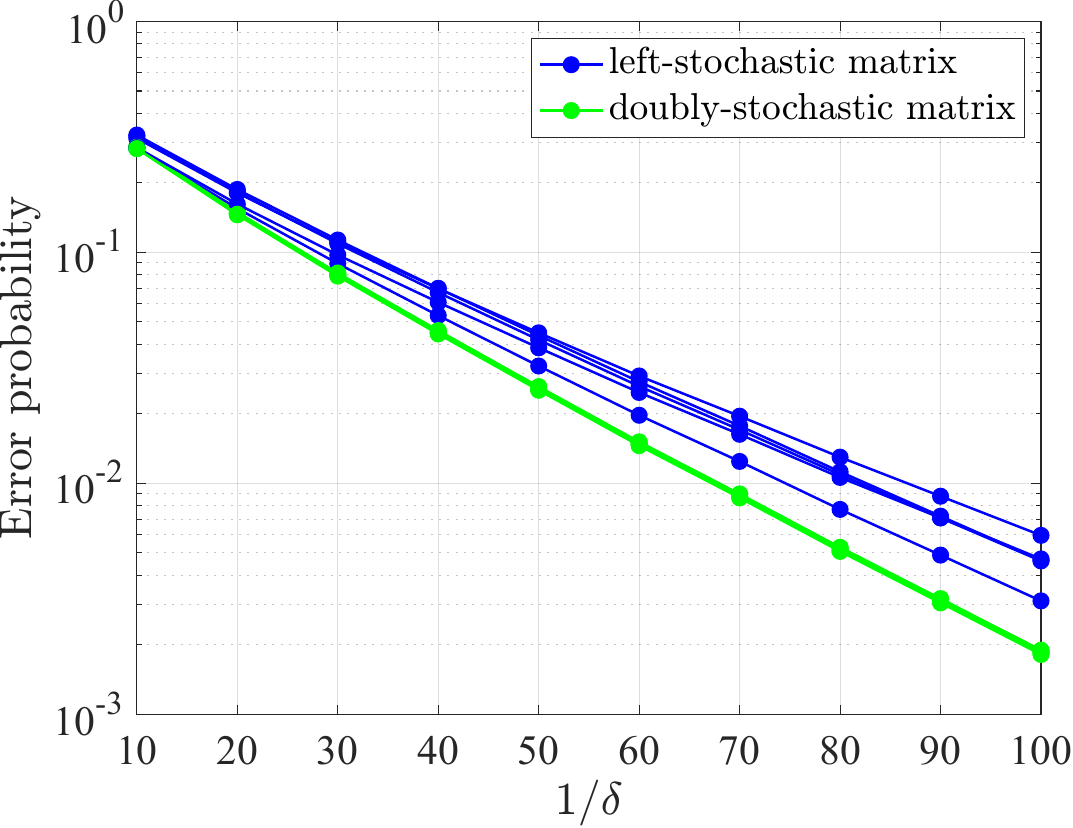}    
			\caption{Average steady-state error probability.}
			\label{fig: steady-state error probability}
	\end{figure}
	\begin{figure}
		\centering
		\includegraphics[width=.8\linewidth]{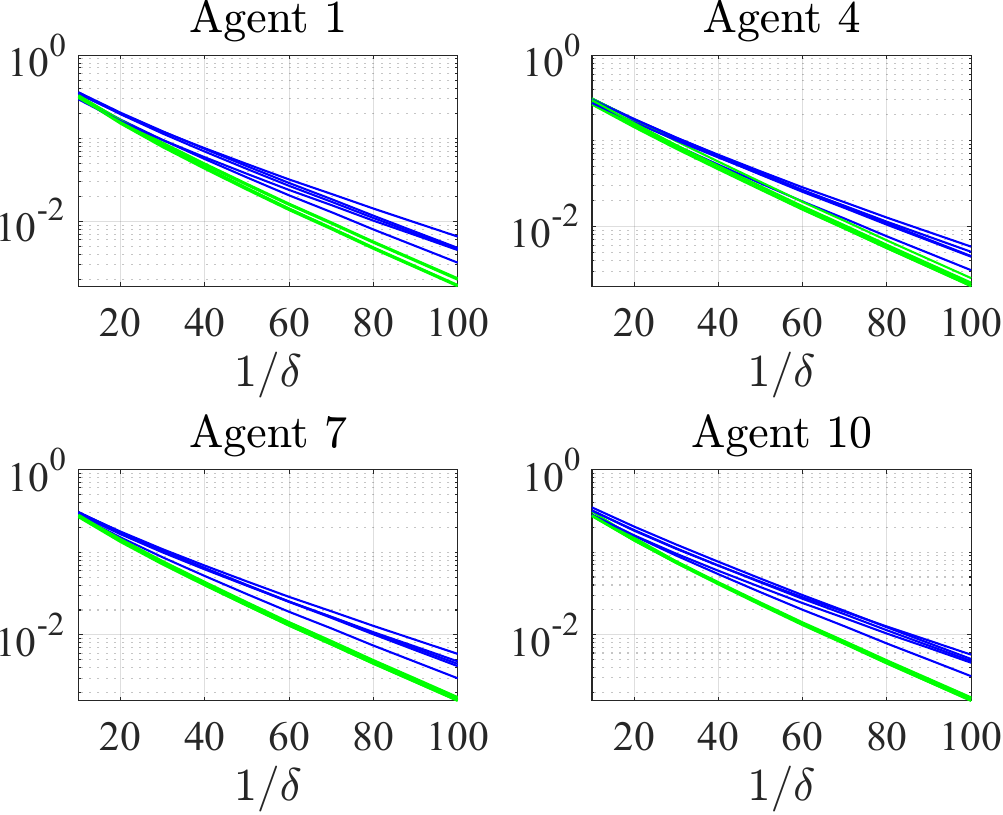}    
		\caption{Steady-state error probability of 4 agents. Blue (green) lines denote left (doubly)-stochastic matrices.}
		\label{fig: node's steady-state error probability}
	\end{figure}

	Next, we study the effect of combination policies on the adaptation time. Here, we consider a non-stationary environment where the true state changes from $\theta_1$ to $\theta_2$ at $i=1000$ and from $\theta_2$ to $\theta_3$ at $i=2000$. Under a small step-size $\delta=0.01$ and a uniform initial belief condition, the transient dynamics of the average instantaneous error probability over $i\in[0,3000]$ is depicted in Fig. \ref{fig: transient average error probability}. An important observation is that, in the non-stationary environment, the adaptation time related to different combination matrices is very close to each other. To derive a quantitative comparison, we calculate the simulated adaptation time when the true state is $\theta_1$. Let $p_{{\sf ave},i}$ and $p_{\sf ave}$ denote the average instantaneous error probability at time instant $i$ and the average steady-state error probability, respectively. According to the definition of adaptation time in \eqref{eq: inst error pb ub}, we record the time instant $i_0$ after which the average instantaneous error probability $p_{{\sf ave},i}$ satisfies
	\begin{equation}\label{eq: simulated adaptation time}
		\log{p_{{\sf ave},i}}\leq(1-\omega)\log{p_{\sf ave}}, \quad i\geq i_0.
	\end{equation} 
	The simulated adaptation time calculated by \eqref{eq: simulated adaptation time} and its approximation ${\sf T_{adap}}(\omega)$ under different values of $\omega$ is presented in Fig. \ref{fig: adaptation time}. We also include the approximation $\sf T_{ASL}(\pi,\omega)$ \eqref{eq: T_ASL} provided in \cite{Bordignon:2020wx}. It is clear that the difference in adaptation time for all considered combination matrices is almost negligible irrespective of $\omega$ in our simulations. Consequently, as observed in Fig. \ref{fig: transient average error probability}, the doubly-stochastic combination policies contribute to a smaller instantaneous error probability during the learning process. This provides a solid foundation for employing a doubly-stochastic combination policy in this learning task. Moreover, we see from Fig. \ref{fig: adaptation time} that both approximations ${\sf T_{adap}}(\omega)$ and ${\sf T_{ASL}}(\pi,\omega)$ provide an upper bound on the simulated adaptation time. Compared with ${\sf T_{ASL}}(\pi,\omega)$, which applies to all learning tasks, ${\sf T_{adap}}(\omega)$ discussed for the low SNR regime defines a better bound and illuminates the minor role of combination policies.	
	\begin{figure}
		\centering
		\includegraphics[width=.8\linewidth]{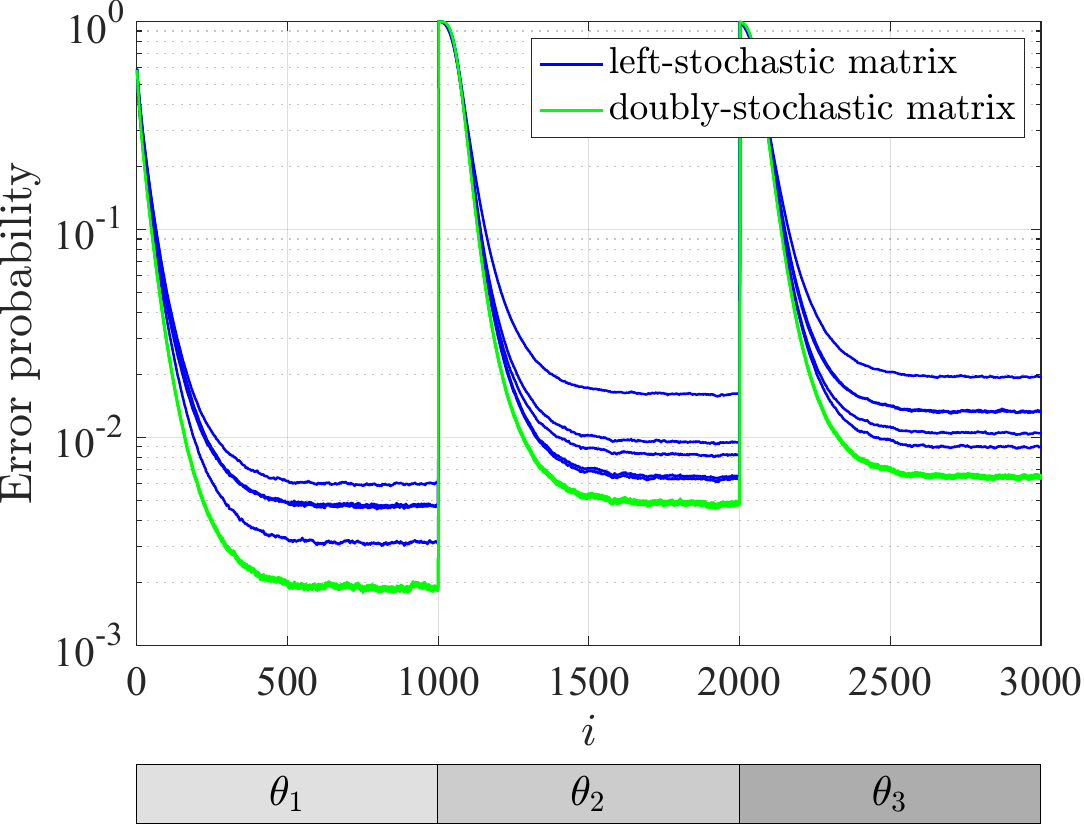}
		\caption{Average instantaneous error probability with $\delta=0.01$. The variation of the true state is depicted in the bottom panel.}
		\label{fig: transient average error probability}
	\end{figure}
	\begin{figure}
		\centering
		\includegraphics[width=.8\linewidth]{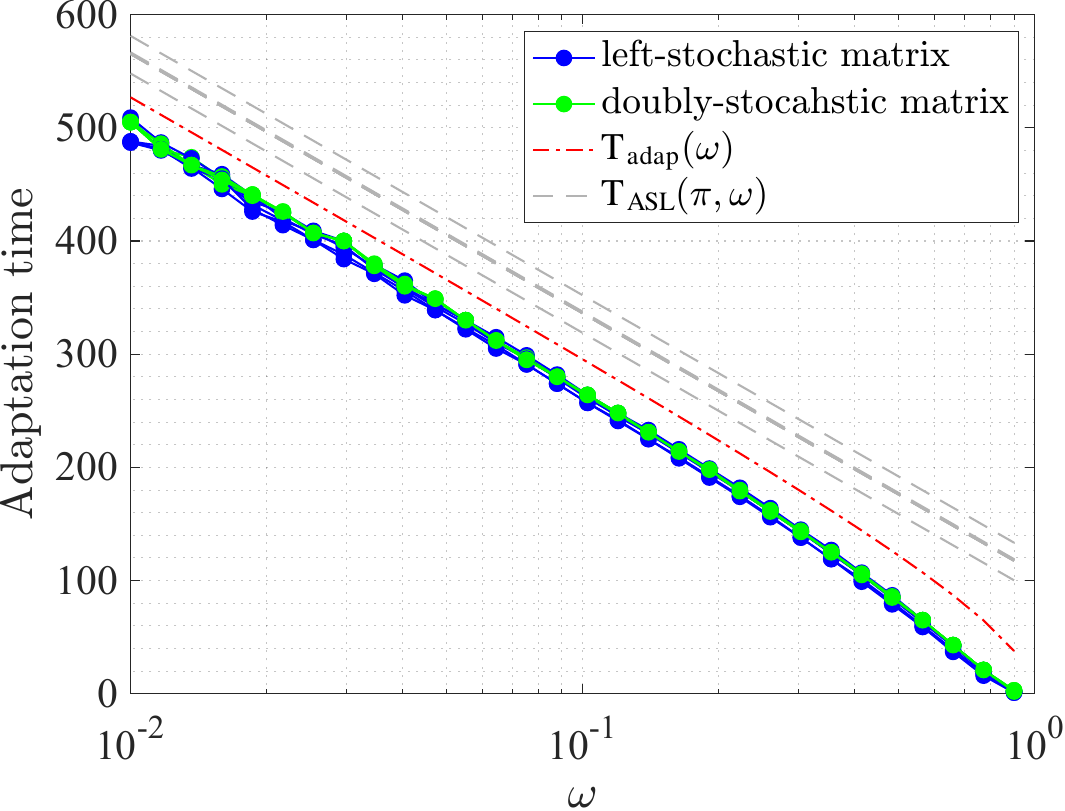}
		\caption{Adaptation time under different combination matrices.}
		\label{fig: adaptation time}
	\end{figure}
	
	\subsection{Social learning with noisy shift-in-mean Gaussian model}
	In this part, we illustrate the learning performance of the ASL strategy (Theorem \ref{theorem: learning performance for the general model}) and the optimal Perron eigenvector $\pi^\star$ given by \eqref{eq: optimal pi for noisy Gaussian model} for the noisy shift-in-mean Gaussian model \eqref{eq: mean-shift Gaussian model}. As discussed in Section  \ref{sec: construction of combination policy}, it is not always possible to construct a combination policy with the predefined Perron eigenvector for a given network topology. To avoid this inconvenience, we consider an undirected network (shown in Fig. \ref{fig: network topology}) by adding more edges to the Erd{\"o}s-R{\'e}nyi random network considered in Section \ref{subsec: accurate likelihood models}. Since we have assumed previously that each agent has a self-loop, the construction rule in \eqref{eq: construct combination policy} can be employed to generate a combination policy with any prescribed Perron eigenvector $\pi$ for this undirected network. We also consider a social learning protocol with three hypotheses. The parameters of the considered noisy shift-in-mean Gaussian model are listed in Table \ref{table:noisy Gaussian model}.
	\begin{table}
		\centering
		\caption{Parameters of the noisy shift-in-mean Gaussian model}
		\resizebox{\linewidth}{!}{
		\begin{tabular}{|c|c|c|c|c|c|c|}
			\hline
			{\textbf{Agent $k$}} & ${\sf m}_k(\theta_1)$& ${\sf m}_k(\theta_2)$& ${\sf m}_k(\theta_3)$& $\sigma_k^2$& $\varepsilon_k$ & $t_k^{\sf nc}(\theta)$ \\ 
			\hline
			1--3& 0& $0.1$& $0.1$& $1$& $1$ &$-0.5,-0.5$\\ 
			\hline
			4--7& $0.2$& $0$& $0.2$& $2$& $0.1$ &$-0.9091,-C$\\
			\hline
			8--10& $0.3$& $0.3$& $0$& $3$& $0.001$ &$-C,-0.9990$\\ 
			\hline
		\end{tabular}}
		\label{table:noisy Gaussian model}
	\end{table}	

	First, we show the learning behavior of the ASL strategy in this task when the true hypothesis is $\theta_1$. We consider the uniform averaging rule $a_{\ell k}={1}/{\abs{\mathcal{N}_k}}, \forall \ell\in\mathcal{N}_k$
	where $\abs{\mathcal{N}_k}$ is the cardinality of set $\mathcal{N}_k$. Let $\pi^{\sf u}$ be the Perron eigenvector corresponding to the uniform averaging rule, we can derive ${\sf m_{ave}}(\pi^{\sf u},\theta_2)=0.0052$ and ${\sf m_{ave}}(\pi^{\sf u},\theta_3)=0.0064$ based on Table \ref{table:noisy Gaussian model}.  With $\delta=0.002$ and a uniform initial belief, the time evolution of log-belief ratios $\bm{\lambda}_{k,i}(\theta)$ in one realization is shown in Fig. \ref{fig: log-belief ratio}. We can see that for this small step-size, the steady-state log-belief ratios of each agent concentrate around the expectation values ${\sf m_{ave}}(\pi^{\sf u},\theta)$. This shows the consistency of learning with the ASL strategy for the noisy shift-in-mean Gaussian model, as predicted by Theorem \ref{theorem: learning performance for the general model}. Furthermore, we evaluate the estimation performance of the two methods proposed in Section \ref{sec: estimation of t_k} on the quantity $t_k^{\sf nc}(\theta)$, whose value can be calculated by using \eqref{eq: noisy t_k star}. The theoretical values of $t_k^{\sf nc}(\theta)$ are listed in Table \ref{table:noisy Gaussian model}. We note that under the given parameter configuration, there are no conflicting agents in the network. For one realization, we plot the estimates of $t_k^{\sf nc}(\theta)$ for each agent within the first $10000$ observations in Fig. \ref{fig: Gaussian MGF approximation}. Under the considered noisy shift-in-mean Gaussian model, the estimate of $t_k^{\sf nc}(\theta)$ in the indirect estimation method admits a closed-form expression \eqref{eq: t_k estimate for shift-in-mean Gaussian models}. It is clear from Fig. \ref{fig: Gaussian MGF approximation} that the indirect estimation method is more efficient than the direct one in our simulations. Since $\bm{x}_{k,i}(\theta)$  is unbounded in Gaussian models, a substantial number of observations will be needed to obtain a good approximation with the direct estimation method. 
	\begin{figure}
		\centering
		\includegraphics[width=.8\linewidth]{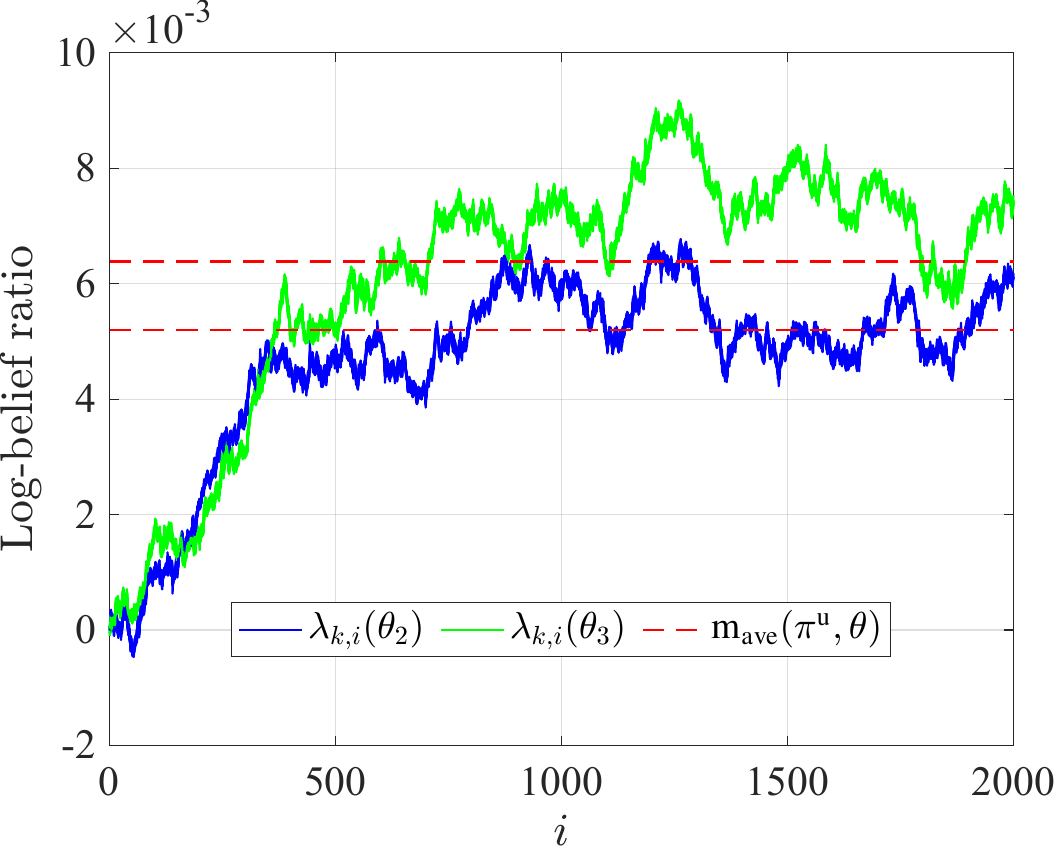}
		\caption{Evolution of log-belief ratios of each agent in one realization.}
		\label{fig: log-belief ratio}
	\end{figure}
	\begin{figure}
		\centering
		\includegraphics[width=.8\linewidth]{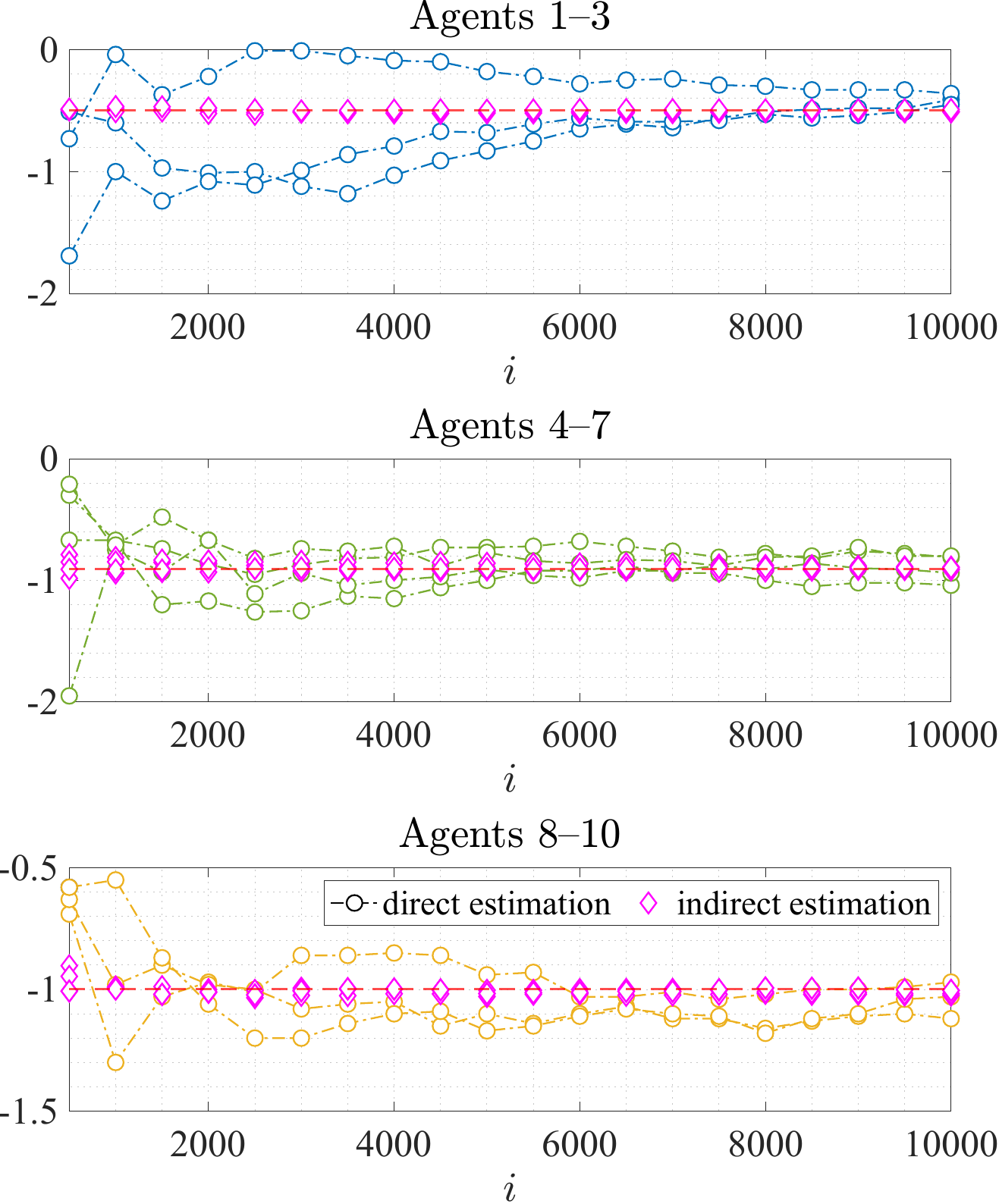}
		\caption{Estimation of $t_k^{\sf nc}(\theta)$ after every 500 observations with the direct and indirect estimation methods. The dotted red lines represent the theoretical values of $t_k^{\sf nc}(\theta)$. The discretization step-size is 0.01 in the direct estimation method. Curves related to the $\theta$-uninformative agents are removed for clarity.} 
		\label{fig: Gaussian MGF approximation}
	\end{figure}

	Next, we consider the effect of combination policies on the error exponent in this noisy learning environment. According to Corollary \ref{Corollary  3}, the maximum error exponent can be achieved by the Perron eigenvector $\pi^\star$ in \eqref{eq: optimal pi for noisy Gaussian model}. Based on the parameter setting in Table \ref{table:noisy Gaussian model}, we obtain $\pi_k^\star=0.0615$ for agents 1--3, $\pi_k^\star=0.1118$ for agents 4--7, and $\pi_k^\star=0.1228$ for agents 8--10.
	A left-stochastic combination matrix with Perron eigenvector $\pi^\star$ is constructed by following rule \eqref{eq: construct combination policy} and is denoted by $A^\star$. For comparison, we introduce another 5 left-stochastic matrices $A_1$--$A_5$ with different Perron eigenvectors and 5 doubly-stochastic matrices $A_6$--$A_{10}$. All 11 combination matrices have a positive Perron eigenvector. With a uniform initial belief, the steady-state error probability of the network and 4 agents for different small step-sizes are presented in Figs. \ref{fig: noisy steady state} and \ref{fig: noisy steady state of each node}. For each step-size, the terminal time is 3000 and the number of Monte Carlo simulations is $1000000$. It is seen that the combination matrix $A^\star$ with Perron eigenvector $\pi^\star$ leads to a larger error exponent than that of all other combination policies in comparison.  This is consistent with Corollary \ref{Corollary  3}.
	\begin{figure}
		\centering
		\includegraphics[width=.8\linewidth]{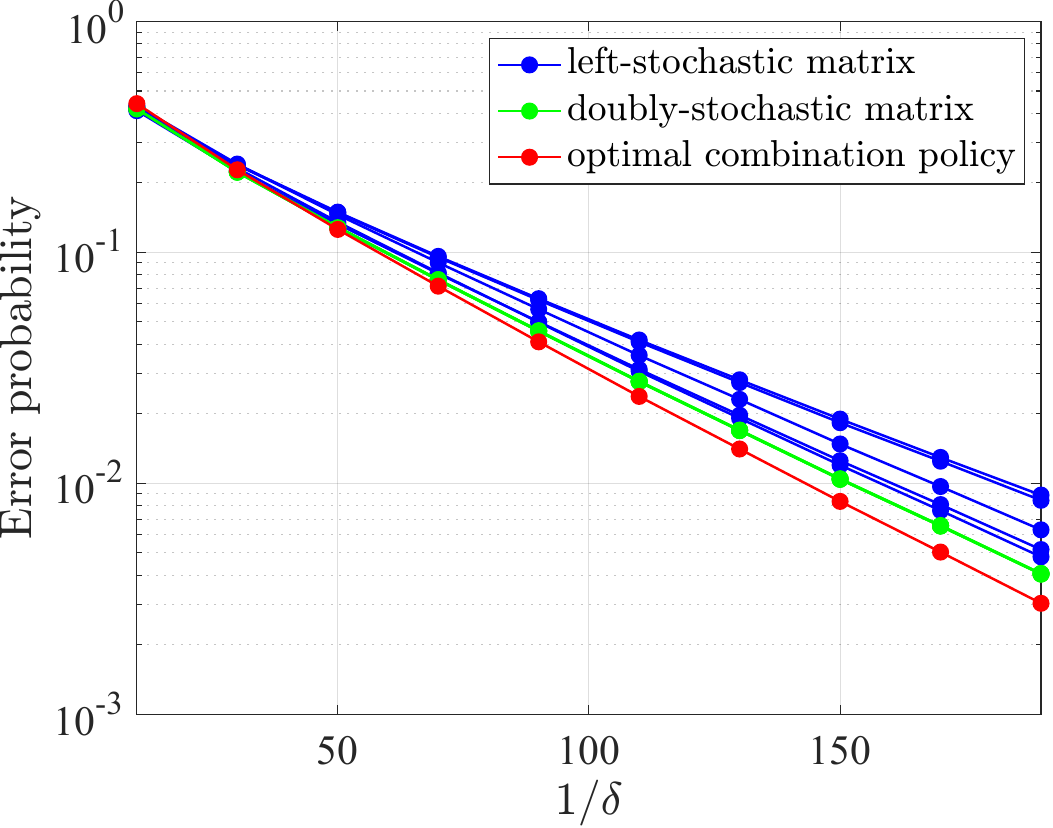}
		\caption{Average steady-state error probability. }
		\label{fig: noisy steady state}
	\end{figure}
	\begin{figure}
		\centering
		\includegraphics[width=.8\linewidth]{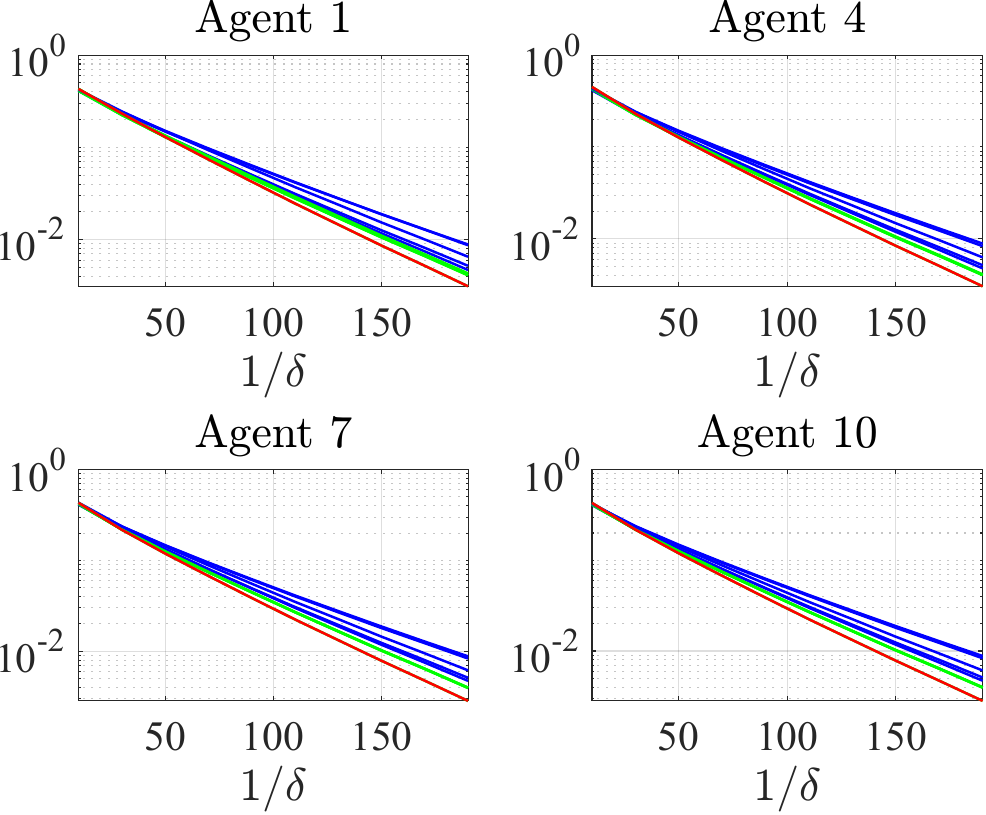}
		\caption{Steady-state error probability of each agent. The curve labels are the same as Fig. \ref{fig: noisy steady state}.}
		\label{fig: noisy steady state of each node}
	\end{figure}

	Then, we compare the adaptation time of the ASL strategy under different combination policies. From \eqref{eq: pdf of noisy log-likelihood ratio}, the parabolic approximation \eqref{eq: second-order polynomial} holds for the noisy shift-in-mean Gaussian model. We expect that ${\sf T_{adap}}(\omega)$ in Theorem \ref{theorem: adaptation time} will be a reasonable approximation of the adaptation time in the slow adaptation regime.  With $\delta=0.01$ and different $\omega$, we calculate the simulated adaptation time for each combination matrix based on \eqref{eq: simulated adaptation time} and then compare them with the approximated values $\sf T_{ASL}(\pi,\omega)$ and ${\sf T_{adap}}(\omega)$ given by \eqref{eq: T_ASL} and \eqref{eq: adaptation time definition} respectively. Corresponding results are presented in Fig. \ref{fig: noisy adaptation time}. It is shown that the effect of combination policies on the adaptation time of the ASL strategy is not significant under the given step-size. The difference between the simulated adaptation time and the theoretical one in \eqref{eq: adaptation time definition} comes from the sub-exponential term $\mathcal{O}(\delta)$ in \eqref{eq: inst error pb ub}. Similar to the observation in Fig. \ref{fig: adaptation time}, ${\sf T_{adap}}(\omega)$ provides a better characterization of the adaptation time than $\sf T_{ASL}(\pi,\omega)$ for this learning task. Fig. \ref{fig: noisy transient dynamics} presents the time evolution of the instantaneous error probability in a non-stationary environment, where the true state changes from $\theta_1$ to $\theta_2$ and from $\theta_2$ to $\theta_3$ at $i=1000$ and $2000$, respectively. Similar to Fig. \ref{fig: transient average error probability}, the curve related to the optimal combination matrix $A^\star$ is lower than all other curves during the whole learning process. Therefore, it is beneficial to employ a combination policy with the best Perron eigenvector for the steady-state learning performance.
	\begin{figure}
		\centering
		\includegraphics[width=.8\linewidth]{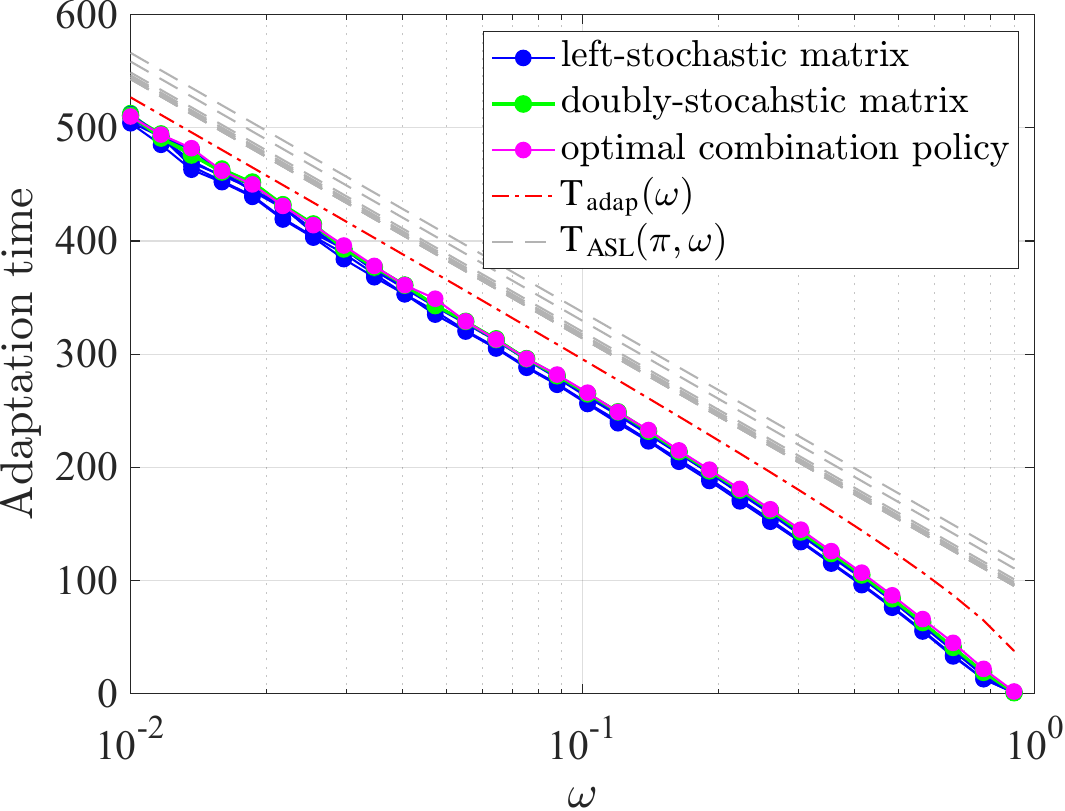}
		\caption{Adaptation time under different combination matrices.}
		\label{fig: noisy adaptation time}
	\end{figure}
	
	Finally, we examine the case where the optimal combination policy changes with the true state of nature. To simulate this scenario, we assume that the noise level in the private signals is dependent on the underlying state. Specifically, we consider $\varepsilon_k=0.1$ (or $0.001$) for agents 1--3, $\varepsilon_k=0.001$ (or 1) for agents 4--7, and $\varepsilon_k=1$ (or 0.1) for agents 8--10 when the true state is $\theta_2$ (or $\theta_3$). Based on \eqref{eq: optimal pi for noisy Gaussian model}, we compute an optimal Perron eigenvector associated with each hypothesis, which is then used to design the combination policy according to \eqref{eq: construct combination policy}. The three resulting optimal combination matrices are denoted by $A_1^\star$, $A_2^\star$ and $A_3^\star$. We study a non-stationary environment similar to Fig. \ref{fig: noisy transient dynamics}, where the true hypothesis changes from $\theta_1$ to $\theta_2$ at $i=1000$ and then to $\theta_3$ at $i=2000$. The evolution of the instantaneous error probabilities associated with $A_1^\star$--$A_3^\star$ is presented in Fig. \ref{fig: noisy transient dynamics-changing}. Since the optimization problem \eqref{eq: objective function}--\eqref{eq: constraint 2} is formulated to maximize the error exponent, the optimal combination policy pertaining the underlying state will deliver a smaller steady-state error probability in the small-$\delta$ regime. On the other hand, we note that our analysis of the adaptation time does not depend on the true hypothesis. Therefore, the conclusion from Theorem \ref{theorem: adaptation time} is still valid, which implies that the optimal combination policy for the steady-state learning performance does not harm the transient behavior. This is in line with our observations in Fig. \ref{fig: noisy transient dynamics-changing}.  
	\begin{figure}
		\centering
		\includegraphics[width=.8\linewidth]{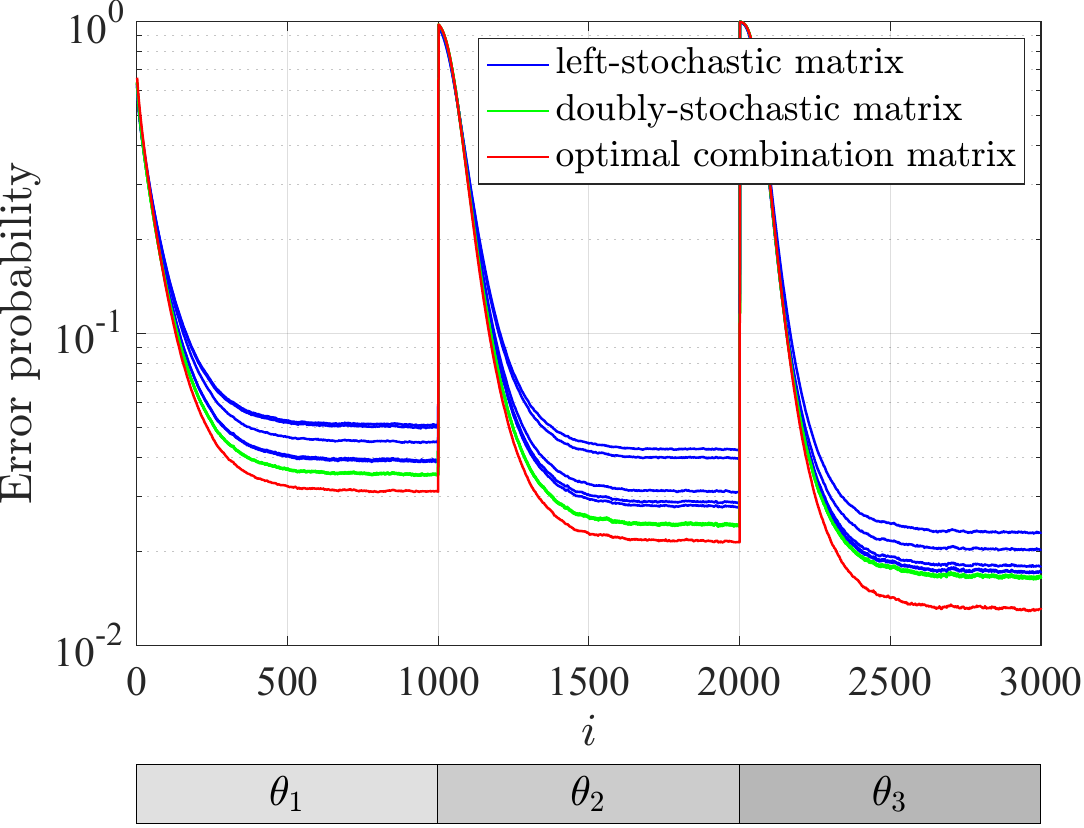}
		\caption{Average instantaneous error probability with $\delta=0.01$. The variation of the true state is depicted in the bottom panel.}
		\label{fig: noisy transient dynamics}
	\end{figure}

	\begin{figure}
	\centering
	\includegraphics[width=.8\linewidth]{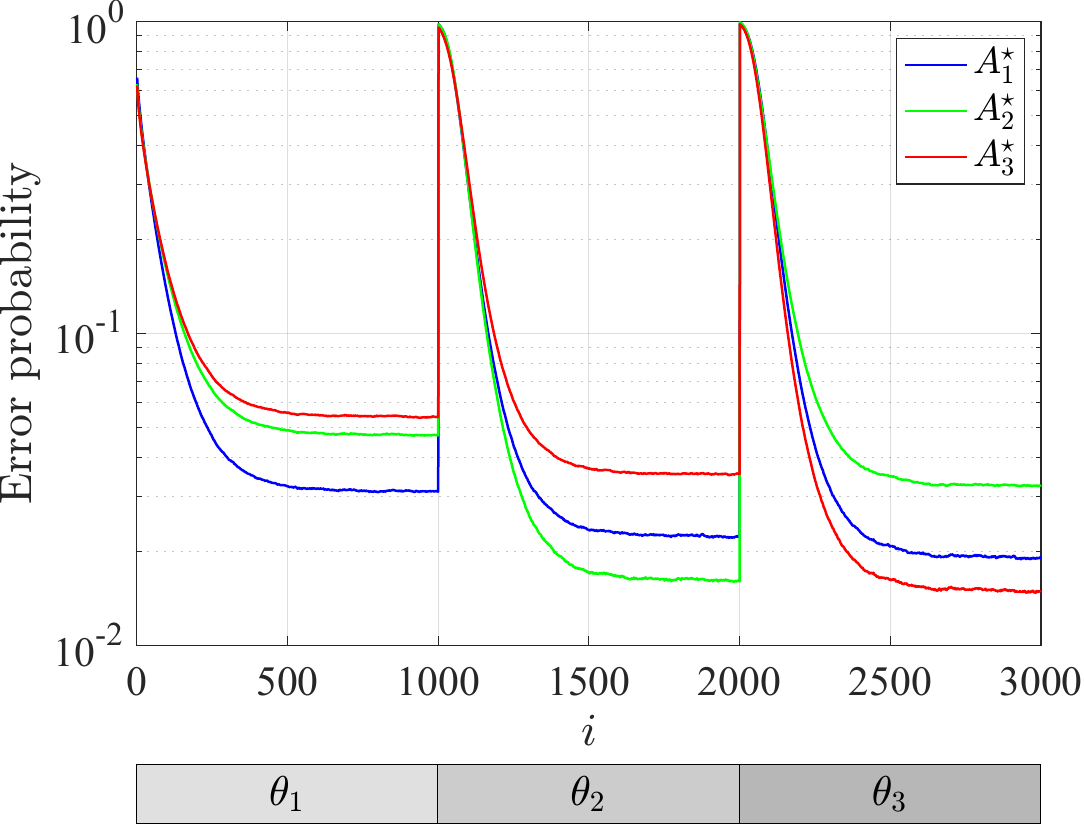}
	\caption{Average instantaneous error probability with $\delta=0.01$. The noise level changes with the variation of the true state depicted in the bottom panel.} 
	\label{fig: noisy transient dynamics-changing}
	\end{figure}

	\section{Concluding Remarks}
	\label{sec:conclusion}
	Combination policies play an important role in the behavior of social learning strategies. In this work, we discussed the effect of combination policies on two key performance metrics: the error exponent (i.e., steady-state learning ability) and the adaptation time (i.e., transient learning behavior) in the slow adaptation regime. For a general signal model, we characterized the performance limits of the error exponent and provided the set of optimal Perron eigenvectors. Moreover, we showed that the difference of the adaptation time under different combination policies is almost negligible if the hypotheses are close to each other. Our findings reveal the important relation between the learning performance of the ASL strategy and the combination policy among agents, which can be useful for the network design of many inference problems.
	
	Several useful extensions and generalizations are possible. One extension refers  to the online learning of the optimal combination policy. From Theorems \ref{theorem: optimal Perron eigenvector} and \ref{theorem: epsilon-optimal Perron eigenvector}, we know that the optimal combination policy depends on the critical value $t_k^{\sf nc}(\theta)$ that is determined by the LMGF of the log-likelihood ratio variable $\bm{x}_{k,i}(\theta)$. Therefore, full knowledge of the agents' signal models are needed to construct the optimal combination policy. This raises the question of how to learn the combination policy in an online manner, where the agents infer the underlying truth and estimate $t_k^{\sf nc}(\theta)$ with their observations in the same time. This online setting is especially important in the non-stationary environment where the optimal combination policy changes with the true state of nature as shown in Fig. \ref{fig: noisy transient dynamics-changing}.
	
	Another useful generalization concerns the compact space of hypotheses. In this work, we consider a finite number of hypotheses. It will be worthwhile to investigate the cases where the hypothesis is characterized by a continuous parameter. A good reference along these lines for the non-adaptive social learning is \cite{uribe2022nonasymptotic}. Using the similar variational interpretation of the adaptation step \eqref{eq: ASL0}, it is straightforward to generalize the ASL algorithm \eqref{eq: ASL} to the continuous space. However, new analytical approaches need to be developed for the performance analysis.
	
	\appendices
	
	\section{Useful properties for the proofs}\label{appendix: lemma}
	For the forthcoming proofs, we need some useful properties of the LMGFs $\Lambda_{k}(t;\theta)$ and functions $\phi_k^{\sf nc}(t;\theta)$ under the general signal model \eqref{eq: general signal model}--\eqref{eq: likelihood of an observation} as listed in Lemma \ref{lemma:properties}. Similar properties for the accurate signal model \eqref{eq: likelihood of an observation}--\eqref{eq: accurate likelihood model} can be found in \cite{Bordignon:2020wx,dembo1998applications,den2008large,Matta:2016kh}. The first and second derivatives of a function $f(t)$ will be denoted by $f^\prime(t)$ and $f^{\prime\prime}(t)$, respectively.
	\begin{lemma}[\textbf{Useful properties}]\label{lemma:properties}
		For each wrong hypothesis $\theta\neq\theta_1$, the following properties associated with the LMGF $\Lambda_{k}(t;\theta)$ and function $\phi_k^{\sf nc}(t;\theta)$ hold for all agents $k\notin\mathcal{N}^{ U}(\theta)$:
		\begin{enumerate}
			\item[\textit{i)}] $\Lambda_k (t;\theta)$ is infinitely differentiable in $\mathbb{R}$. Also, it is strictly convex for all $t\in\mathbb{R}$.
			\item[\textit{ii)}] $\Lambda_k(t;\theta)$ has a zero-point at $t=0$. If $d_k(\theta)=0$, this zero-point is unique, i.e., $\Lambda_k (t;\theta)>0$ for all $t\neq 0$; If $d_k(\theta)<0$, $\Lambda_k(t;\theta)$ has a positive zero-point $t_0$, i.e., $\exists t_0> 0$ such that $\Lambda_k(t_0;\theta)=0$;  If $d_k(\theta)>0$, $\Lambda_k (t;\theta)$ has a negative zero-point $t_0$. In particular, if $f_k=L_k(\cdot|\theta_1)$, we have 
			\begin{equation}\label{eq: bound of t_0}
				t_0=-1.
			\end{equation}
			\item[\textit{iii)}] The function $\frac{\Lambda_k(t;\theta)}{t}$ is continuous for all $t\in \mathbb{R}$, with
			\begin{equation}\label{eq: dphi_dt}
				\underset{t\to 0}{\lim}\frac{\Lambda_k(t;\theta)}{t}=\Lambda^\prime_k(0;\theta)=d_k(\theta).
			\end{equation}
			\item[\textit{iv)}] $\phi_k^{\sf nc}(t;\theta)$ is strictly convex for all $t\in\mathbb{R}$.
		\end{enumerate}
		Furthermore, the following properties in regard to the LMGF $\Lambda_{\sf ave}(t;\pi,\theta)$ and function $\phi(t;\pi,\theta)$ hold for all feasible Perron eigenvectors $\pi\in\Pi$:
		\begin{enumerate}
			\item[\textit{v)}] $\Lambda_{\sf ave}(t;\pi,\theta)$ and $\phi(t;\pi,\theta)$ are strictly convex for all $t\in\mathbb{R}$.
			\item[\textit{vi)}] $\Lambda_{\sf ave}(t;\pi,\theta)$ has a negative zero-point.
		\end{enumerate}
	\end{lemma}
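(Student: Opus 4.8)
The plan is to deduce every item from two elementary facts about the log-likelihood ratio $\bm{x}_{k,i}(\theta)$ and then propagate the consequences through the definitions. The foundational observation, which I would prove first, is that for any agent $k\notin\mathcal{N}^U(\theta)$ the variable $\bm{x}_{k,i}(\theta)$ is genuinely non-degenerate and, moreover, takes both strictly positive and strictly negative values with positive probability. This follows from the normalization $\int L_k(\xi|\theta_1)\,d\xi=\int L_k(\xi|\theta)\,d\xi=1$: since $k\notin\mathcal{N}^U(\theta)$ forces $L_k(\cdot|\theta_1)\neq L_k(\cdot|\theta)$ on a set of positive measure, the ratio $L_k(\cdot|\theta_1)/L_k(\cdot|\theta)$ can neither stay $\geq 1$ everywhere nor $\leq 1$ everywhere without violating the equal-mass constraint. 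Hence $\bm{x}_{k,i}(\theta)=\log\bigl(L_k(\cdot|\theta_1)/L_k(\cdot|\theta)\bigr)$ is positive on a set of positive $f_k$-measure and negative on another, which in particular rules out $\bm{x}_{k,i}(\theta)$ being almost surely constant.

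With this in hand, property i) is standard: convexity of $\Lambda_k(t;\theta)=\log\mathbb{E}[e^{t\bm{x}_{k,i}(\theta)}]$ follows from H\"older's inequality applied to $e^{(\alpha t_1+(1-\alpha)t_2)\bm{x}_{k,i}(\theta)}$, and strict convexity amounts to $\Lambda_k''(t;\theta)=\mathrm{Var}_t[\bm{x}_{k,i}(\theta)]>0$, which holds because the exponentially tilted law stays non-degenerate by the foundational fact. For property ii) I would use $\Lambda_k(0;\theta)=0$ and $\Lambda_k'(0;\theta)=d_k(\theta)$, obtained by differentiating the LMGF under the expectation in the interior of its finite domain. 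The sign of $d_k(\theta)$ fixes the slope at the origin, and because $\bm{x}_{k,i}(\theta)$ takes values of both signs we have $\Lambda_k(t;\theta)\to+\infty$ as $t\to\pm\infty$; combining this divergence with strict convexity and $\Lambda_k(0;\theta)=0$ pins down exactly one nonzero root, located on the side opposite to the sign of $d_k(\theta)$ (and the unique root $t=0$ when $d_k(\theta)=0$). The special value $t_0=-1$ for the accurate model is a direct computation: when $f_k=L_k(\cdot|\theta_1)$, one has $\mathbb{E}[e^{-\bm{x}_{k,i}(\theta)}]=\int L_k(\xi|\theta)\,d\xi=1$, so $\Lambda_k(-1;\theta)=0$. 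Property iii) is then a Taylor expansion, $\Lambda_k(t;\theta)=d_k(\theta)t+o(t)$, giving $\lim_{t\to0}\Lambda_k(t;\theta)/t=\Lambda_k'(0;\theta)=d_k(\theta)$, with continuity elsewhere inherited from $\Lambda_k$.

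For property iv) the key idea is to show that $\phi_k^{\sf nc}{}'(t;\theta)=\Lambda_k(t;\theta)/t$ is strictly increasing. Setting $N(t)=t\Lambda_k'(t;\theta)-\Lambda_k(t;\theta)$, one has $N(0)=0$ and $N'(t)=t\Lambda_k''(t;\theta)$, so strict convexity (property i) makes $t=0$ the strict minimizer of $N$ and hence $N(t)>0$ for all $t\neq0$; since $(\Lambda_k(t;\theta)/t)'=N(t)/t^2$, the derivative of $\phi_k^{\sf nc}$ is strictly increasing across $\mathbb{R}$ (using continuity at $0$ from property iii), which is exactly strict convexity of $\phi_k^{\sf nc}$. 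Finally, properties v) and vi) transfer everything to the network-averaged quantities through $\Lambda_{\sf ave}(t;\pi,\theta)=\sum_k\Lambda_k(\pi_k t;\theta)$ with $\pi_k>0$: differentiating twice gives $\Lambda_{\sf ave}''(t;\pi,\theta)=\sum_k\pi_k^2\Lambda_k''(\pi_k t;\theta)>0$, because feasibility $\pi\in\Pi$ (via ${\sf m_{ave}}(\pi,\theta)>0$) guarantees at least one non-uninformative summand, yielding strict convexity of $\Lambda_{\sf ave}$ and, by the identical $N(t)$ argument, of $\phi(t;\pi,\theta)$. For vi), $\Lambda_{\sf ave}(0;\pi,\theta)=0$ and $\Lambda_{\sf ave}'(0;\pi,\theta)={\sf m_{ave}}(\pi,\theta)>0$ force a dip below zero just left of the origin, while $\Lambda_{\sf ave}(t;\pi,\theta)\to+\infty$ as $t\to-\infty$ (each non-uninformative term diverges, the rest vanish), so the intermediate value theorem produces the negative zero-point, unique by strict convexity.

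I expect the main obstacle to be the foundational sign fact of the first paragraph and its two downstream uses, namely the existence of the second root in ii) and of the negative root in vi): these are exactly the points where convexity alone is insufficient and one must argue the tail growth $\Lambda_k\to+\infty$. A secondary matter requiring care is the justification of differentiating under the expectation and of the finiteness $\Lambda_k(t;\theta)<\infty$ on all of $\mathbb{R}$, which I would handle by invoking the finiteness assumption already used in Theorem~\ref{theorem: learning performance for the general model}.
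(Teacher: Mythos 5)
Your proof is correct and follows essentially the same route as the paper's: strict convexity of the LMGF via a variance/H\"older argument, the first-order (tangent-line) condition combined with tail divergence to locate the nonzero root in ii), L'H\^opital/Taylor at the origin for iii), the positivity of $t\Lambda_k'(t;\theta)-\Lambda_k(t;\theta)$ for iv), and transfer to $\Lambda_{\sf ave}$ for v)--vi). The only difference is cosmetic: you actually derive the two-sided sign property (hence non-degeneracy) of $\bm{x}_{k,i}(\theta)$ from the equal-mass normalization of the likelihoods, whereas the paper simply assumes non-degeneracy and cites standard references for property i).
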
 
	\begin{proof}
		The strict convexity of the LMGF $\Lambda_{k}(t;\theta)$ in property i) is the basis of the subsequent properties ii)--vi). We assume that $\bm{x}_{k,i}(\theta)$ is non-degenerate under the general signal model \eqref{eq: general signal model}--\eqref{eq: likelihood of an observation}. Property i) is established by using Cauchy–Schwarz inequality, and the proof can be found in \cite{Matta:2016kh,dembo1998applications,den2008large}. 
		
		Since ${\Lambda}_{k}(0;\theta)=0$ and ${\Lambda}_{k}^\prime(0;\theta)=d_k(\theta)$ for all $k$ and $\theta$, $t=0$ is a zero-point of ${\Lambda}_k(t;\theta)$ for all $k$ and $\theta$.   Furthermore, the first-order condition of the strictly convex function $\Lambda_{k}(t;\theta)$ gives us
		\begin{equation}\label{eq: convexity inequality}
			\Lambda_{k}(t;\theta)> d_k(\theta)t.
		\end{equation}
		Obviously, ${\Lambda}_k(t;\theta)> 0$ for all $t\neq 0$ when $d_k(\theta)=0$. Since the signal and likelihood models share the same support, we have $\lim_{t\to\pm\infty}\Lambda_{k}(t;\theta)=\infty$. If $d_k(\theta)\neq 0$, then there exists a unique non-zero zero-point $t_0\neq 0$ such that $\Lambda_{k}(t_0;\theta)=0$.  If $d_k(\theta)<0$, we have $\Lambda_{k}(t;\theta)\geq d_k(\theta)t>0$ for all $t<0$ according to \eqref{eq: convexity inequality}. Hence, $\Lambda_{k}(t;\theta)$ has a positive zero-point in this case. Likewise,  ${\Lambda}_{k}(t;\theta)$ has a negative zero-point $t_0$ for the case of $d_k(\theta)>0$. If $f_k=L_k(\cdot|\theta_1)$, we obtain $\Lambda_k(-1;\theta)=0$, and thus  $t_0=-1$ is the unique negative zero-point of $\Lambda_{k}(t;\theta)$ as demonstrated in \cite{Bordignon:2020wx}. 
		
		Property iii) can be proved by applying L'H\^opital's rule to the continuous function $\Lambda_k(t;\theta)$ at $t=0$ \cite{Matta:2016kh}. This property further ensures that the function $\phi_k^{\sf nc}(t;\theta)$ defined in \eqref{eq: rate function individually} is well-posed. The second-order derivative of $\phi_k^{\sf nc}(t;\theta)$ satisfies
		\begin{equation}\label{eq: convexity of phi}
			\begin{aligned}
				\left[{\phi_k^{\sf nc}}(t;\theta)\right]^{\prime\prime}\triangleq\;\frac{d}{dt}\frac{\Lambda_{k}(t;\theta)}{t}
				{=}\;\frac{\Lambda^\prime_{k}(t;\theta)t-\Lambda_{k}(t;\theta)}{t^2}>0
			\end{aligned}
		\end{equation}
		due to the first-order condition of the strictly convex function $\Lambda_{k}(t;\theta)$. Property iv) follows \eqref{eq: convexity of phi}. 
		
		Under the independence assumption of local observations over space, the strict convexity of individual LMGF $\Lambda_k(t;\theta)$ for all agents $k\notin\mathcal{N}^U(\theta)$ (property i)) ensures that $\Lambda_{\sf ave}(t;\pi,\theta)$ is also a strictly convex function for $t\in\mathbb{R}$. The strict convexity of $\phi(t;\pi,\theta)$ can be established in the same way as \eqref{eq: convexity of phi}.
		Property vi) is guaranteed by the consistency condition \eqref{eq: consistency condition}, i.e., ${\sf m_{ave}}(\pi,\theta)>0$ for all $\theta\neq\theta_1$. Similar to the proof of property ii), we can prove property vi) using the strict convexity of $\Lambda_{\sf ave}(t;\pi,\theta)$ established in property v). 
	\end{proof}
	
	\section{Proof of Theorem \ref{theorem: bound}}\label{appendix: theorem 3}
	First, we show that for each $\theta$-informative agent $k\in\mathcal{N}^I(\theta)$, $t_k^{\sf nc}(\theta)$ defined in \eqref{eq: t_k for informative agent k} corresponds to a negative zero-point of $\Lambda_{k}(t;\theta)$. According to property iv) in Lemma \ref{lemma:properties}, $\phi_k^{\sf nc}(t;\theta)$ is a strictly convex function for $t\in\mathbb{R}$. Therefore, the infimum of $\phi_k^{\sf nc}(t;\theta)$ is achieved at its unique stationary point:
	\begin{equation}\label{eq: stationary point}
		[\phi_k^{\sf nc}(t;\theta)]^\prime=\frac{\Lambda_{k}(t;\theta)}{t}=0.
	\end{equation}
	From definition \eqref{eq: informative agent}, we have $d_k(\theta)> 0$ for all $k\in\mathcal{N}^I(\theta)$. Property iii) shows that the stationary point in \eqref{eq: stationary point} corresponds to the negative zero-point of $\Lambda_{k}(t;\theta)$, whose existence and uniqueness are guaranteed by property ii) in Lemma \ref{lemma:properties}. Therefore, $t_k^{\sf nc}(\theta)$ is given by
	\begin{equation}\label{eq: definition of t_k for informative agents}
		t_k^{\sf nc}(\theta)=\left\{t\neq 0: {\Lambda}_{k}(t;\theta)=0\right\} <0.
	\end{equation}
	Next, we prove the bound of the error exponent established in Theorem \ref{theorem: bound}. Based on the definition of $t_k^{\sf nc}(\theta)$ in \eqref{eq: t_k in non-cooperative scenario}, we have $t_k^{\sf nc}(\theta)\leq 0$ and
	\begin{equation}\label{eq: value of LMGF is negative}
		\Lambda_k(t;\theta)\leq 0, \quad \forall t\in[t_k^{\sf nc}(\theta),0],
	\end{equation}
	for all agents $k\in\mathcal{N}$. Moreover, $\Phi_k^{\sf nc}(\theta)=0$ holds for each $k\notin\mathcal{N}^I(\theta)$. This yields
	\begin{equation}\label{eq: sum of error exponnet of informative agents}
		\Phi_{\sum}^{\sf nc}(\theta)\triangleq\sum_{k=1}^N\Phi_k^{\sf nc}(\theta)=\sum_{k\in\mathcal{N}^{I}(\theta)}\Phi_k^{\sf nc}(\theta),\quad\forall \theta\neq\theta_1.
	\end{equation}
	For any Perron eigenvector $\pi\in\Pi$, we have ${\sf m_{ave}}(\pi,\theta)>0$, $\forall \theta\neq\theta_1$ from \eqref{eq: constraint 2}. Due to  properties v) and vi) in Lemma \ref{lemma:properties}, the infimum of $\phi(t;\pi,\theta)$ is achieved at its unique stationary point that corresponds to the negative zero-point of $\Lambda_{\sf ave}(t;\pi,\theta)$.
	Let $t_\theta^\star(\pi)$ be the critical value corresponding to the $\theta$-related error exponent $\Phi(\pi,\theta)$ in \eqref{eq: rate function}, i.e., $\Phi(\pi,\theta)=-\phi(t_\theta^\star(\pi);\pi,\theta)$. Then,  $t_\theta^\star(\pi)$ is the unique non-zero solution to $\Lambda_{\sf ave}(t;\pi,\theta)=0$:
	\begin{equation}\label{eq: t_theta(pi)}
		t_{\theta}^\star(\pi)=\{t\neq 0: \Lambda_{\sf ave}(t;\pi,\theta)=0\}<0.
	\end{equation} 
	Therefore, 
	\begin{align}
		\nonumber
		{\Phi}(\pi,\theta)&\triangleq-\inf\limits_{t\in\mathbb{R}} {\phi}(t;\pi,\theta)\overset{\eqref{eq: t_theta(pi)}}{=}-\inf_{t<0}\phi(t;\pi,\theta)\\
		\nonumber
		&\overset{\textnormal{(a)}}{=}-\inf_{t<0}\sum_{k=1}^N{\phi}^{\sf nc}_k(\pi_kt;\theta) \leq \sum_{k=1}^N -\inf_{t<0}{\phi}^{\sf nc}_k(\pi_kt;\theta)\\
		\nonumber
		&\overset{\textnormal{(b)}}{=}\sum_{k\in\mathcal{N}^I(\theta)}-\inf_{t<0}{\phi}^{\sf nc}_k(t;\theta)=\sum_{k\in\mathcal{N}^I(\theta)}{\Phi}^{\sf nc}_{k}(\theta)\\
		\label{eq: rate function comparison}
		&\overset{\eqref{eq: sum of error exponnet of informative agents}}{=}{\Phi}_{\sum}^{\sf nc}(\theta).
	\end{align}
	where (a) is due to the definitions given in \eqref{eq: LMGF_ave}, \eqref{eq: phi}, and \eqref{eq: rate function individually}, and (b) comes from \eqref{eq: uninformative agent} and \eqref{eq: phi_k conflicting agent}. Since the error exponent $\Phi(\pi)$ is determined by the minimum ${\Phi}(\pi,\theta)$ among all wrong hypotheses $\theta\neq\theta_1$, we can derive from \eqref{eq: rate function comparison} that
	\begin{equation}\label{eq: upper bound of error exponent}
		{\Phi}(\pi)=\min_{\theta\neq\theta_1}{\Phi}(\pi,\theta){\leq}\min_{\theta\neq\theta_1}{\Phi}_{\sum}^{\sf nc}(\theta).
	\end{equation}
	This implies that each aggregated quantity ${\Phi}_{\sum}^{\sf nc}(\theta)$ is an upper bound of the error exponent $\Phi(\pi)$ for all feasible combination policies. On the other hand, for a given Perron eigenvector $\pi\in\Pi$, we can define the following quantities:
	\begin{equation}\label{eq: quantity k and t}
		k_{\sf m}^{\theta}=\argmax_{k\in\mathcal{N}}\frac{t_k^{\sf nc}(\theta)}{\pi_k},\quad t_{\sf m}^{\theta}=\max_{k\in\mathcal{N}}\frac{t_k^{\sf nc}(\theta)}{\pi_k}.
	\end{equation}
	This yields $\pi_{k_{\sf m}^\theta}t_{\sf m}^\theta=t_{k_{\sf m}^\theta}^{\sf nc}(\theta)$ and $\Phi_{k_{\sf m}^\theta}^{\sf nc}(\theta)=-\phi_{k_{\sf m}^\theta}^{\sf nc}(\pi_{k_{\sf m}^\theta}t_{\sf m}^\theta;\theta)$.
	Since $t_k^{\sf nc}(\theta)\leq 0,\forall k\in\mathcal{N}$ by definition \eqref{eq: t_k in non-cooperative scenario}, \eqref{eq: quantity k and t}  gives
	\begin{equation}\label{eq: pi_ell times t_m}
		t_k^{\sf nc}(\theta)\leq\pi_k t_{\sf m}^\theta\leq 0,\quad \forall k\in\mathcal{N}.
	\end{equation}
	Then, for each $\theta\neq\theta_1$, we have
	\begin{align}
		\nonumber
		{\Phi}(\pi,\theta)&\triangleq-\inf\limits_{t\in\mathbb{R}} {\phi}(t;\pi,\theta)\geq -{\phi}(t_{\sf m}^\theta;\pi,\theta)\\
		\nonumber
		\overset{\eqref{eq: quantity k and t}}&{=}-{\phi}^{\sf nc}_{k_{\sf m}^\theta}(\pi_{k_{\sf m}^\theta}t_{\sf m}^\theta;\theta)- \sum_{\ell\neq k_{\sf m}^\theta}{\phi}^{\sf nc}_\ell(\pi_\ell t_{\sf m}^\theta;\theta)\\
		\label{eq: rate function comparison 2}
		\overset{\text{(a)}}&{\geq} {\Phi}^{\sf nc}_{k_{\sf m}^\theta}(\theta)\geq\min_{k\in\mathcal{N}}{\Phi}_{k}^{\sf nc}(\theta)
	\end{align}
	where the inequality (a) comes from \eqref{eq: value of LMGF is negative} and \eqref{eq: pi_ell times t_m}:
	\begin{equation}
		{\phi}^{\sf nc}_\ell(\pi_\ell t_{\sf m}^\theta;\theta)\triangleq\int_{0}^{\pi_\ell t_{\sf m}^\theta}\frac{{\Lambda}_\ell(\tau;\theta)}{\tau}d\tau
		\leq0,\quad \forall \ell\neq k_{\sf m}^\theta.
	\end{equation}
	Therefore, the error exponent ${\Phi}(\pi)$ is lower bounded by
	\begin{equation}\label{eq: lower bound of error exponent}
		{\Phi}(\pi)=\min_{\theta\neq\theta_1}{\Phi}(\pi,\theta){\geq}\min_{\theta\neq\theta_1}\min_{k\in\mathcal{N}}{\Phi}_{k}^{\sf nc}(\theta).
	\end{equation}
	The bounds for $\Phi(\pi,\theta)$ and $\Phi(\pi)$ are established using \eqref{eq: rate function comparison}, \eqref{eq: upper bound of error exponent}, \eqref{eq: rate function comparison 2}  and \eqref{eq: lower bound of error exponent}.
	
	\section{Proof of Theorems \ref{theorem: optimal Perron eigenvector} and \ref{theorem: epsilon-optimal Perron eigenvector}}
	\subsection{Proof of Theorem \ref{theorem: optimal Perron eigenvector}}\label{appendix: theorem 3a}
	In this case, we need to prove that $\Pi^\dagger\neq\emptyset$ is a sufficient and  necessary condition for achieving the upper bound $\Phi_{\sum}^{\sf nc}(\theta^\dagger)$, and that the set of optimal Perron eigenvectors is given by $\Pi^\star=\Pi^\dagger$ when $\Phi^\star=\Phi_{\sum}^{\sf nc}(\theta^\dagger)$. First, we show that $\Pi^\dagger\neq\emptyset$ is a sufficient condition for delivering $\Phi^\star=\Phi_{\sum}^{\sf nc}(\theta^\dagger)$.  From Theorem \ref{theorem: bound}, ${\Phi}_{\sum}^{\sf nc}(\theta^\dagger)$ is an upper bound of the error exponent, and thus $\Phi^\star\leq \Phi_{\sum}^{\sf nc}(\theta^\dagger)$. Consider any Perron eigenvector $\pi\in\Pi^\dagger$, by definition of $\Pi_1$ in \eqref{eq: Pi_1-necessary conditions}, we have $\Phi(\pi,\theta^\dagger)\leq \Phi(\pi,\theta)$, $\forall \theta\neq\theta_1$. This means that the error exponent under $\pi$ is determined by hypothesis $\theta^\dagger$, i.e., $\Phi(\pi)=\Phi(\pi,\theta^\dagger)$. According to definition \eqref{eq: uninformative agent}, we have $\Lambda_k(t;\theta^\dagger)\equiv 0$ for all agents $k\in\mathcal{N}^{{U}}(\theta^\dagger)$. This implies that
	the centrality of the $\theta^\dagger$-uninformative agents has no impact on the value of $\Phi(\pi,\theta^\dagger)$. Recalling the definitions of $t_k^{\sf nc}(\theta^\dagger)$ in \eqref{eq: t_k in non-cooperative scenario} and denoting $t_{\sum}^{\sf nc}=\sum_{k=1}^Nt_k^{\sf nc}(\theta^\dagger)$, we have
	\begin{align}\nonumber
		\Phi(\pi,\theta^\dagger)&\triangleq-\inf_{t\in\mathbb{R}} {\phi(t;\pi,\theta^\dagger)}
		=-\inf_{t\in\mathbb{R}}\sum_{k\notin \mathcal{N}^{ U}(\theta^\dagger)} \phi^{\sf nc}_{k}(\pi_k t;\theta^\dagger)\\
		\nonumber
		&\overset{\text{(a)}}{\geq} -\sum_{k\notin \mathcal{N}^{ U}(\theta^\dagger)} \phi^{\sf nc}_{k}(\pi_k\frac{t_{\sum}^{\sf nc}}{\alpha};\theta^\dagger)\\
		\label{eq: optimality of pi}
		\overset{\eqref{eq: Pi_2-case 1}}&{=} -\sum_{k\notin \mathcal{N}^{ U}(\theta^\dagger)}{\phi^{\sf nc}_{k}(t_k^{\sf nc}(\theta^\dagger);\theta^\dagger)}
		=-\phi({t_{\sum}^{\sf nc}};\pi^\dagger,\theta^\dagger)
	\end{align}
	where (a) follows by letting $t={t_{\sum}^{\sf nc}}/{\alpha}$. From \eqref{eq: pi_dagger for the upper bound-case 1}, we obtain $	\Phi(\pi^\dagger,\theta^\dagger)=-\phi({t_{\sum}^{\sf nc}};\pi^\dagger,\theta^\dagger)={\Phi}_{\sum}^{\sf nc}(\theta^\dagger)$. Then, the optimality of $\pi$ for the error exponent maximization problem \eqref{eq: objective function}--\eqref{eq: constraint 2} is established by 
	\begin{equation}\label{eq: sufficient condition}
		\Phi_{\sum}^{\sf nc}(\theta^\dagger)\overset{\eqref{eq: bound of error exponent}}{\geq}\Phi^\star\geq\Phi(\pi)\overset{\eqref{eq: optimal set-case 1}}{=}\Phi(\pi,\theta^\dagger)\overset{\eqref{eq: optimality of pi}}{\geq}\Phi_{\sum}^{\sf nc}(\theta^\dagger).
	\end{equation}
	This implies that if $\Pi^\dagger$ is nonempty, any Perron eigenvector $\pi\in\Pi^\dagger$ will deliver an error exponent $\Phi_{\sum}^{\sf nc}(\theta^\dagger)$. Therefore, the largest error exponent for the optimization problem \eqref{eq: objective function}--\eqref{eq: constraint 2} is $\Pi^\star=\Phi_{\sum}^{\sf nc}(\theta^\dagger)$. 
	
	Next, we show that in this case (i.e., $\Pi^\dagger\neq\emptyset$), the set $\Pi^\star$ of optimal Perron eigenvectors is given by \eqref{eq: optimal set-case 1}. We noted that in \eqref{eq: sufficient condition}, we have shown that any Perron eigenvector $\pi\in\Pi^\dagger$ is an optimal solution to the error exponent maximization problem \eqref{eq: objective function}--\eqref{eq: constraint 2}. Therefore, $\Pi^\dagger$ is contained in the optimal set $\Pi^\star$. That is,
	\begin{equation}\label{eq: sufficient direction}
		\Pi^\dagger\subseteq\Pi^\star.
	\end{equation}
	It remains to establish the condition in the other direction: $\Pi^\star\subseteq\Pi^\dagger$. 
	To this end, we analyze the necessary conditions for an optimal Perron eigenvector. Since $\Pi$ is the set of all feasible solutions to the optimization problem \eqref{eq: objective function}--\eqref{eq: constraint 2}, we have $\Pi^\star\subseteq\Pi$. By the definition of $\theta^\dagger$ in \eqref{eq: theta_dagger}, the upper bound $\Phi^\star=\Phi_{\sum}^{\sf nc}(\theta^\dagger)$ can be attained if and only if the error exponent $\Phi(\pi)$ is determined by hypothesis $\theta^\dagger$. Therefore, the following inequality
	\begin{equation}
		\Phi(\pi,\theta^\dagger)\leq\Phi(\pi,\theta),\quad \forall \theta\neq\theta_1,
	\end{equation}
	needs to be satisfied for any Perron eigenvector $\pi\in\Pi^\star$. Hence, we obtain a necessary condition:
	\begin{equation}\label{eq: necessary 4}
		\Pi^\star\subseteq\Pi_1\triangleq\left\{\pi\in\Pi:\Phi(\pi,\theta^\dagger)\leq\Phi(\pi,\theta), \forall \theta\neq\theta_1\right\}.
	\end{equation}
	We next prove that the following proportional relation among all $\theta^\dagger$-informative agents, as captured by $\pi^\dagger$ in \eqref{eq: candidate Perron eigenvector-case 1}:
	\begin{equation}\label{eq: proportional relation}
		\pi_k t_\ell^{\sf nc}(\theta^\dagger)=\pi_\ell t_k^{\sf nc}(\theta^\dagger),\quad \forall k,\ell\in\mathcal{N}^{{I}}(\theta^\dagger),
	\end{equation}
	must be satisfied for reaching the upper bound $\Phi_{\sum}^{\sf nc}(\theta^\dagger)$. For any Perron eigenvector $\pi$ that violates \eqref{eq: proportional relation}, there must exist two $\theta^\dagger$-informative agents $m$ and $n$ such that
	\begin{equation}
		\pi_m t_n^{\sf nc}(\theta^\dagger)\neq\pi_n t_m^{\sf nc}(\theta^\dagger).
	\end{equation}
	Then, for any $t\neq 0$, the two equations $\pi_m t= t_m^{\sf nc}(\theta^\dagger)$ and $\pi_n t= t_n^{\sf nc}(\theta^\dagger)$ cannot be both satisfied. Under Assumption \ref{asmp: assumption 3}, there is at least one $\theta^\dagger$-informative agent in the network. Recalling the definition of $t_\theta^\star(\pi)$ in  \eqref{eq: t_theta(pi)}, we obtain $t_{\theta^\dagger}^\star(\pi)<0$ for any Perron eigenvector $\pi\in\Pi$ in this case. Hence, at least one of the following two conditions is true: 
	\begin{equation}
		\pi_m t_{\theta^\dagger}^\star(\pi)\neq t_m^{\sf nc}(\theta^\dagger)\text{ or } \pi_n t_{\theta^\dagger}^\star(\pi)\neq t_n^{\sf nc}(\theta^\dagger).
	\end{equation}
	Assume that $\pi_m t_{\theta^\dagger}^\star(\pi)\neq t_m^{\sf nc}(\theta^\dagger)$ is satisfied, then
	\begin{align}
		\nonumber
		\Phi_m^{\sf nc}(\theta^\dagger)\overset{\eqref{eq: t_k for informative agent k}}&{=}- \phi_m^{\sf nc}(t_m^{\sf nc}(\theta^\dagger);\theta^\dagger)\overset{\eqref{eq: error exponent individually}}{=} -\inf_{t\in\mathbb{R}}\phi_m^{\sf nc}(t;\theta^\dagger)\\
		\label{eq: relation inequality}
		\overset{(\textnormal a)}&{>}  -\phi_m^{\sf nc}(\pi_m t_{\theta^\dagger}^\star(\pi);\theta^\dagger),
	\end{align}
	where (a) is due to the strict convexity of $\phi_m^{\sf nc}(t;\theta)$ provided by property iv) in Lemma \ref{lemma:properties}. Since 
	\begin{align}
		\nonumber
		\Phi(\pi,\theta^\dagger)&\triangleq-\phi(t_{\theta^\dagger}^\star(\pi);\pi,\theta^\dagger)\\
		\nonumber
		&=-\sum_{k\neq m} {\phi^{\sf nc}_{k}(\pi_k t_{\theta^\dagger}^\star(\pi);\theta^\dagger)}-{\phi^{\sf nc}_{m}(\pi_m t_{\theta^\dagger}^\star(\pi);\theta^\dagger)}\\
		\nonumber
		\overset{\eqref{eq: error exponent individually}}&{\leq} \sum_{k\neq m}\Phi_k^{\sf nc}(\theta^\dagger)-	\phi_m^{\sf nc}(\pi_m t_{\theta^\dagger}^\star(\pi);\theta^\dagger) \\
		\label{eq: uniqueness of pi_dagger}
		\overset{\eqref{eq: relation inequality}}&{<}\sum_{k\in\mathcal{N}}\Phi_k^{\sf nc}(\theta^\dagger)	\triangleq\Phi_{\sum}^{\sf nc}(\theta^\dagger),
	\end{align}
	we obtain $\Phi(\pi)\leq\Phi(\pi,\theta^\dagger)<\Phi^\star$, which implies that $\pi\notin\Pi^\star$. Therefore, the proportional relation \eqref{eq: proportional relation} among $\theta^\dagger$-informative agents is necessary for the optimality of a Perron eigenvector. Let $\Pi_I$ be the collection of all Perron eigenvectors that satisfy \eqref{eq: proportional relation}:
	\begin{equation}
		\Pi_I=\left\{\pi: \pi_k t_\ell^{\sf nc}(\theta^\dagger)=\pi_\ell t_k^{\sf nc}(\theta^\dagger), \forall k,\ell\in\mathcal{N}^{{I}}(\theta^\dagger)\right\},
	\end{equation}
	then the following condition holds:
	\begin{equation}\label{eq: necessary 2}
		\Pi^\star\subseteq\Pi_I.
	\end{equation}
	In view of the definition of $\pi^\dagger$ in \eqref{eq: candidate Perron eigenvector-case 1}, $\Pi_I$ can be rewritten as  
	\begin{equation}\label{eq: new Pi_I}
		\Pi_I=\left\{\pi: \pi_k =\alpha \pi_k^\dagger,\forall k\in\mathcal{N}^{{I}}(\theta^\dagger),\forall \alpha>0\right\}\triangleq\Pi_2.
	\end{equation}
	Combining the necessary conditions  \eqref{eq: necessary 4} and \eqref{eq: necessary 2} yields
	\begin{equation}\label{eq: necessary condition}
		\Pi^\star\subseteq\left\{\Pi_1\cap\Pi_2\right\}\triangleq\Pi^\dagger.
	\end{equation}
	Together with \eqref{eq: sufficient direction}, \eqref{eq: necessary condition} gives $\Pi^\star=\Pi^\dagger$. This proves that if $\Pi^\dagger\neq\emptyset$, the optimization problem \eqref{eq: objective function}--\eqref{eq: constraint 2} has the solution: $\Phi^\star=\Phi_{\sum}^{\sf nc}(\theta^\dagger)$, and $\Pi^\star=\Pi^\dagger$. Furthermore, if $\Pi^\dagger=\emptyset$, there is no Perron eigenvector $\pi\in\Pi$ that satisfies both necessary conditions \eqref{eq: necessary 4} and \eqref{eq: necessary 2}. Therefore, the upper bound of the error exponent is not achievable for the given learning task. This completes the proof of Theorem \ref{theorem: optimal Perron eigenvector}.
	
	\subsection{Proof of Theorem \ref{theorem: epsilon-optimal Perron eigenvector}}\label{appendix: theorem 3b}
	In this case, we need to prove that if $\Pi_\epsilon^\dagger$ is not empty, any Perron eigenvector $\pi\in\Pi_\epsilon^\dagger$ is $\epsilon$-optimal. In other words, $\Pi_\epsilon^\dagger\neq\emptyset$ is a sufficient condition for deriving an $\epsilon$-optimal Perron eigenvector. This can be established by repeating our analysis in \eqref{eq: optimality of pi} and \eqref{eq: sufficient condition}. 
	
	\section{Proof of Corollary  \ref{Corollary  3}} \label{appendix: corollary 3}
	In this noisy environment, each agent $k$ will receive a noisy private signal at each time instant $i$, which is denoted by $\widehat{\bm{\xi}}_{k,i}$:
	\begin{equation}\label{eq: noisy observations}
		\widehat{\bm{\xi}}_{k,i}=\bm{\xi}_{k,i}+\bm{n}_{k,i}.
	\end{equation}
	With the shift-in-mean Gaussian model \eqref{eq: mean-shift Gaussian model}, the log-likelihood ratio ${\bm{x}}_{k,i}(\theta)$ of the noisy observation $\widehat{\bm{\xi}}_{k,i}$ is given by
	\begin{equation}
		\begin{aligned}
			{\bm{x}}_{k,i}(\theta)&\triangleq \log\frac{L_k(\widehat{\bm{\xi}}_{k,i}|\theta_1)}{L_k(\widehat{\bm{\xi}}_{k,i}|\theta)}\\
			&=\frac{{\sf m}_k(\theta_1)-{\sf m}_k(\theta)}{\sigma_k^2}\Big({\widehat{\bm{\xi}}}_{k,i}-\frac{{\sf m}_k(\theta)+{\sf m}_k(\theta_1)}{2}\Big)\\
			\overset{\eqref{eq: noisy observations}}&{=}\frac{{\sf m}_k(\theta_1)-{\sf m}_k(\theta)}{\sigma_k^2}\Big(\bm{\xi}_{k,i}+\bm{n}_{k,i}-\frac{{\sf m}_k(\theta)+{\sf m}_k(\theta_1)}{2}\Big).
		\end{aligned}
	\end{equation}
	The random variable ${\bm{x}}_{k,i}(\theta)$ follows a Gaussian distribution: 
	\begin{equation}\label{eq: pdf of noisy log-likelihood ratio}
		\begin{aligned}
			{\bm{x}}_{k,i}(\theta)\sim {\sf N}(a_k(\theta),b_k^2(\theta))
		\end{aligned}
	\end{equation}
	with
	\begin{equation}\label{eq: alpha_k and beta_k}
		\begin{cases}
			a_k(\theta)=\frac{({\sf m}_k(\theta_1)-{\sf m}_k(\theta))^2}{2\sigma^2_k},\\
			b_k^2(\theta)=\frac{({\sf m}_k(\theta_1)-{\sf m}_k(\theta))^2}{\sigma_k^2}\left(1+\varepsilon_k\right).
		\end{cases}
	\end{equation}
	It is straightforward to see that $d_k(\theta)=a_k(\theta)\geq 0$. Since the LMGF of a Gaussian random variable $\bm{y}\sim {\sf N}(a,b)$ is given by $at+bt^2/2$, \eqref{eq: pdf of noisy log-likelihood ratio} yields
		\begin{equation}\label{eq: LMGF of shift-in-mean Gaussian model}
			{\Lambda}_k(t;\theta)=a_k(\theta)t+\frac{b_k^2(\theta)}{2}t^2.
		\end{equation}
	From \eqref{eq: alpha_k and beta_k}, we know that $\Lambda_k(t;\theta)\equiv0$ if $a_k(\theta)=0$. This implies $\mathcal{N}^{{C}}(\theta)=\emptyset$ according to definition \eqref{eq: conflicting agent}. Furthermore, the consistency condition \eqref{eq: consistency condition} is not impacted by the Gaussian noises. That is, 
	\begin{equation}\label{eq: mave>0 in Colollary 3}
		{\sf m_{ave}}(\pi,\theta)>0,\quad\forall\theta\neq\theta_1,
	\end{equation}
	for all Perron eigenvector $\pi$. 
	For any wrong hypothesis $\theta\neq\theta_1$, the quantities $t_k^{\sf nc}(\theta)$ and $\Phi_k^{\sf nc}(\theta)$ of each $\theta$-informative agent $k$ now admit an explicit expression:
	\begin{align}\label{eq: noisy t_k star}
		t_k^{\sf nc}(\theta)&=-\frac{2a_k(\theta)}{b_k^2(\theta)}=-\frac{1}{1+\varepsilon_k},\\ \label{eq: noisy Phi ast}
		{\Phi}_k^{\sf nc}(\theta)&=\frac{a_k(\theta)^2}{b_k^2(\theta)}=\frac{\left({\sf m}_k(\theta_1)-{\sf m}_k(\theta)\right)^2}{4\sigma_k^2(1+\varepsilon_k)}.
	\end{align}
	It can be observed from \eqref{eq: noisy t_k star} that $t_k^{\sf nc}(\theta)$  is determined only by the noise level $\varepsilon_k$, and thus it is identical for all hypothesis $\theta\neq\theta_1$. Moreover, according to \eqref{eq: t_k in non-cooperative scenario}, the critical value $t_k^{\sf nc}(\theta)$ associated with a $\theta$-uninformative agent $k$ can be set to any value. Assume that $t_k^{\sf nc}(\theta)=-1/(1+\varepsilon_k)$ (i.e., following the same expression as \eqref{eq: noisy t_k star}) for each $\theta$-uninformative agent $k\in\mathcal{N}^{U}(\theta)$, then from \eqref{eq: candidate Perron eigenvector-case 1}, we obtain
	\begin{align}\label{eq: optimal Gaussian model}
		\pi_k^\dagger&=\frac{\left(1+\varepsilon_k\right)^{-1}}{\sum_{\ell=1}^N\left(1+\varepsilon_\ell\right)^{-1}},\\
\label{eq: optimal error exponent Gaussian model}
		{\Phi}^{\sf nc}(\theta^\dagger)&=\min_{\theta\neq\theta_1}\sum_{k=1}^N\frac{\left({\sf m}_k(\theta_1)-{\sf m}_k(\theta)\right)^2}{4\sigma_k^2(1+\varepsilon_k)}.
	\end{align}
	It is noted that based on above definitions, the critical value $t_k^{\sf nc}(\theta)$ does not depend on the associated hypothesis $\theta$. This means that the Perron eigenvector $\pi^\dagger$ is feasible to reach the upper bound $\Phi_{\sum}^{\sf nc}(\theta)$ corresponding to any wrong hypothesis $\theta$. That is, by repeating the same arguments used in \eqref{eq: pi_dagger for the upper bound-case 1}, we have
	\begin{equation}\label{eq: achieves all upper bounds in Corollary 3}
		\Phi(\pi^\dagger,\theta)=\Phi_{\sum}^{\sf nc}(\theta),\quad \forall \theta\neq\theta_1.
	\end{equation}
	By definition \eqref{eq: theta_dagger}, we obtain $\Phi(\pi^\dagger,\theta^\dagger)<\Phi(\pi^\dagger,\theta)$, $\forall\theta\neq\theta_1$, and thus $\pi^\dagger\in\Pi_1$. Since $\pi^\dagger$ is always an element of set $\Pi_2$, we have $\Pi^\dagger\neq\emptyset$ and $\pi^\dagger\in\Pi^\dagger$. Therefore, $\pi^\star=\pi^\dagger$ is an optimal solution to the the error exponent maximization problem \eqref{eq: objective function}--\eqref{eq: constraint 2}, with $\Phi^\star=\Phi_{\sum}^{\sf nc}(\theta^\dagger)$ given by \eqref{eq:  optimal error exponent Gaussian model}. 
	
	\section{Proof of Theorem \ref{theorem: adaptation time}}\label{appendix: adaptation time}
	Under the uniform initial belief condition (i.e., $\bm{\lambda}_{k,0}(\theta)=0$ for all $k\in\mathcal{N}$ and $\theta\in\Theta$), we can obtain from \eqref{eq: ASL expanded recursion} that
	\begin{align}
		\nonumber
		\bm{\lambda}_{k,i}(\theta)&=\delta\sum_{m=0}^{i-1}\sum_{\ell=1}^N(1-\delta)^m[A^{m+1}]_{\ell k}\bm{x}_{\ell,i-m}(\theta)\\
		&=\widehat{\bm{\lambda}}_{k,i}(\theta) \overset{\text{d}}{=}\widetilde{\bm{\lambda}}_{k,i}(\theta)
	\end{align}
	where $\widehat{\bm{\lambda}}_{k,i}(\theta)$ and $\widetilde{\bm{\lambda}}_{k,i}(\theta)$ are respectively defined in \eqref{eq: lambda_hat} and \eqref{eq: lambda_tilde}, and the equality in distribution is due to \eqref{eq: equality in distribution}. According to \eqref{eq: instantaneous error probability definition}, the instantaneous error probability of agent $k$ at time instant $i$ associated with $\theta\neq \theta_1$ is given by
	\begin{align}
		\nonumber
		\mathbb{P}\left[\bm{\lambda}_{k,i}(\theta)\leq 0\right]
		&=\mathbb{P}\left[\widetilde{\bm{\lambda}}_{k,i}(\theta)\leq 0\right]\\\nonumber
		\overset{\text{(a)}}&{=}\;\mathbb{P}\left[\frac{t_\theta^\star(\pi)}{\delta}\widetilde{\bm{\lambda}}_{k,i}(\theta)\geq0\right]\\
		\label{eq: error probability theta}
		\overset{\text{(b)}}&{\leq}\mathbb{E}\left[\exp(\frac{t_\theta^\star(\pi)}{\delta}\widetilde{\bm{\lambda}}_{k,i}(\theta))\right],
	\end{align}
	where (a) is due to $t_\theta^\star(\pi)<0$ from  \eqref{eq: t_theta(pi)} and (b) follows by applying Markov's inequality. Denoting the LMGF of $\widetilde{\bm{\lambda}}_{k,i}(\theta)$ as 
	\begin{equation}
		\Lambda_{k,i}(t;\theta)\triangleq \log\mathbb{E}\left[e^{t\widetilde{\bm{\lambda}}_{k,i}(\theta)}\right],
	\end{equation}
	the independence assumption of the local observations over time and space gives
	\begin{equation}\label{eq: Lambda_{k,i}}
		\Lambda_{k,i}(t;\theta)=\sum_{m=0}^{i-1}\sum_{\ell=1}^N \Lambda_{\ell}\left(\delta(1-\delta)^m[A^{m+1}]_{\ell k}t;\theta\right).
	\end{equation}
	Let $b_i=[A^{i+1}]_{\ell k}$ and using  Eqs. (85) and (86) from \cite{Matta:2016cb}, we have
	\begin{align}
		\nonumber
		\Lambda_{k,i}(t;\theta)\overset{\eqref{eq: Lambda_{k,i}}}&{=}\sum_{\ell=1}^N\sum_{m=0}^{i-1} \Lambda_{\ell}\left(\delta(1-\delta)^m[A^{m+1}]_{\ell k}t;\theta\right)\\
		\nonumber
		&=\sum_{\ell=1}^N \frac{1}{\delta}\left[\int_{\delta(1-\delta)^i t}^{\delta  t}\frac{\Lambda_\ell(\pi_\ell\tau;\theta)}{\tau}d\tau+\mathcal{O}(\delta)\right]\\
		\label{eq: LMGF2}
		&=\frac{1}{\delta}\left[\int_{\delta(1-\delta)^i t}^{\delta  t}\frac{\Lambda_{\sf ave}(\tau;\pi,\theta)}{\tau}d\tau+\mathcal{O}(\delta)\right].
	\end{align}
	Let $t=t_\theta^\star(\pi)/\delta$, we obtain the following expression:
	\begin{equation}\label{eq: LMGF 3}
		\begin{aligned}
			\Lambda_{k,i}\left(\frac{t_\theta^\star(\pi)}{\delta};\theta\right)=\frac{1}{\delta}\Bigg[\int_{(1-\delta)^{i} t_\theta^\star(\pi)}^{t^\star_\theta(\pi)}\frac{\Lambda_{\sf ave}(\tau;\pi,\theta)}{\tau}d\tau+\mathcal{O}(\delta)\Bigg].
		\end{aligned}
	\end{equation}
	Applying the parabolic approximation \eqref{eq: second-order polynomial} in the low SNR regime, we have
	\begin{equation}
		\frac{\Lambda_{\sf ave}(t;\pi,\theta)}{t}\approx \kappa_1(\pi,\theta)+\frac{\kappa_2(\pi,\theta)}{2}t	
	\end{equation}
	for $t\in[t^\star_\theta(\pi),0]$. Therefore, we can approximate $t^\star_\theta(\pi)$ and $\Phi(\pi,\theta)$ as
	\begin{align}
		\widehat{t^\star_\theta}(\pi)&\triangleq-\frac{2\kappa_1(\pi,\theta)}{\kappa_2(\pi,\theta)}=-\frac{2{\sf m_{ave}}(\pi,\theta)}{{\sf c_{ave}(\pi,\theta)}}, \label{eq: approximated t}\\
		\widehat{\Phi}(\pi,\theta)&\triangleq-\frac{\kappa_1(\pi,\theta)^2}{\kappa_2(\pi,\theta)}=-\frac{{\sf m}_{\sf ave}(\pi,\theta)^2}{{\sf c_{ave}(\pi,\theta)}}. \label{eq: approximated phi}
	\end{align}
	\noindent With the expressions of $\widehat{t^\star_\theta}(\pi)
	$ and $\widehat{\Phi}(\pi,\theta)$, an approximation for \eqref{eq: LMGF 3} is obtained: 
	\begin{align}
		\nonumber
		&\quad\;\Lambda_{k,i}\left(\frac{t_\theta^\star(\pi)}{\delta};\theta\right)\\
		\nonumber
		&\approx  \frac{1}{\delta}\left[\int_{(1-\delta)^{i} \widehat{t_\theta^\star}(\pi)}^{\widehat{t^\star_\theta}(\pi)}\left\{ \kappa_1(\pi,\theta)+\frac{\kappa_2(\pi,\theta)}{2}\tau \right\} d\tau+\mathcal{O}(\delta)\right]\\
		\nonumber
		&= -\frac{1}{\delta}\left\{\left[1-(1-\delta)^{i}\right]^2\widehat{\Phi}(\pi,\theta)+\mathcal{O}(\delta)\right\}\\
		\label{eq: LMGF2 approximation}
		&\approx -\frac{1}{\delta}\Bigg\{\underbrace{\left[1-(1-\delta)^{i}\right]^2}_{\text{denoted as }\eta(\delta,i)}\Phi(\pi,\theta)+\mathcal{O}(\delta)\Bigg\}.
	\end{align}
	Combining \eqref{eq: error probability theta} and \eqref{eq: LMGF2 approximation}, the instantaneous error probability of agent $k$ at time instant $i$ associated with $\theta\neq\theta_1$ is upper bounded by
	\begin{align}
		\nonumber
		\mathbb{P}\left[\bm{\lambda}_{k,i}(\theta)\leq 0\right]
		&\leq  \exp\left[\Lambda_{k,i}\left(\frac{t_\theta^\star(\pi)}{\delta};\theta\right)\right]\\\label{eq: instantaneous error probability approximation}
		&\approx \exp\left\{-\frac{1}{\delta}\big[\eta(\delta,i)\Phi(\pi,\theta)+\mathcal{O}(\delta)\big]\right\}.
	\end{align}
	For a given small $\omega>0$ in \eqref{eq: inst error pb ub}, we let $\eta(\delta,i)\geq1-\omega$ and then obtain
	\begin{equation}\label{eq: i inequality}
		i\geq \dfrac{\log(1-\sqrt{1-\omega})}{\log(1-\delta)}.
	\end{equation}
	The right-hand side of \eqref{eq: i inequality} is exactly the definition of ${\sf T_{adap}}(\omega)$ given in Theorem \ref{theorem: adaptation time}.
	Using Boole's inequality, the instantaneous error probability $p_{k,i}$ satisfies
	\begin{align}
		\nonumber
		p_{k,i} &=\mathbb{P}\Bigg[\bigcup_{\theta\neq\theta_1}{\bm{\lambda}}_{k,i}(\theta)
		\leq 0\Bigg]\\
		\nonumber
		&\leq\sum_{\theta\neq\theta_1}\mathbb{P}\left[{\bm{\lambda}}_{k,i}(\theta)\leq 0\right]\\
		\nonumber
		\overset{\eqref{eq: instantaneous error probability approximation}}&{\approx} \sum_{\theta\neq\theta_1}\exp\left\{-\frac{1}{\delta}\left[\eta(\delta,i)\Phi(\pi,\theta)+\mathcal{O}(\delta)\right]\right\}\\
		\nonumber
		\overset{\eqref{eq: i inequality}}&{\leq} \sum_{\theta\neq\theta_1}{e^{-\frac{1}{\delta}\big[(1-\omega)\Phi(\pi,\theta)+\mathcal{O}(\delta)\big]}}\\
		&\doteq e^{-\frac{1}{\delta}\big[(1-\omega)\Phi(\pi)+\mathcal{O}(\delta)\big]}
	\end{align}
	for $i\geq {\sf T_{adap}}(\omega)$, which shows that the condition in \eqref{eq: inst error pb ub} is satisfied. Hence, ${\sf T_{adap}}(\omega)$ provides a reasonable estimate for the adaptation time of the ASL strategy in the low SNR regime. 
	
\bibliographystyle{IEEEtran}
\bibliography{Ref}

\end{document}